\newcommand{\nkd}[3]{[\![#1,#2,#3]\!]}
\newcommand{\col}[1]{\mathrm{col}({#1})}
\newcommand{\ketbra}[2]{\mathinner{|{#1}\rangle\langle{#2}|}}
\DeclareMathOperator{\argmax}{arg\,max}
\DeclareMathOperator{\supp}{supp}
\DeclareMathOperator{\rank}{rank}
\DeclareMathOperator{\St}{St}
\newtheorem{theorem}{Theorem}
\newtheorem{lemma}[theorem]{Lemma}
\theoremstyle{definition}
\newtheorem{example}[theorem]{Example}
\begin{document}

\title{Morphing quantum codes}

\author{Michael Vasmer}
\affiliation{Perimeter Institute for Theoretical Physics, Waterloo, ON N2L 2Y5, Canada}
\affiliation{Institute for Quantum Computing, University of Waterloo, Waterloo, ON N2L 3G1, Canada}

\author{Aleksander Kubica}
\affiliation{Perimeter Institute for Theoretical Physics, Waterloo, ON N2L 2Y5, Canada}
\affiliation{Institute for Quantum Computing, University of Waterloo, Waterloo, ON N2L 3G1, Canada}
\affiliation{AWS Center for Quantum Computing, Pasadena, CA 91125, USA}
\affiliation{California Institute of Technology, Pasadena, CA 91125, USA}

\date{\today}

\begin{abstract}
    We introduce a morphing procedure that can be used to generate new quantum codes from existing quantum codes.
	In particular, we morph the 15-qubit Reed-Muller code to obtain a \nkd{10}{1}{2} code that is the smallest known stabilizer code with a fault-tolerant logical $T$ gate. 
	In addition, we construct a family of hybrid color-toric codes by morphing the color code. 
	Our code family inherits the fault-tolerant gates of the original color code, implemented via constant-depth local unitaries.
	As a special case of this construction, we obtain toric codes with fault-tolerant multi-qubit control-$Z$ gates. 
	We also provide an efficient decoding algorithm for hybrid color-toric codes in two dimensions, and numerically benchmark its performance for phase-flip noise.
	We expect that morphing may also be a useful technique for modifying other code families such as triorthogonal codes.  
\end{abstract}

\maketitle


The techniques of quantum error correction seem to be indispensable in the quest to build a large-scale quantum computer. 
Not only do we need to protect fragile qubits from environmental noise, but we also need to perform quantum circuits fault-tolerantly.
The celebrated threshold theorem~\cite{aharonov2008,knill1998a,kitaev1997} proves that scalable fault-tolerant quantum computation is possible in principle, and moreover fault-tolerance can be achieved with constant overhead for particular families of quantum error-correcting codes~\cite{gottesman2014,fawzi2018a}. 
However this is not the end of the story, as these asymptotic results can hide large constant factors and the requirements of quantum error correction~\cite{litinski2019,fowler2019,chamberland2020b,kim2021,beverland2021a} are still beyond the capabilities of today's hardware~\cite{egan2021,andersen2020,chen2021,chen2021a,erhard2021}. 
Therefore, it is imperative to develop new quantum codes that are more efficient, more resilient to noise, and more tailored to hardware. 

We introduce a theoretical tool for modifying quantum codes, called morphing, that allows us to systematically construct new codes from existing codes.
Morphing does not change the number of logical qubits encoded into a code, however it changes other properties of a code including the number of physical qubits, and may also change the code distance, the stabilizer weights, and the implementation of logical gates. 
Consequently, morphing enables us to trade off different code parameters against each other in order to satisfy the constraints of a particular hardware platform, e.g., the number of physical qubits can be decreased at the cost of complicating the implementation of some logical gates.

We apply morphing to the 15-qubit Reed-Muller code thereby constructing a \nkd{10}{1}{2} code with a fault-tolerant $T$ gate.
To our knowledge, this is the smallest known stabilizer code with a fault-tolerant implementation of this gate.
We investigate the performance of the \nkd{10}{1}{2} code in magic state distillation (MSD), finding that it may lead to advantages over previous MSD protocols based on small codes~\cite{bravyi2005,meier2012}.
Our \nkd{10}{1}{2} protocol is the first example of an MSD protocol mixing different input magic states, and we expect that our approach may be used to improve other MSD protocols.

By morphing the color code~\cite{bombin2006,bombin2007a,kubica2018a}, we construct a family of hybrid color-toric (HCT) codes, which inherit the fault-tolerant logical gates of the color code. 
Starting from an initial color code, we can construct HCT codes with different stabilizer weights, qubit connectivity, and logical gate implementations, allowing us to construct codes tailored to a particular hardware platform or fault-tolerant protocol.
In three or more dimensions, our HCT code family includes codes with fault-tolerant non-Clifford gates.  
In particular, our family subsumes previous examples of toric codes with transversal non-Clifford gates~\cite{kubica2015a,vasmer2019,jochym-oconnor2021}.

Lastly, we study the problem of efficient decoding for the family of the HCT codes.
We propose a decoding algorithm that connects two distinct decoders: the minimum-weight perfect matching (MWPM) toric code decoder~\cite{dennis2002,fowler2015} and the color code restriction decoder~\cite{kubica2019}.
We benchmark our algorithm for two-dimensional (2D) HCT codes subject to phase-flip noise.
Our decoder achieves good performance across the entire family, and recovers previous results in the toric code and color code limits.

The remainder of this manuscript is structured as follows.
We begin by introducing morphing and discussing a simple example in \cref{sec:morphing}.
Then, in \cref{sec:msd}, we present our first application: morphing the 15-qubit Reed-Muller code. 
Next, we apply morphing to color codes in \cref{sec:morph_color} to produce a family of HCT codes.
We examine the decoding problem for HCT codes in \cref{sec:decoding}, where we propose and benchmark a decoder for 2D HCT codes.
Finally, in \cref{sec:conc} we summarize our results and suggest future applications of morphing. 

\section{Morphing quantum codes \label{sec:morphing}}

\begin{figure*}
	\centering
	\subfloat[]{
		\centering
		\includegraphics[width=0.21\linewidth,valign=b]{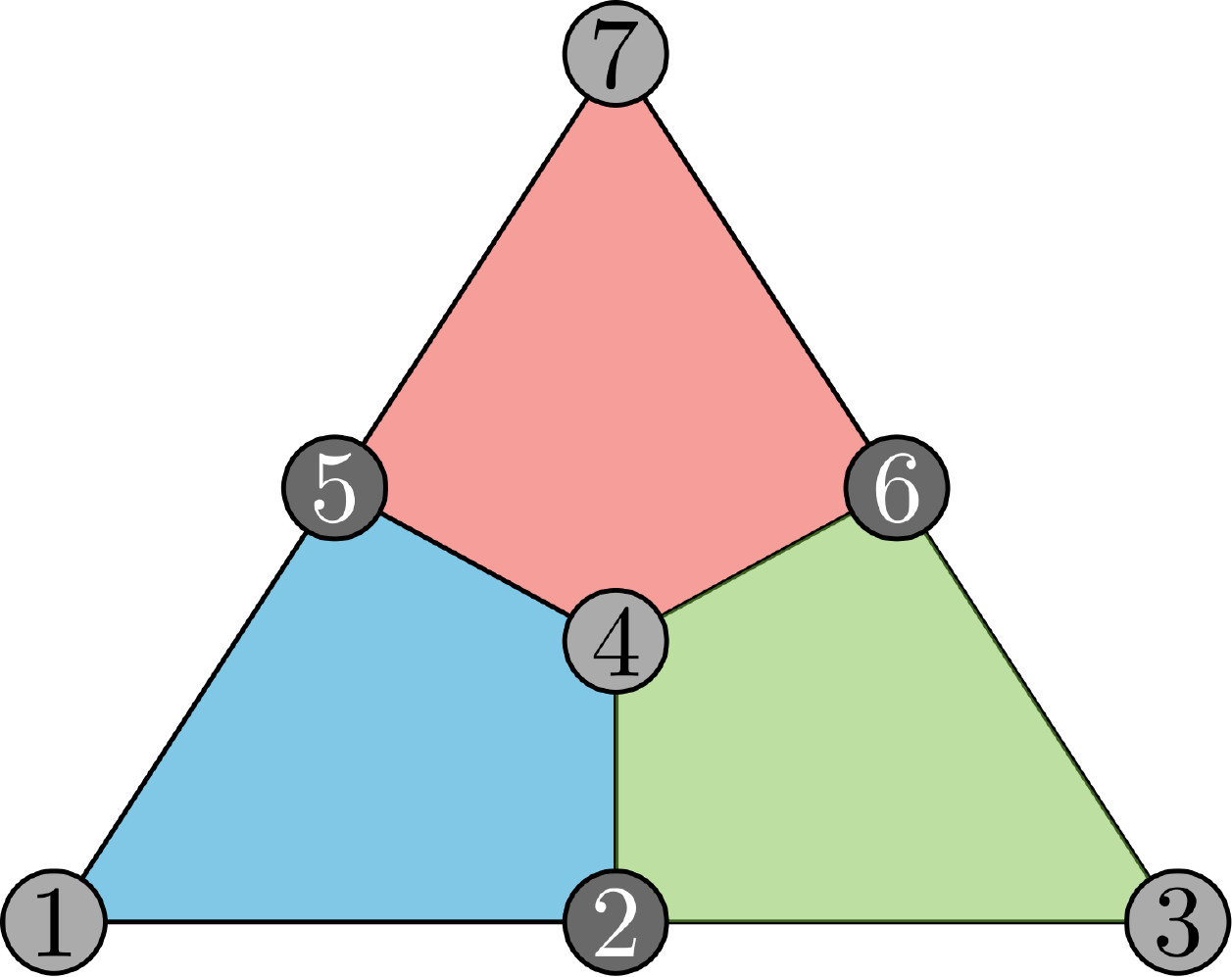}
		\label{sfig:steane}
	}
	\quad
	\subfloat[]{
		\centering
		\includegraphics[width=0.21\linewidth,valign=b]{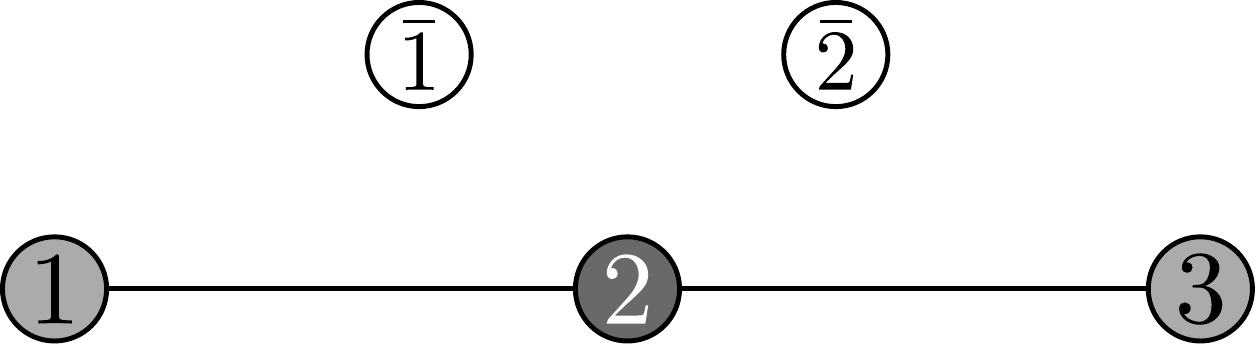}
		\label{sfig:morph_steane}
	}
	\quad
	\subfloat[]{
		\centering
		\includegraphics[width=0.21\linewidth,valign=b]{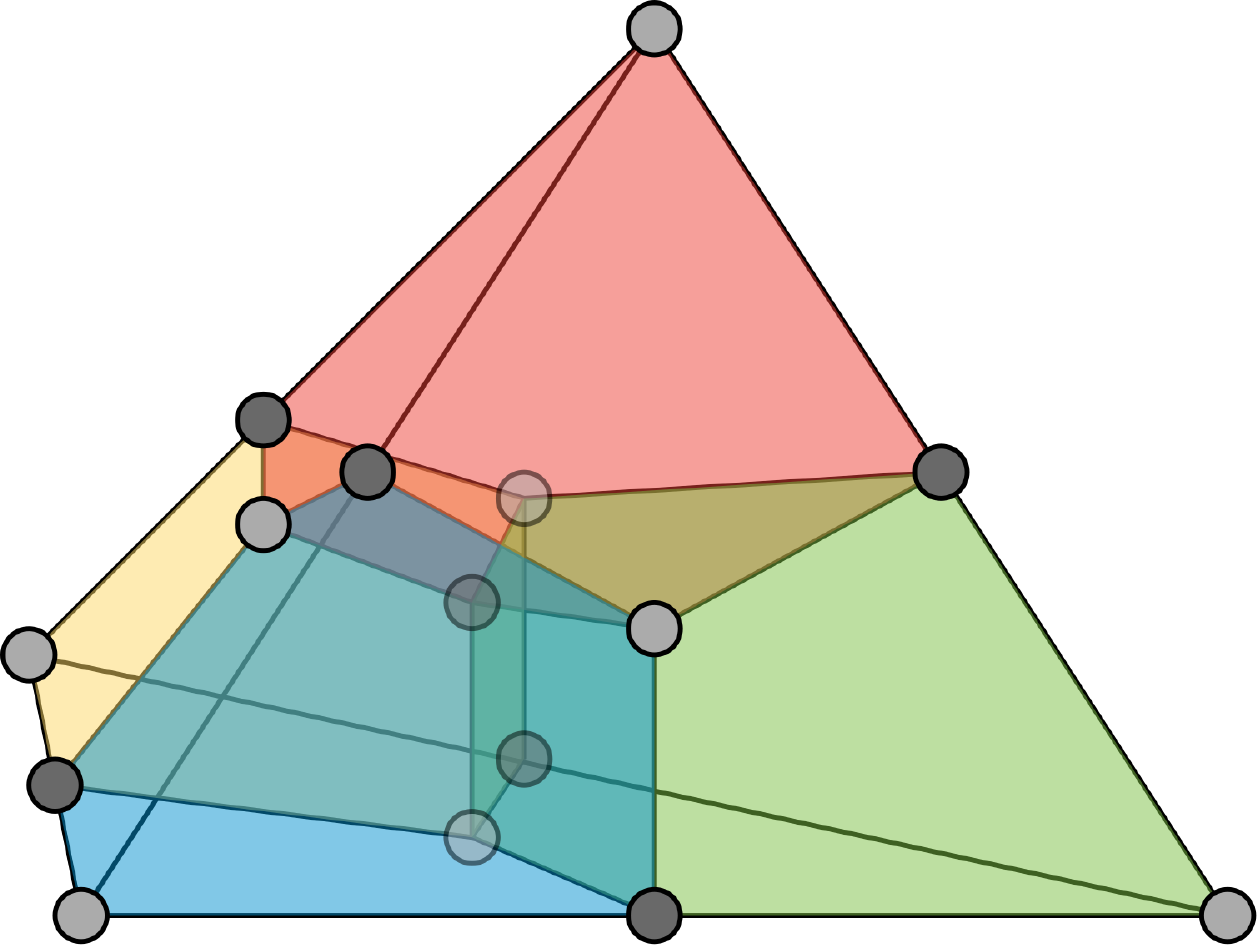}
		\label{sfig:15qrm}
	}
	\quad
	\subfloat[]{
		\centering
		\includegraphics[width=0.21\linewidth,valign=b]{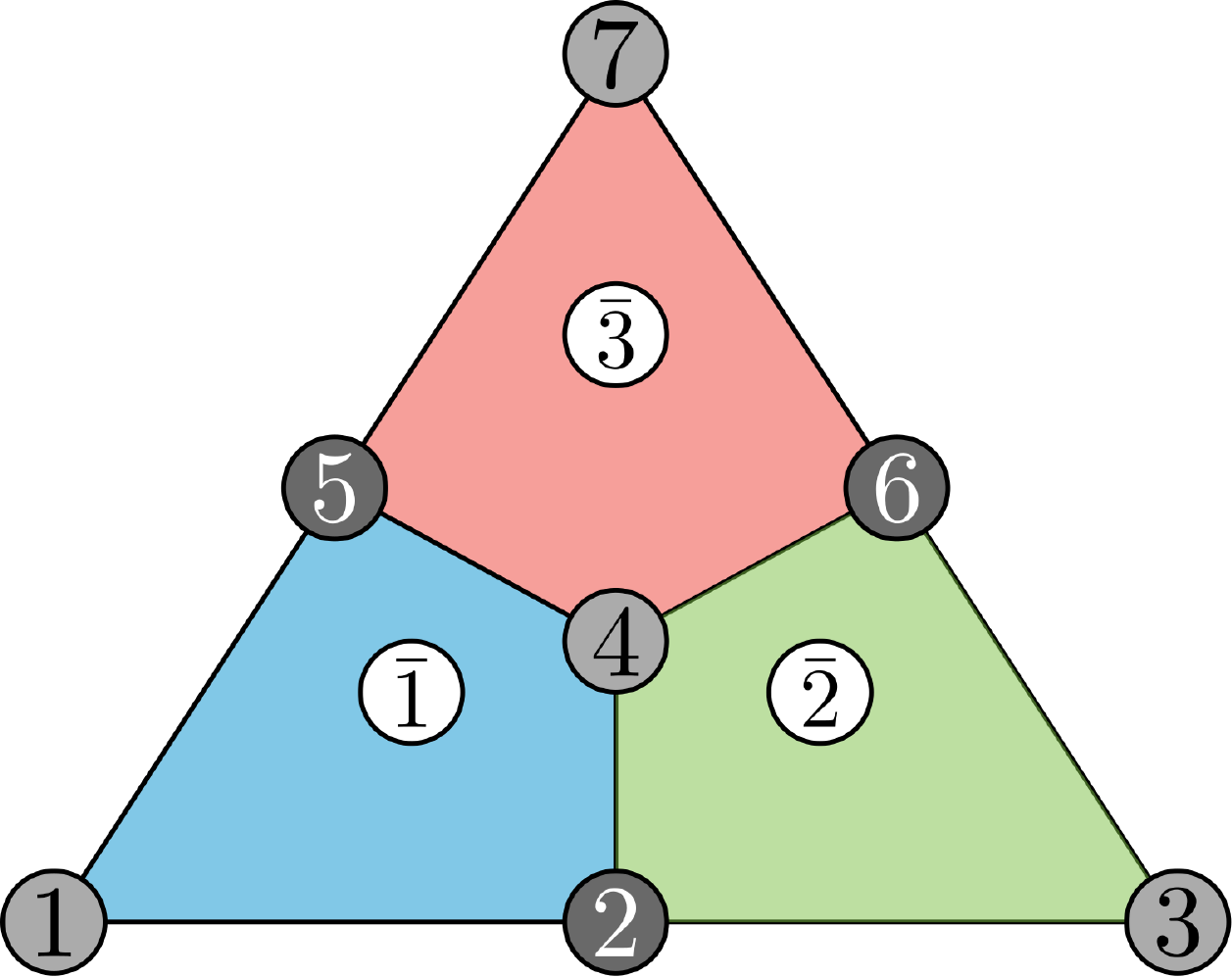}
		\label{sfig:morph_15qrm}
	}
	\caption{
		(a) 
		Steane's code. Qubits are on vertices; $X$- and $Z$-type stabilizers are on faces.
		(b)
		A \nkd{5}{1}{2} code obtained by morphing (a) with $R$ comprising the qubits in the red face, where the corresponding child code is a \nkd{4}{2}{2} code with logical qubits $\bar 1$ and $\bar 2$.
		The stabilizer group of the \nkd{5}{1}{2} code is $\langle X_1 X_2 X_{\bar 1}, X_2 X_3 X_{\bar 2}, Z_1 Z_2 Z_{\bar 2}, Z_2 Z_3 Z_{\bar 1}\rangle$.
		To implement $\overline S$ we apply $S^{\pm 1}$ to the light/dark grey qubits and $CZ$ to the white qubits; $\overline H=H^{\otimes 5}SWAP_{\bar 1 \bar 2}$.
		(c)
		The \nkd{15}{1}{3} Reed-Muller code. 
		Qubits are on vertices; $X$- and $Z$-type stabilizers are on cells and faces, respectively. 
		(d)
		A \nkd{10}{1}{2} code obtained by morphing (c) with $R$ comprising the qubits in the yellow cell, where the corresponding child code is an \nkd{8}{3}{2} code with logical qubits $\bar 1$, $\bar 2$ and $\bar 3$.
		The stabilizer group of the \nkd{10}{1}{2} code is generated by $X_1 X_2 X_4 X_5 X_{\bar 1}$, $Z_1 Z_2 Z_4 Z_5$, $Z_4 Z_5 Z_{\bar 2}$, and the analogous operators for the red and green faces. 
		To implement $\overline T$ we apply $T^{\pm 1}$ to the light/dark grey qubits and $CCZ$ to the white qubits.
	}
	\label{fig:morph_rm}
\end{figure*}

Our key observation is that for a subset of the physical qubits of a stabilizer code~\cite{gottesman1997}, the stabilizer group induces a (smaller) stabilizer code on the qubits in the subset. 
Consider some stabilizer code $\mathcal C$ with the stabilizer group $\mathcal S$. 
Let $Q$ denote the set of physical qubits of $\mathcal C$ and let $R \subseteq Q$.
We define $\mathcal S(R)$ to be the group generated by all the stabilizer generators of $\mathcal S$ that are fully supported within $R$, i.e.,
\begin{equation}
	\mathcal S(R) = \{ P \in \mathcal S : \supp (P) \subseteq R \},
	\label{eq:stab_subgroup}
\end{equation}
Since $\mathcal S(R)$ is a subgroup of $\mathcal S$, it is itself an abelian subgroup of the Pauli group not containing $-I$, and therefore defines a stabilizer code, $\mathcal C(R)$, with physical qubits $R$.
We refer to $\mathcal C$ as the parent code and $\mathcal C(R)$ as the child code.

As a stabilizer code, $\mathcal C(R)$ has a Clifford encoding circuit that can be implemented with depth $O(|R|)$ using a simple gate set of $\{ H, S, CNOT \}$~\cite{gottesman1997}, where $|R|$ is the size of $R$.
Suppose that we implement the inverse of this encoding circuit.
Then the logical qubits of $\mathcal C(R)$ will still be entangled with the rest of the parent code, but the remaining qubits of $\mathcal C(R)$ will be in a fixed (product) state and are free to be discarded~\footnote{We observe that morphing is similar to the disentangling step of entanglement renormalization~\cite{vidal2007}.} (assuming we were originally in the codespace of $\mathcal C$). 
In effect, we have produced a new code $\mathcal C_{\setminus R}$ whose physical qubits are the union of $Q \setminus R$ and the logical qubits of $\mathcal C(R)$.
We call this procedure morphing, we call the inverse of the encoding circuit a morphing circuit and we call $\mathcal C_{\setminus R}$ the morphed code.
Note that we do not necessarily anticipate implementing morphing circuits in a real device, rather we use morphing as a systematic procedure to generate new quantum codes. 

The morphed code $\mathcal C_{\setminus R}$ clearly has fewer physical qubits than $\mathcal C$. 
But crucially, the number of logical qubits in both codes is the same, as morphing decreases the number of physical qubits by the same amount as the number of independent stabilizer generators.
However, the implementation of a subset of the logical operators of $\mathcal C_{\setminus R}$ may be different to the implementation of the corresponding operators in $\mathcal C$.
As $\mathcal S(R) \subseteq \mathcal S$, the logical operators of the parent code act as logical operators in the child code (when restricted to $R$). 
Therefore, given the logical Pauli operators of $\mathcal C(R)$, we can compute the transformation of the logical operators of $\mathcal C$ under morphing.
This highlights an important degree of freedom inherent in morphing: the choice of the logical Pauli operators of the child code. 
All choices are equally valid, but different choices can lead to different implementations of the logical operators of $\mathcal C_{\setminus R}$; see \cref{app:non_cliff_ddim}.

We note that the code distance of $\mathcal C_{\setminus R}$ may be smaller than that of $\mathcal C$ and by the same reasoning the weight of some stabilizers of $\mathcal C_{\setminus R}$ may be smaller than the corresponding stabilizers in $\mathcal C$. 
Less obviously, morphing can simplify the decoding problem of the parent code; see \cref{sec:decoding}.

In general, morphing gives us the ability to modify quantum codes, trading off various code properties against each other in order to construct a code that is suited to the constraints of a particular quantum computing platform. 
We now make the above discussion concrete by considering a simple example which nevertheless demonstrates many of the changes induced by morphing.

\begin{example}
	Consider the Steane code~\cite{steane1996a}, which has parameters \nkd{7}{1}{3} and a transversal implementation of the Clifford group.
	We can pick a particular subset, $R$, of the physical qubits of the Steane code such that $\mathcal C(R)$ is the well-known \nkd{4}{2}{2} code~\cite{knill2005}; see \cref{sfig:steane}.
	Morphing the Steane code produces a \nkd{5}{1}{2} code with a fault-tolerant implementation of the Clifford group; see \cref{sfig:morph_steane}.
	\label{ex:steane}
\end{example}

\section{Magic state distillation with the morphed Reed-Muller code \label{sec:msd}}

\begin{figure*}
	\centering
	\subfloat[]{
		\centering
		\includegraphics[valign=b]{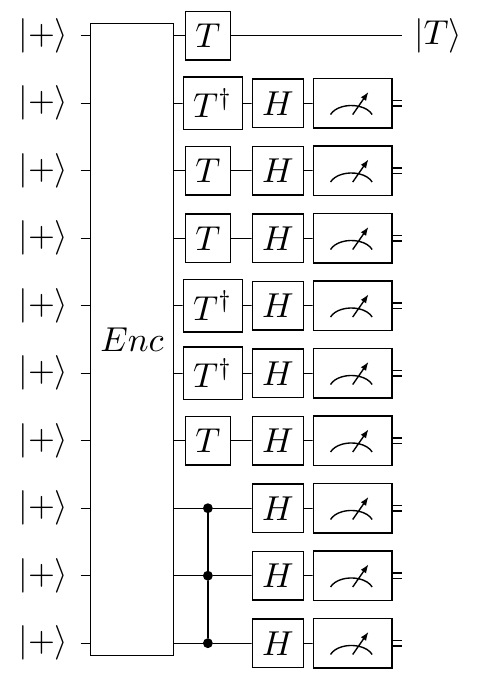}
	}
	\subfloat[]{
		\centering
		\begin{minipage}[b]{0.6\linewidth}
		\includegraphics[]{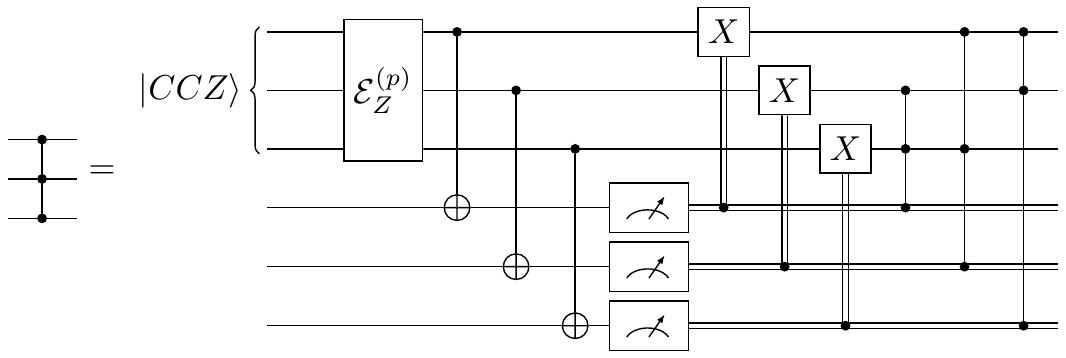}
		\\
		\vspace{0.75cm}
		\includegraphics[valign=b]{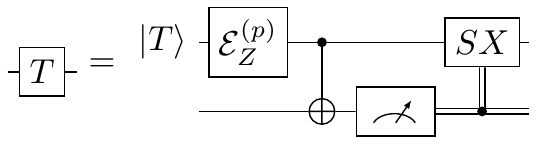}
		\end{minipage}
	}
	\caption{
		(a) The \nkd{10}{1}{2} MSD protocol, consisting of the encoding circuit, noisy application of the logical $T$ gate, and measurement. 
		The protocol is successful if the measurements give a trivial error syndrome.
		The labelling of the qubits is the same as in \cref{sfig:morph_15qrm}.
		(b) The state injection circuits for implementing $T$ and $CCZ$ gates using input magic states.
		We assume that all components of the circuit are ideal except for the preparation of the magic states, which we model as ideal preparation followed by the noise channels implicit in \cref{eq:rho_diag,eq:tau_diag}.
	}
	\label{fig:10_1_2_protocol}
\end{figure*}

In an MSD protocol~\cite{knill2004,bravyi2005}, one starts with some number of noisy magic states and distils these states into fewer magic states of higher quality.
Magic states are used to implement challenging non-Clifford gates in fault-tolerant quantum computing architectures and hence finding efficient MSD protocols is an ongoing area of research~\cite{bravyi2012,campbell2017a,hastings2018,haah2017,haah2018a}.
In this section, we use our morphing procedure to construct a family of codes with applications to MSD.

\subsection{Morphing quantum Reed-Muller codes \label{subsec:qrm}}

Quantum Reed-Muller (QRM) codes~\cite{steane1999,anderson2014} are a generalization of classical Reed-Muller codes~\cite{macwilliams1977}.
We apply our morphing procedure to QRM codes of distance three, where the codes in this subfamily have parameters $\nkd{2^{d+1}-1}{1}{3}$ for $d\geq 2$.
The $d=2$ code is actually the Steane code discussed in \cref{ex:steane}. 
In this section, we concentrate on the $d=3$ code, often called the 15-qubit Reed-Muller code, which has a transversal logical $T$ gate.
We defer the discussion for general $d$ until \cref{app:qrm}.

\Cref{sfig:15qrm,sfig:morph_15qrm} illustrate the \nkd{15}{1}{3} Reed-Muller code and its morphed counterpart, a \nkd{10}{1}{2} code. 
One can select a subset, $R$, of the physical qubits of the \nkd{15}{1}{3} code such that $\mathcal S(R)$ defines an \nkd{8}{3}{2} code (the so-called `smallest interesting color code'~\cite{kubica2015a,campbellblog}). 
Morphing the \nkd{15}{1}{3} code with this subset of qubits gives a \nkd{10}{1}{2} code that inherits the logical gates of the \nkd{15}{1}{3} code and therefore has a logical $T$ gate.
This gate is not transversal (with respect to a single-qubit partition) but it is fault-tolerant as any single fault during the implementation of the gate can only lead to a detectable error.

\subsection{\texorpdfstring{$\nkd{10}{1}{2}$}{[[10,1,2]]} MSD protocol \label{subsec:msd_protocol}}

Here we compare the MSD performance of the \nkd{15}{1}{3} code with its morphed counterpart, the \nkd{10}{1}{2} code. 
We denote the relevant magic states by 
\begin{equation}
\begin{split}
	&\ket {T} = T \ket +, \\
	&\ket {CCZ} = CCZ \ket {+++}.
	\label{eq:magic-states}
\end{split}
\end{equation}
In the 15-to-1 protocol~\cite{bravyi2005}, one distils fifteen $\ket{T}$ states into one higher-quality $\ket{T}$ state using the \nkd{15}{1}{3} code.
However, in the 10-to-1 protocol, we instead distill seven $\ket{T}$ and one $\ket{CCZ}$ states into one higher-quality $\ket{T}$ state using the \nkd{10}{1}{2} code; see \cref{fig:10_1_2_protocol}. 
To our knowledge, our 10-to-1 protocol is the first example of an MSD protocol with more than one type of magic state as input. 

We proceed with the standard analysis~\cite{bravyi2005,bravyi2012,paetznick2013a,chamberland2020b} of the 10-to-1 protocol.
We assume that the only noise in the protocol comes from the input magic states; see \cref{fig:10_1_2_protocol}.
This assumption is justified by the fact that the remaining parts of the protocol are Clifford circuits, preparation of Pauli eigenstates, and single-qubit measurements. 
It is usually straightforward to make these operations fault-tolerant using quantum error correcting codes.

\begin{figure*}
	\subfloat[]{
		\centering
		\includegraphics[width=0.425\linewidth,valign=t]{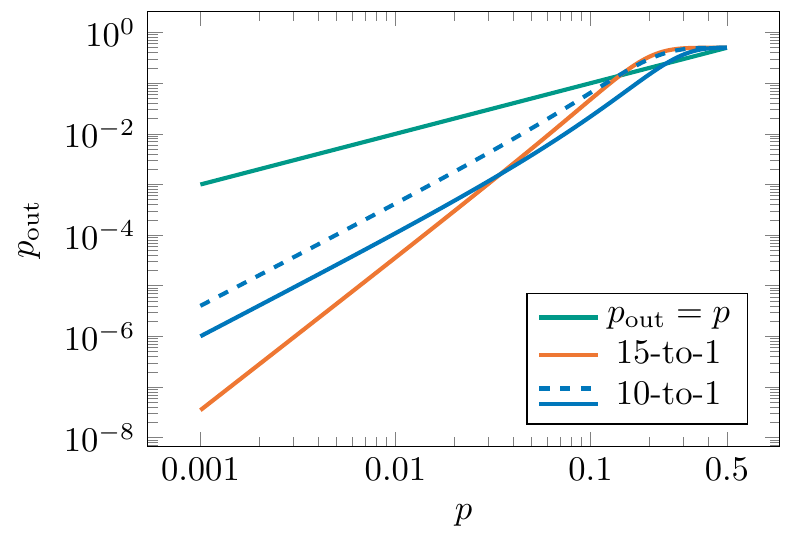}
		\label{sfig:pout}
	}
	\quad
	\subfloat[]{
		\centering
		\includegraphics[width=0.425\linewidth,valign=t]{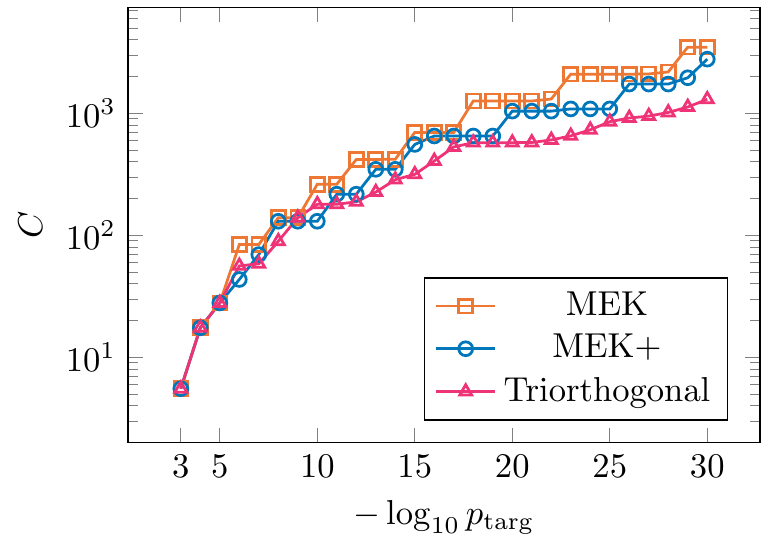}
		\label{sfig:cost}
	}
	\caption{
		(a) A plot showing the output error $p_{\mathrm{out}}$ as a function of the input error $p$ for the 15-to-1 and 10-to-1 MSD protocols. The solid and dashed blue lines correspond to optimistic (\cref{eq:tau_diag}) and pessimistic (\cite{jones2013a}) noise models for the $\ket{CCZ}$ state, respectively.
		(b) A plot showing the distillation cost $C$ for different target error rates $p_{\mathrm{targ}}$, assuming an input error rate of $p=0.01$ and the optimistic noise model for $\ket{CCZ}$ states.
		The cost for our MEK+ protocol (blue) is always smaller than the MEK protocol cost (orange), and even beats the triorthogonal protocol cost (magenta) for some values of $p_{\mathrm{targ}}$.
	}
	\label{fig:msd_compare}
\end{figure*}

We start with seven copies of a mixed one-qubit state $\rho$ such that $\bra{T} \rho \ket{T} = 1-p$ and one mixed three-qubit state $\tau$ such that $\bra{CCZ} \tau \ket{CCZ} = 1-p$, where we refer to $p$ as the input error. 
We assume that $\rho$ is diagonal in the $\ket T$ basis, i.e.,
\begin{equation}
	\rho = (1 - p) \ketbra{T}{T}+ p\, Z \ketbra{T}{T} Z,
	\label{eq:rho_diag}
\end{equation}
which can always be accomplished using Clifford twirling. 
And we assume that $\tau$ has the following form, 
\begin{equation}
	\begin{split}
	\tau &= 
	(1 - p) \ketbra{CCZ}{CCZ} \\
	&+ \frac{p}{7} \sum_{\substack{\bm{b} \in \{0,1\}^3 \\ |\bm{b}| \neq 0}} Z^{\bm{b}} \ketbra{CCZ}{CCZ} Z^{\bm{b}},
	\end{split}
    \label{eq:tau_diag}
\end{equation}
where $|\bm{b}|$ denotes the Hamming weight of $\bm{b} = (b_1, b_2, b_3)$ and $Z^{\bm{b}} = Z_1^{b_1} Z_2^{b_2} Z_3^{b_3}$.
We use this noise model because in many methods of preparing noisy $\ket{CCZ}$ states errors such as $ZII$ have a comparable probability to higher weight errors such as $ZZZ$~\cite{chamberland2020b}. 
In addition, as $\{ Z^{\bm{b}}\ket{CCZ} : \bm{b} \in \{0,1\}^3 \}$ is an orthonormal basis, we can express a $\ket{CCZ}$ state subject to $X$ and $Z$ noise as a $\ket{CCZ}$ subject to $Z$ noise only.
Furthermore, we can eliminate the off-diagonal terms in the density matrix via Clifford twirling to bring the noisy $\ket{CCZ}$ state into the form shown in \cref{eq:tau_diag} (though the coefficients may be different).

The probability that the protocol succeeds (the syndrome is trivial) is 
\begin{equation}
	p_{\mathrm s} = 1 - 8p + 29 p^2 + O(p^3).
	\label{eq:1012_psucc}
\end{equation}
When the protocol is successful, we obtain the output state $\rho_{\mathrm{out}} = (1-p_{\mathrm{out}}) \ketbra{T}{T} + p_{\mathrm{out}} Z \ketbra{T}{T} Z$, where 
\begin{equation}
	p_{\mathrm{out}} = p^2 + 9 p^3 + O(p^4).
	\label{eq:1012_pout}
\end{equation} 
We refer to $p_{\mathrm{out}}$ as the output error.
The leading order prefactor in \cref{eq:1012_pout} is exceptionally small; in the 15-to-1 protocol the output error is $35 p^3 + O(p^4)$~\cite{bravyi2005}.
This is due to an unusual property of the \nkd{10}{1}{2} code. 
Namely, for each of the seven possible $Z^{\bm{b}}$ errors on the input $\ket{CCZ}$ state, there is only one logical $Z$ operator that can be constructed by applying $Z$ to one of the $\ket T$ states. 

In \cref{sfig:pout} we compare the performance of the 10-to-1 protocol with the 15-to-1 protocol. 
The success probability of the 10-to-1 protocol is always greater than or equal to the success probability of the 15-to-1 protocol, and for $p \geq 0.034$ the 10-to-1 protocol has a smaller output error.
The high success probability of the 10-to-1 protocol suggests it may be useful in early rounds of distillation, as part of a multi-round protocol.
To test this hypothesis, we compute the distillation cost~\cite{bravyi2012} $C$ (the expected number of magic states required) to achieve a given target error rate $p_{\mathrm{targ}}$ assuming an input error rate of $p = 0.01$, for a number of multi-round MSD protocols.
Following~\cite{meier2012}, we define the Meier, Eastin, and Knill (MEK) protocol to consist of up to five rounds of distillation, where in each round either 15-to-1~\cite{bravyi2005} or 10-to-2~\cite{meier2012} distillation can be used. 
And we define the MEK+ protocol in the same way as the MEK protocol, except at each round we can choose between 15-to-1, 10-to-2 and 10-to-1 distillation.
In \cref{sfig:cost} we compare the distillation costs of the MEK and MEK+ protocols with the cost of a protocol where at each round we can choose between distillation protocols based on triorthogonal codes~\cite{bravyi2012}. 
We find that the MEK+ protocol distillation cost is always less than or equal to the MEK protocol cost, and even betters the triorthogonal protocol cost for $p_{\mathrm{targ}} \in \{ 10^{-6}, 10^{-9}, 10^{-10} \}$; see \cref{tab:cost} in \cref{app:cost} for the full data including the optimal distillation sequences.

One may question the reasonableness of the noise model assumed in \cref{eq:tau_diag} as it seems plausible that $\ket{CCZ}$ states could be more difficult to prepare than $\ket{T}$ states.
One option for preparing $\ket{CCZ}$ states is to use the circuit of Jones~\cite{jones2013a}, which requires four $\ket{T}$ states to produce a $\ket{CCZ}$ state. 
Assuming the $\ket T$ states are as in \cref{eq:rho_diag}, the output error of the \nkd{10}{1}{2} protocol will be $p_{\mathrm{out}} = 4p^2 + 21p^3 + O(p^4)$.
In this case, the distillation cost of the MEK+ protocol does not improve on the cost of the MEK protocol.

Definitively answering the question of which MSD protocol is preferable depends on the error model of quantum computing architecture under consideration, for example see~\cite{litinski2019a,chamberland2020b}.
However, often simpler MSD protocols using fewer qubits are preferable~\cite{litinski2019,litinski2019a,gidney2019a}, which suggest that our \nkd{10}{1}{2} protocol may transpire to be useful in practice. 
Furthermore, we have presented a single example of using morphing to modify an existing MSD protocol, but we conjecture that morphing could be fruitfully applied to other MSD protocols, e.g., those based on triorthogonal codes~\cite{bravyi2012,campbell2017,haah2018a}.

\section{Morphing the color code \label{sec:morph_color}}

In this section, we apply our morphing procedure to the color code~\cite{bombin2006,bombin2007a,kubica2018a}, a family of topological codes.
We first consider 2D color codes, before discussing the generalization to three and higher dimensions. 
Then we explain how morphing naturally leads to a family of HCT codes. 
Next we comment on the relation of our results to previous work relating the color code and the toric code. 
And lastly, we detail the fault-tolerant logical gates of HCT codes, which in three or more dimensions include non-Clifford gates. 

\subsection{Morphing color codes in two or more dimensions \label{subsec:morph_2d_cc}}

We use the dual lattice picture of color codes.
In this picture, a 2D color code lattice $\mathcal L$ is a triangulation of a 2D manifold (possibly with boundary) whose vertices are three-colorable.
That is, each vertex of the lattice can be assigned one of three colors ($r$, $g$, or $b$), such that any two vertices sharing an edge have different colors.
Let $\mathcal L$ denote such a lattice and let $\mathcal L_0$, $\mathcal L_1$, and $\mathcal L_2$ denote its vertices, edges, and faces, respectively. 
The edges and faces of the lattice inherit the colors of their vertices, e.g., an edge linking an $r$-vertex and a $g$-vertex is an $rg$-edge.
The qubits of the code are on the faces of $\mathcal L$ and stabilizer generators are associated with interior vertices. 
That is, for each vertex $v$ not on the boundary of $\mathcal L$, we have the stabilizers $X(v) = \prod_{f \supseteq v} X_f$ and $Z(v) = \prod_{f \supseteq v} Z_f$, where $X_f$ and $Z_f$ denote Pauli $X$ and $Z$ operators acting on the qubit on face $f$, respectively, and we are implicitly interpreting faces as subsets of vertices.

To apply our morphing procedure, we need to select subsets of the color code qubits. 
The subsets we choose are disk-like regions of $\mathcal L$, but we refer to them as ball-like in anticipation of generalization to three (and higher) dimensions. 
Namely, for each vertex $v$ in $\mathcal L$, we define the ball-like region
\begin{equation} 
	\mathcal B^v = \bigcup_{k=0}^d \{ \kappa \in \mathcal L_k \mid \kappa \supseteq v \},
	\label{eq:ball}
\end{equation}
where $d=2$.
We define the color of a ball-like region $\mathcal B^v$ to be the color of its central vertex $v$.
We again use the notation $\mathcal B^v_0$ to refer to the vertices of $\mathcal B^v$ etc.
Restricting the stabilizer group to a ball-like region $\mathcal B^v$ gives us a child code with parameters $\nkd{|\mathcal B^v_2|}{|\mathcal B^v_2|-2}{2}$, as the code has two stabilizer generators that act on all the qubits.  
We call such codes ball codes. 
In the interests of brevity, we use the phrase `morphing a ball-like region of a color code' as a shorthand for morphing a color code with the subset $R$ of the physical qubits given by the qubits contained in a ball-like region of the code.
\Cref{fig:morph} shows an example of morphing a ball-like region of a 2D color code.

\begin{figure}
	\centering
	\includegraphics[width=0.9\linewidth]{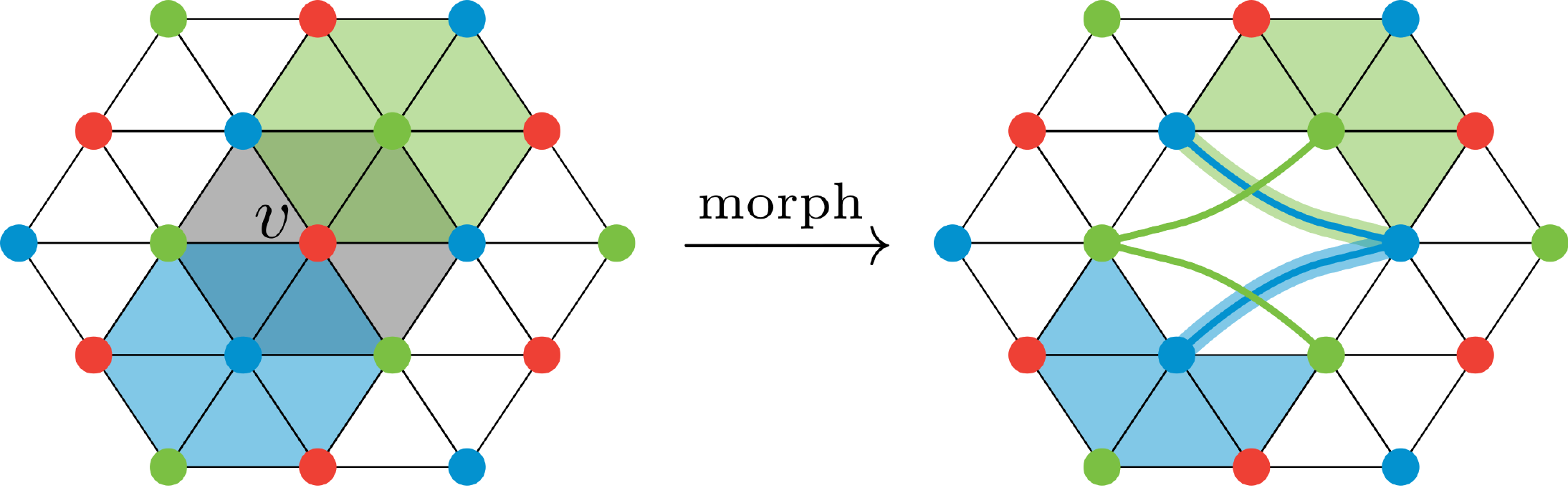}
	\caption{
		Morphing a ball-like region of a 2D color code.
		Qubits are on triangles and stabilizers are associated with vertices. 
		On the left, we shade the ball-like region $\mathcal B^v$, which has an associated \nkd{6}{4}{2} ball code.
		We also shade the support of an $X$-type and $Z$-type stabilizer in blue and green, respectively. 
		On the right, we show the new lattice formed by morphing $\mathcal B^v$, replacing the original qubits with the four logical qubits of the ball code (blue and green edges).
		This changes the support of the highlighted stabilizers, as shown.
	}
	\label{fig:morph}
\end{figure}

We now introduce a canonical geometric picture of morphing, which will prove especially useful when we consider decoding HCT codes in \cref{sec:decoding}.
The following corresponds to a particular choice of logical Pauli basis for ball codes, which we call the canonical basis (see \cref{app:morph_ddim}). 
We emphasize that other basis choices are equally valid and could be preferable in different contexts. 
Consider some ball-like region $\mathcal B^v$ of a 2D color code lattice $\mathcal L$. 
We note that the vertices on the boundary of $\mathcal B^v$ always have two colors.
When we morph $\mathcal B^v$ we remove the faces of $\mathcal B^v$ from $\mathcal L$ and add new edges as follows. 
For each boundary vertex color $c$, we choose one of the $c$-vertices on the boundary of $\mathcal B^v$ and connect it via an edge to every other $c$-vertex on the boundary of $\mathcal B^v$, where we refer to the new edges as $cc$-edges. 
We place a qubit on each new $cc$-edge; these qubits are the logical qubits of the ball code. 
In this picture, morphing takes $X$-type stabilizers of the parent color code to operators consisting partly of color code $X$-type stabilizers and partly of toric code $X$-type stabilizers (and similarly for $Z$-type stabilizers).
See \cref{fig:morph} for an example.
We defer a detailed discussion of the canonical geometric picture (including its generalization to higher dimensions) until \cref{app:morph_ddim}.

Morphing higher dimensional color codes works in much the same way as the 2D case. 
We now briefly review the definition of the $d$-dimensional color code.
Let $\mathcal L$ be a $d$-dimensional lattice formed by attaching $d$-simplices along their $(d-1)$-dimensional faces. 
We require that the vertices of $\mathcal L$ are $(d+1)$-colorable.
We place a qubit on each $d$-simplex of $\mathcal L$ and we associate $X$ and $Z$-type stabilizer generators respectively with vertices and $(d-2)$-simplices not on the boundary of $\mathcal L$. 
In 3D, qubits are on tetrahedra, $X$-type stabilizer generators are associated with vertices and $Z$-type stabilizers are associated with edges. 

In much the same way as in 2D, in higher dimensions we again morph ball-like regions of the color code. 
For a given vertex $v \in \mathcal L_0$, the $d$-dimensional ball-like region $\mathcal B^v$ consists of all simplices that contain $v$, alongside $v$ itself; see \cref{eq:ball}.
\Cref{fig:3d_ball_ex} shows an example of such a ball-like region in a 3D color code.
Restricting the stabilizer group of the color code to a ball-like region $\mathcal B^v$ gives us a $d$-dimensional ball code, which we characterize in the following lemma (whose proof we defer until \cref{app:morph_ddim}).
\begin{restatable}{lemma}{params} 
	The ball code defined on the $d$-dimensional ball-like region $\mathcal B^v$ has parameters 
	\begin{equation}
		N=|\mathcal B^v_d|, \quad K = |\mathcal B^v_1| - d, \quad D = 2.
		\label{eq:ball_code_params_ddim}
	\end{equation}
	\label{lem:ball_code_params}
\end{restatable}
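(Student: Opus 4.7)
\emph{Plan:} The statement has three components: $N = |\mathcal{B}^v_d|$, $K = |\mathcal{B}^v_1| - d$, and $D = 2$. The qubit count is immediate, since the qubits of the ball code live on the $d$-simplices contained in $\mathcal{B}^v$, of which there are $|\mathcal{B}^v_d|$ by definition. So the real work lies in counting independent stabilizer generators of $\mathcal{S}(\mathcal{B}^v)$ and in certifying the distance.

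To compute $K$, I would first pin down which generators of the parent color code lie in $\mathcal{S}(\mathcal{B}^v)$. The $X$-type generator $X(u)$ belongs to $\mathcal{S}(\mathcal{B}^v)$ iff every $d$-simplex containing $u$ also contains $v$; for a bulk vertex this forces $u=v$, yielding a single $X$-type generator. The $Z$-type generator $Z(\sigma)$ at a $(d-2)$-simplex $\sigma$ is supported in $\mathcal{B}^v$ iff $v\in\sigma$, so the candidate $Z$-type generators are indexed by $\mathcal{B}^v_{d-2}$. Verifying the independence of these $Z$-type generators in arbitrary dimension is the combinatorial heart of the argument, and I expect it to be the main obstacle.

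Rather than enumerate the dependencies among the $Z(\sigma)$ directly, I would invoke the canonical geometric picture, extended to general $d$ in \cref{app:morph_ddim}. There each logical qubit of the ball code is assigned to one of the new $cc$-edges introduced during morphing: for each of the $d$ colours $c$ appearing on the boundary of $\mathcal{B}^v$, one adds one $cc$-edge for every boundary $c$-vertex save a chosen root. Consequently the total number of logical qubits equals (total boundary vertices of $\mathcal{B}^v$)~$-\,d$, and since boundary vertices of $\mathcal{B}^v$ are in bijection with the edges in $\mathcal{B}^v_1$ via $u \mapsto \{v,u\}$, this yields $K = |\mathcal{B}^v_1| - d$. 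Simultaneously, exhibiting $K$ anticommuting pairs of logical Pauli representatives certifies the independence count that paragraph two left open.

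Finally, to establish $D = 2$, I would show $D \geq 2$ by noting that every qubit lies in $\supp X(v)$ as well as in $\supp Z(\sigma)$ for some $\sigma \ni v$ (any $(d-2)$-face of the corresponding $d$-simplex that contains $v$ will do), so no single-qubit Pauli commutes with every element of $\mathcal{S}(\mathcal{B}^v)$; and I would show $D \leq 2$ by pointing back to the canonical picture, which exhibits weight-two logical $X$ and $Z$ representatives supported on pairs of $d$-simplices of $\mathcal{B}^v$ meeting along a common $(d-1)$-face.
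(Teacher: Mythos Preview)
Your argument for $K$ has a genuine gap. Exhibiting $|\mathcal{B}^v_1|-d$ anticommuting logical Pauli pairs establishes only the lower bound $K \geq |\mathcal{B}^v_1|-d$; it does not bound $K$ from above, and so it does not ``certify the independence count'' of the $Z$-type generators as you claim. Invoking the canonical geometric picture does not close the gap either: in the paper that picture is set up \emph{after} the lemma, and the assertion that the new $cc$-edges exhaust the logical qubits of the ball code is justified precisely by the value of $K$ already established. Your argument is therefore circular; what is missing is an independent upper bound on $K$, equivalently a lower bound on $\rank \mathcal{S}^Z_{\mathcal{B}^v}$. The paper supplies this by a different device: it observes that $\partial\mathcal{B}^v$ carries a $(d{-}1)$-dimensional color code whose $Z$-stabilizer group is isomorphic to that of the ball code via $\tau\mapsto\tau\ast v$, and since $\partial\mathcal{B}^v$ is a $(d{-}1)$-sphere this boundary code encodes zero logical qubits. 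From $N=\rank\mathcal{S}^X_{\partial\mathcal{B}^v}+\rank\mathcal{S}^Z_{\partial\mathcal{B}^v}$ together with the $d{-}1$ colour relations among the boundary $X$-generators one then reads off $\rank\mathcal{S}^Z_{\mathcal{B}^v}$ directly.

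A smaller slip: weight-two $X$-type operators supported on two $d$-simplices sharing a $(d{-}1)$-face are \emph{not} logical for $d\geq 3$ (take $\mu$ to be any $(d{-}2)$-face of one simplex containing both $v$ and the vertex not in the shared face; then $Z(\mu)$ anticommutes). This does not affect $D=2$, since the weight-two $Z(\lambda)$ already suffices, and indeed the paper only asserts that the $X$-distance is at least two.
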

We note that the canonical geometric picture can be generalized to higher dimensions, see \cref{fig:3d_ball_ex} and \cref{app:morph_ddim}.

\begin{figure}
	\centering
	\includegraphics[width=0.7\linewidth]{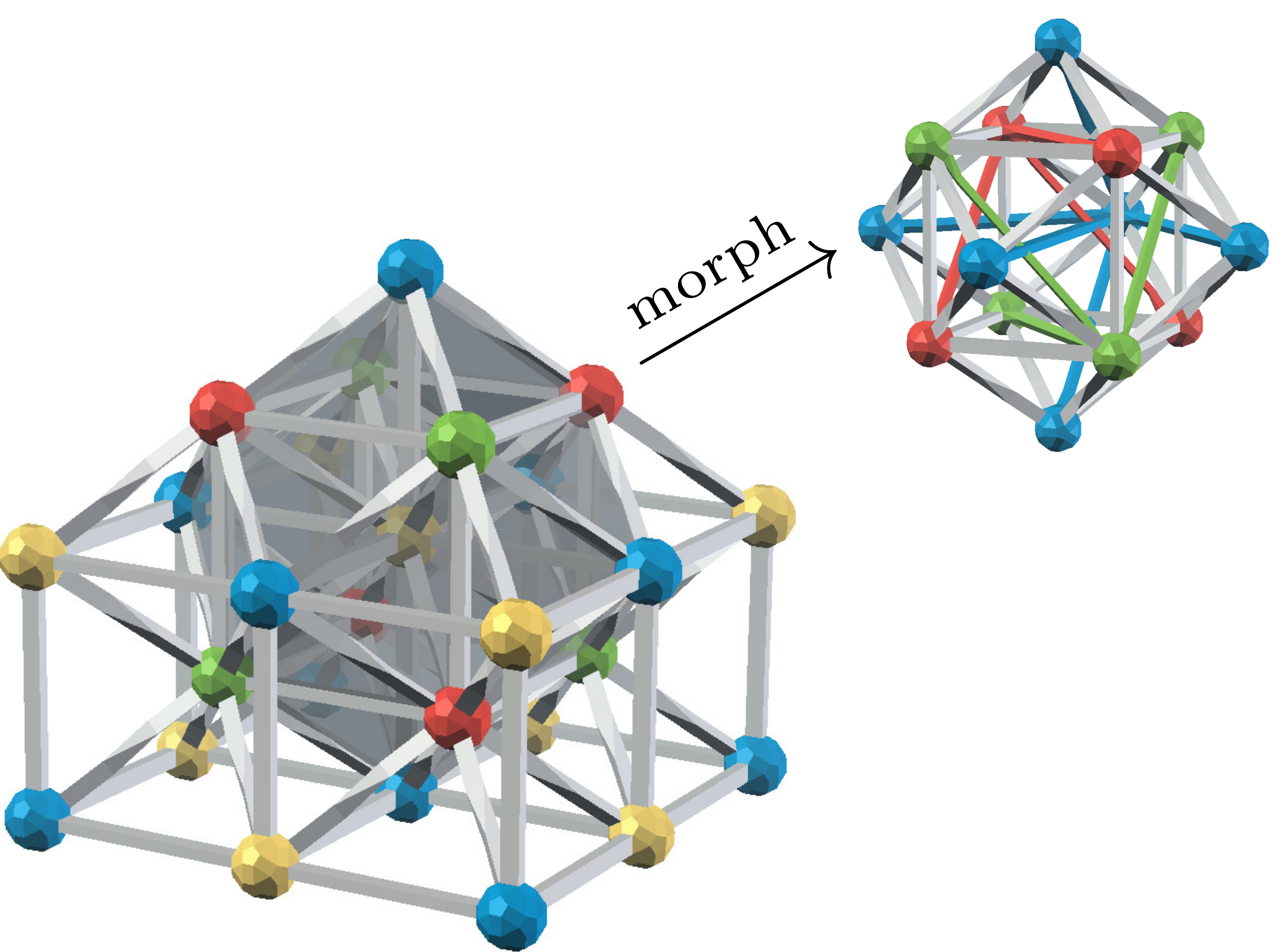}	
	\caption{
		Morphing a ball-like region of a 3D color code. 
		Qubits are on tetrahedra, $X/Z$ stabilizers are associated with vertices and edges, respectively.
		We shade a ball-like region (with an associated \nkd{24}{11}{2} ball code) and show its transformation under morphing.
	}
	\label{fig:3d_ball_ex}
\end{figure}

\subsection{Hybrid color-toric codes \label{subsec:hct}}

We use morphing to construct a family of HCT codes in $d$-dimensions, for $d\geq 2$.
Given an initial color code defined on a lattice $\mathcal L$, we obtain different HCT codes by morphing subsets of the ball-like regions of $\mathcal L$. 
Concretely, suppose that $\mathcal L$ is a closed manifold and consider the case where we morph a fraction $q \in [0,1]$ of $r$ ball-like regions (the choice of color is arbitrary).
For $q=0$ we of course have the original color code and for $q=1$ we obtain $d$ copies of the toric code (up to adding/removing ancilla qubits). 
Let $\widetilde{\mathcal L}$ denote the HCT code lattice produced from $\mathcal L$ via morphing. 
For $q=1$, qubits will be exclusively associated with edges as all the original color code qubits were contained in some $r$ ball-like region of $\mathcal L$.
As we show in \cref{app:morph_ddim}, the $X$ and $Z$-type stabilizers of $\widetilde{\mathcal L}$ can be partitioned into $d$ subgroups such that each subgroup only contains operators acting on $cc$-edges, where $c$ is any of the $d+1$ colors except for $r$. 
In 2D, each of these subgroups defines a toric code but in higher dimensions to obtain toric codes we need to introduce ancillas, unless we consider lattices with special structure (see \cref{app:coxeter}). 
We can also consider the case where we morph a fraction $q$ of all the ball-like regions in $\mathcal L$, but then we need to be careful because ball-like regions of different colors can overlap; see \cref{sec:decoding}.

In $d\geq 3$ dimensions color codes have transversal non-Clifford gates~\cite{bombin2007a,bombin2015,kubica2015,kubica2015a,watson2015,bombin2018a}, which our HCT codes inherit.
As we show in \cref{app:non_cliff_ddim}, if a $d$-dimensional color code has a transversal non-Clifford logical operator $L$ implemented by $R_d^{\pm 1}$ gates, then HCT codes obtained from this code will have a fault-tolerant logical operator $L$ implemented by $R_d^{\pm 1}$ and multi-control-$Z$ gates.
We recall that $R_d = \exp (\frac{i \pi}{2^d} Z)$ and that the $d$-qubit multi-control-$Z$ gate is specified by its action on the states in the computational basis, namely
\begin{equation}
	MCZ \ket{b_1,\ldots,b_d} = \left(1-2\prod_{i=1}^d b_i\right)\ket{b_1,\ldots,b_d}
	\label{eq:mcz_def}
\end{equation}

We now consider some examples of HCT codes with fault-tolerant non-Clifford gates.
The \nkd{10}{1}{2} code introduced in \cref{subsec:qrm} is one such example as the \nkd{15}{1}{3} code fits into the color code family~\cite{kubica2015}.
Furthermore, we can construct a HCT code with a fault-tolerant logical $R_d$ gate from any color code defined on a cellulation of a $d$-simplex.  
Another example is given by the family of hypercubic color codes~\cite{kubica2015a}, which we illustrate in \cref{fig:hypercubic}.
Hypercubic color codes have transversal logical $MCZ$ gates and therefore HCT codes obtained from these codes have fault-tolerant logical $MCZ$ gates.

\begin{figure}
	\centering
	\subfloat[]{
		\centering
		\includegraphics[width=0.4\linewidth,valign=b]{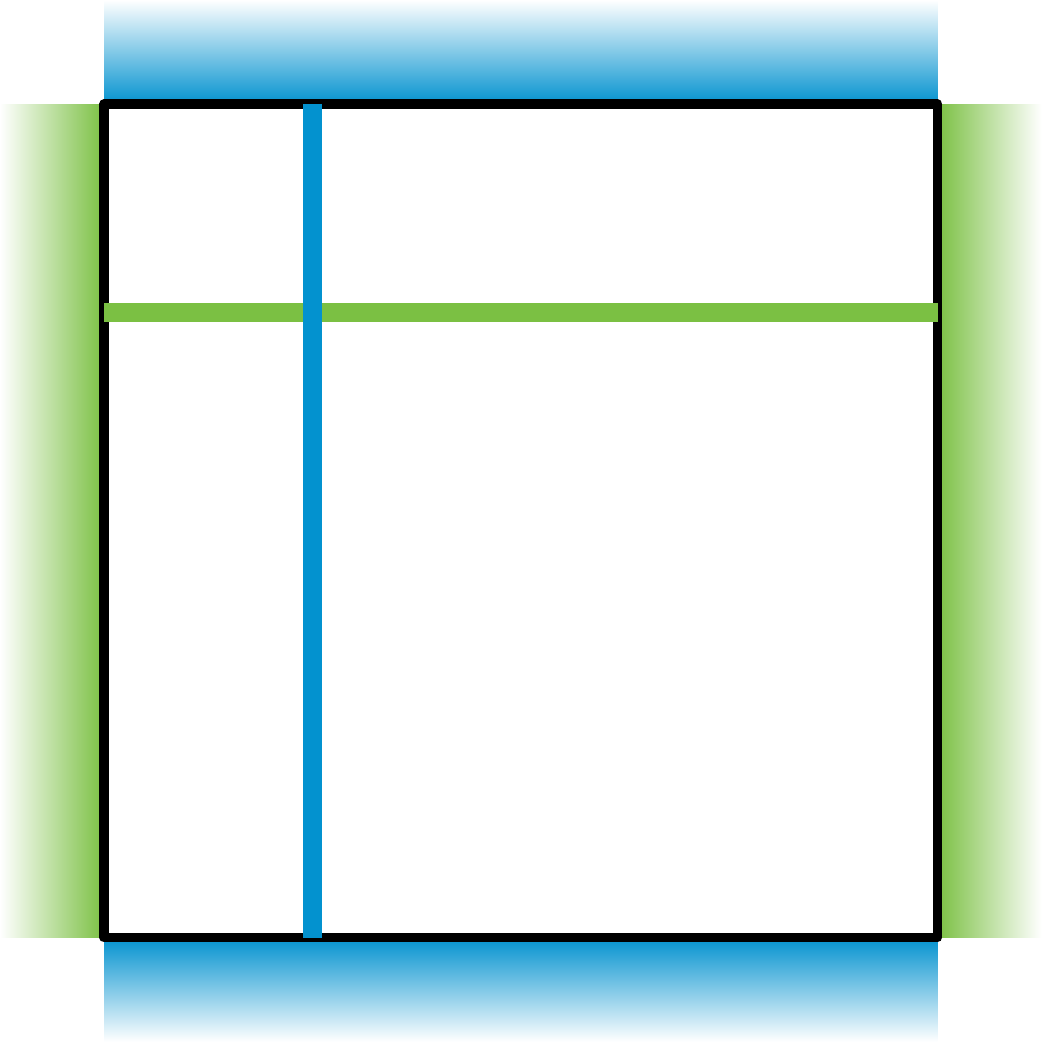}
	}
	\quad
	\subfloat[]{
		\centering
		\includegraphics[width=0.4\linewidth,valign=b]{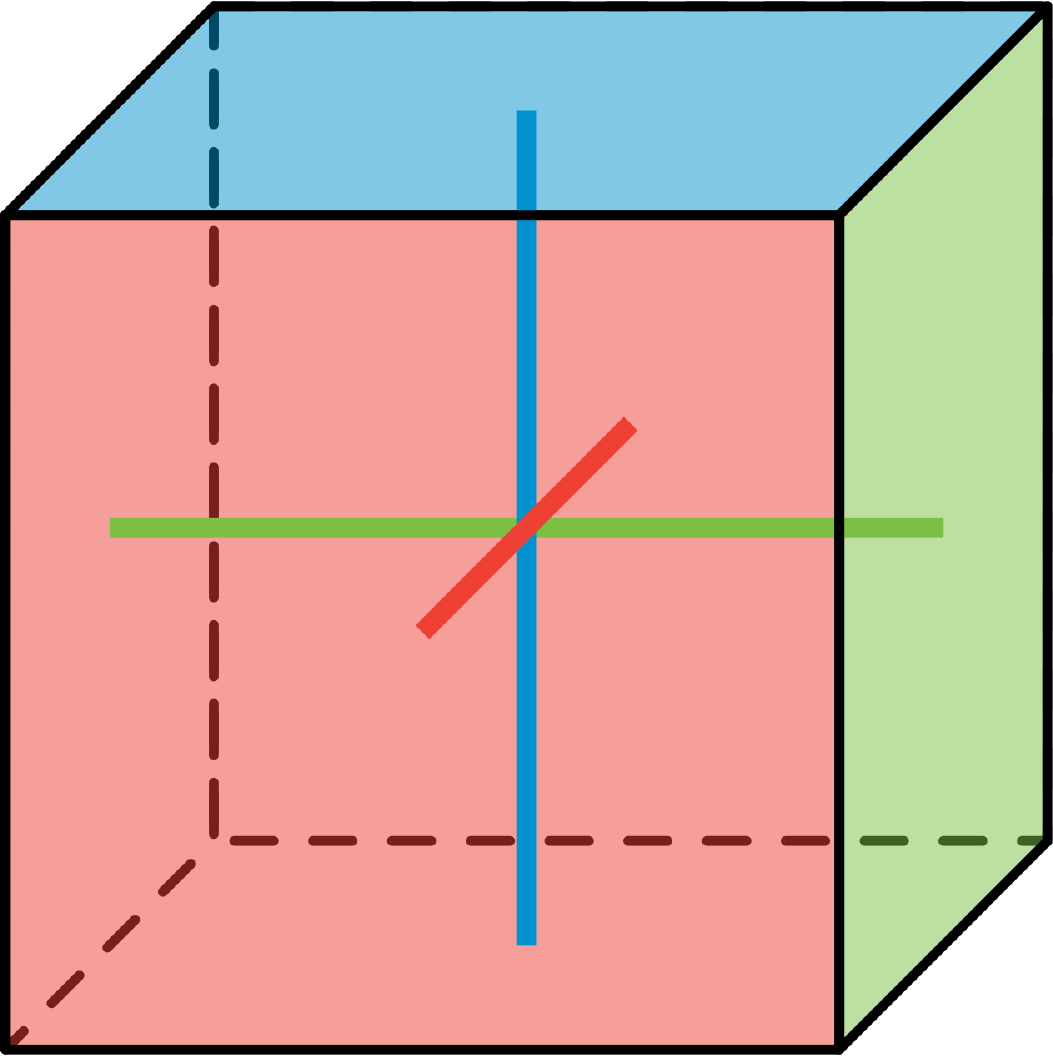}
	}
	\caption{
		Hypercubic color codes are defined on lattices that tessellate the hypercube, where opposite boundaries have the same color.
		(a) The $d=2$ case. 
		Logical $X_1$ and $Z_2$ are supported on the green line (as products of $X$ and $Z$ operators, respectively).
		Similarly, $X_2$ and $Z_1$ are supported on the blue line.
		(b) The $d=3$ case. 
		Logical $X$ operators are supported on the boundaries and the corresponding logical $Z$ operators connect boundaries of the same color. 
	}
	\label{fig:hypercubic}
\end{figure}

We now briefly discuss how our results relate to previous work. 
In Ref.~\cite{kubica2015a}, the authors proved that the $d$-dimensional color code is local-Clifford equivalent to $d$ copies of the $d$-dimensional toric code.
Their argument was non-constructive, i.e., they showed that for a $d$-dimensional color code defined on a lattice $\mathcal L$, there exists a Clifford isometry that transforms the color code into $d$ copies of the toric code. 
Furthermore, the isometry has the following form
\begin{equation}
	V = \bigotimes_{\substack{u \in \mathcal L_0 \\ \col{u} = c}} V_u,
	\label{eq:unfolding_V}
\end{equation}
where $V_u$ is a Clifford isometry acting on the physical qubits of the color code in the neighborhood of the vertex $u$.
Morphing can be understood as giving an explicit implementation of the local Clifford isometry $V$, as each morphing circuit is a Clifford isometry applied in the neighborhood of a vertex. 
In Refs.~\cite{criger2016} and \cite{vasmer2019}, the authors show that one can obtain color codes by concatenating multiple toric code with small error-detecting codes.
This is a special case of morphing, but in reverse. 
In Refs.~\cite{vasmer2019} and \cite{jochym-oconnor2021}, the authors present constructions of 3D and 4D toric codes with transversal logical $MCZ$ gates. 
We can use morphing to recover these results, as we explain in \cref{app:coxeter}.

\section{Decoding HCT codes \label{sec:decoding}}

In this section, we propose an efficient decoding algorithm for 2D HCT codes, and we numerically benchmark its performance. 
Remarkably, our decoder reduces to the MWPM decoder~\cite{dennis2002,fowler2015} in one limit and to the restriction decoder in the opposite limit. 
We recall that the restriction decoder is an efficient decoder of the color code in $d\geq 2$ dimensions that builds upon the ideas in Refs.~\cite{bombin2012}~and~\cite{delfosse2014}.

\subsection{Decoding algorithm \label{subsec:decoder}}

\begin{figure*}
	\centering
	\subfloat[]{
		\centering
		\includegraphics[width=0.22\linewidth]{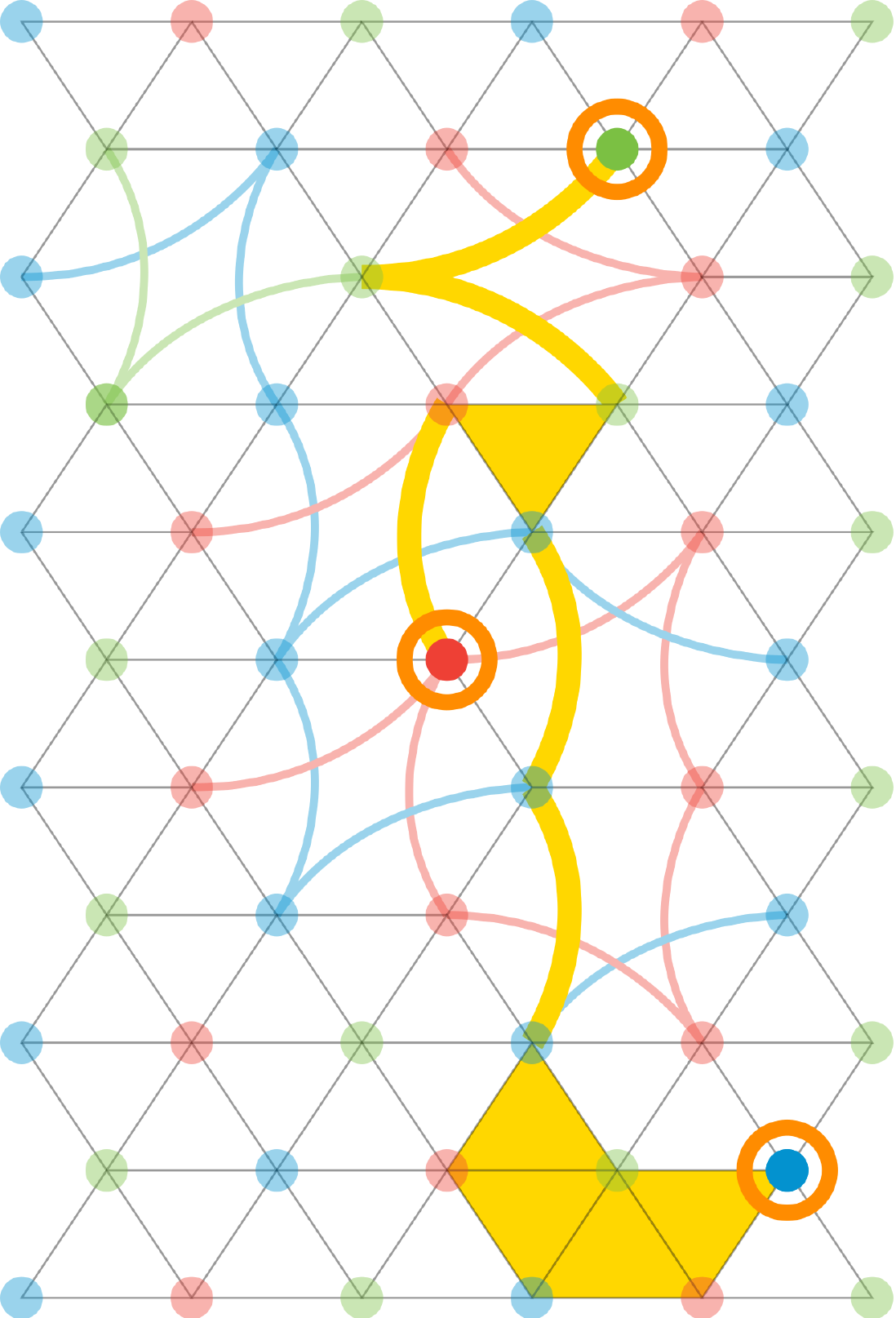}
	}
	\hspace{0.1mm}
	\subfloat[]{
		\centering
		\includegraphics[width=0.22\linewidth]{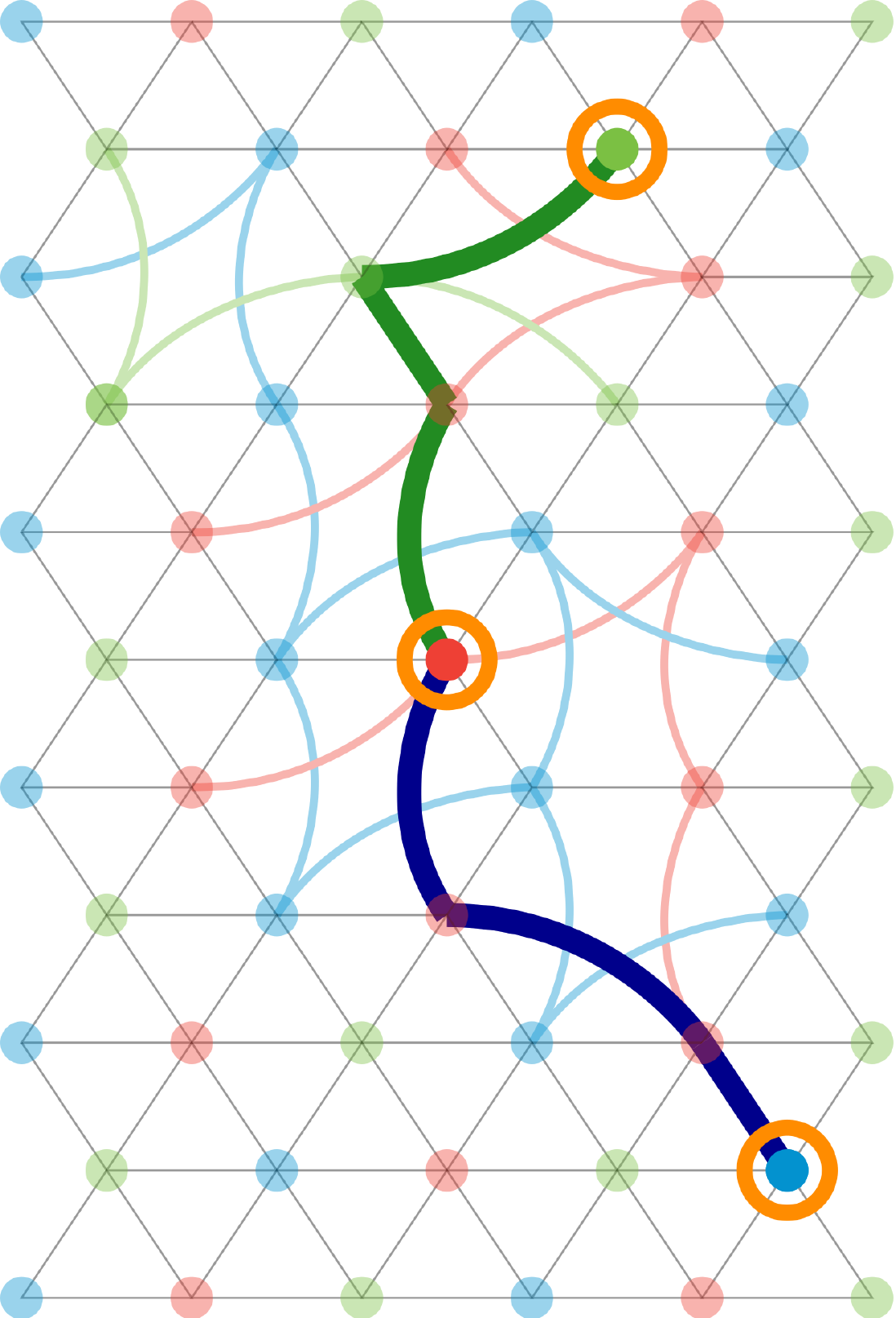}
	}
	\hspace{0.1mm}
	\subfloat[]{
		\centering
		\includegraphics[width=0.22\linewidth]{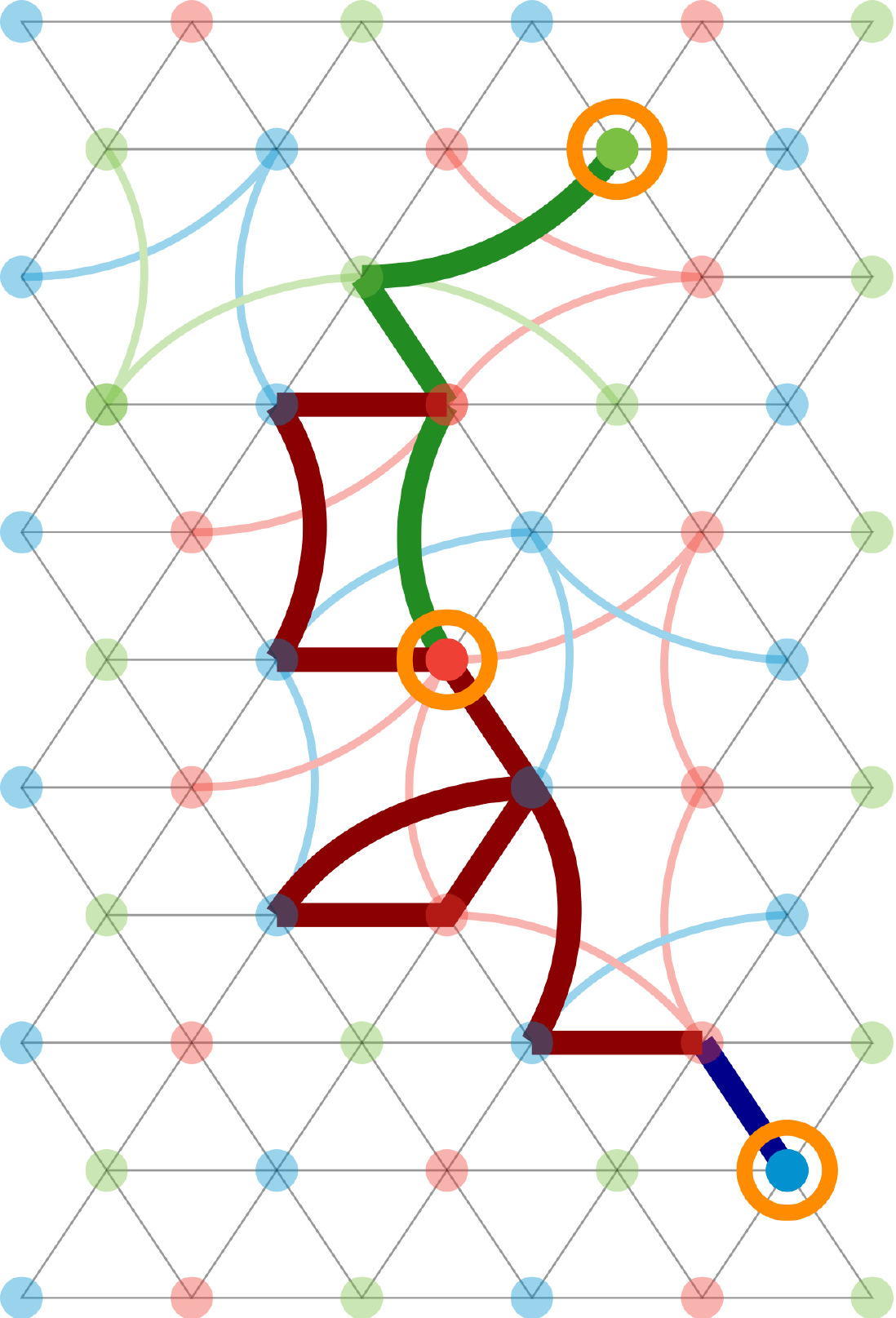}
	}
	\hspace{0.1mm}
	\subfloat[]{
		\centering
		\includegraphics[width=0.22\linewidth]{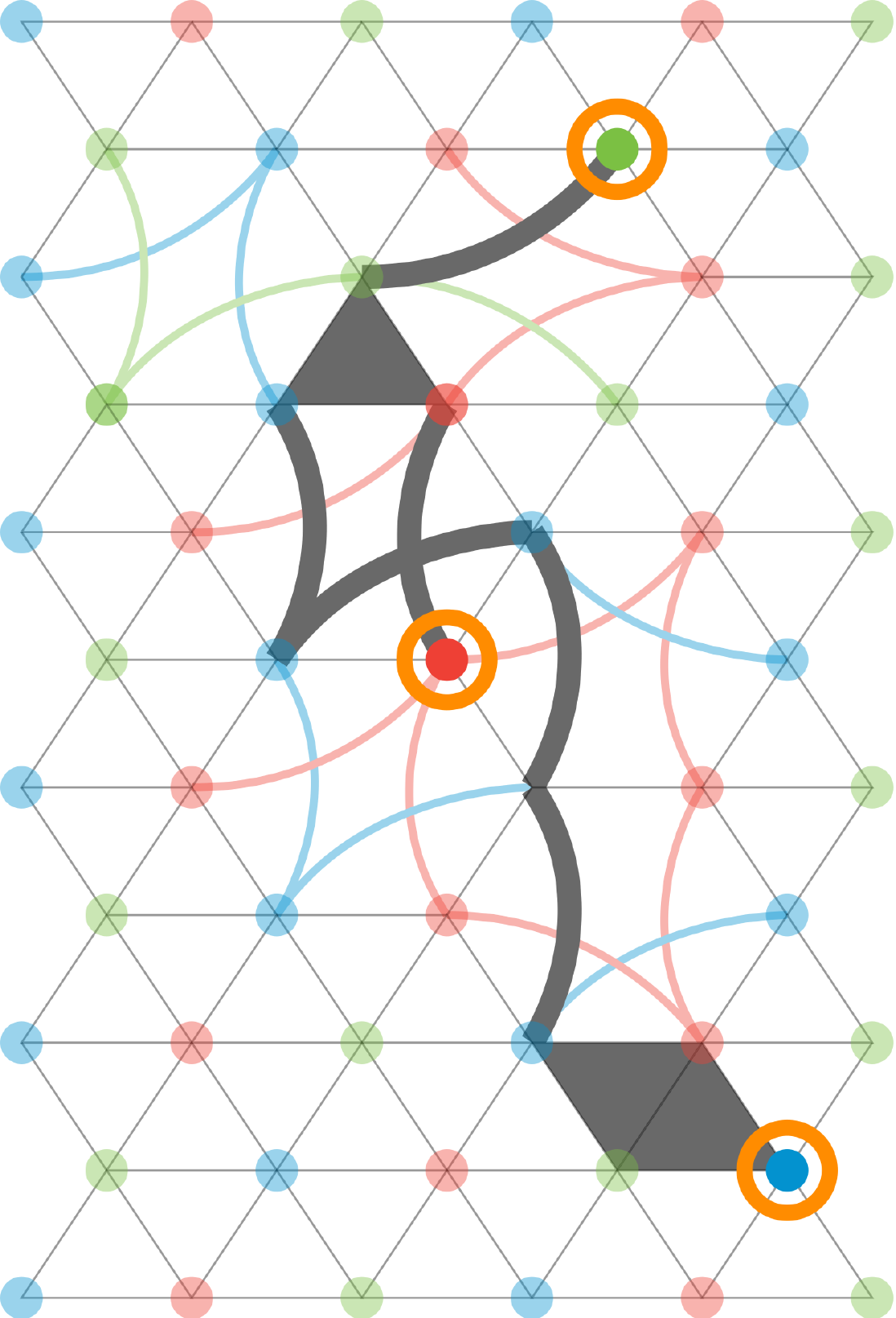}
	}
	\caption{
		An illustration of our decoding algorithm for HCT codes.
		(a) A phase-flip error (yellow faces and edges) and its syndrome (ringed vertices). 
		(b)-(d) The three steps of the HCT code decoder: 
		(1) Matching: use MWPM to compute a matching (dark green and blue edges) of the syndrome in the restricted lattices $\widetilde{\mathcal L}^{rg}$ and $\widetilde{\mathcal L}^{rb}$.
		(2) Local Modification: locally modify the matching; new edges included in the matching are highlighted in red.
		(3) Local Lift: use the local lifting procedure to compute a correction (grey faces and edges) from the modified matching. 
	}
	\label{fig:alg_ex}
\end{figure*}

We consider phase-flip noise as the correction of bit-flips is analogous and can be done separately.
Let $\widetilde{\mathcal L}$ denote a 2D HCT code lattice, produced from a color code lattice $\mathcal L$ (without boundary) by morphing some subset of its ball-like regions.
We label the three colors of the vertices by $r$, $b$, and $g$.
We use the canonical geometric picture from \cref{subsec:morph_2d_cc}, so qubits in $\widetilde{\mathcal L}$ are associated with $rgb$-faces, i.e., faces $f \in \widetilde{\mathcal L}_2 \cap \mathcal L_2$ and $cc$-edges, i.e., edges $e \in \widetilde{\mathcal L}_1 \setminus \mathcal L_1$.
And $X$-type stabilizer generators are associated with vertices $v \in \widetilde{\mathcal L}_0$.
We emphasize that $\widetilde{\mathcal L}$ should not be treated as a topological object but rather as a combinatorial object. 
We also make use of a well-known concept in color code decoding: restricted lattices~\cite{bombin2006,kubica2019}. 
Let $\widetilde{\mathcal L}^{c_1 c_2}$ be the lattice formed from $\widetilde{\mathcal L}$ by deleting all vertices of color $c \neq c_1 \neq c_2$, along with all edges and faces that contain these vertices. 

We can now state our decoding problem: given an error syndrome $\sigma \subseteq \widetilde{\mathcal L}_0$ caused by an unknown phase-flip error $\epsilon \subseteq (\widetilde{\mathcal L}_2 \cap \mathcal L_2) \cup (\widetilde{\mathcal L}_1\setminus\mathcal L_1)$, output a correction operator $\zeta \subseteq (\widetilde{\mathcal L}_2 \cap \mathcal L_2) \cup (\widetilde{\mathcal L}_1\setminus\mathcal L_1)$ with the same syndrome as $\epsilon$. 
Decoding is successful if $\zeta$ and $\epsilon$ differ by a stabilizer.
Our decoding algorithm consists of the following three steps.
\begin{enumerate}
	\item \textbf{Matching}: we use MWPM to pair up the vertices of $\sigma$ in the restricted lattices $\widetilde{\mathcal L}^{rg}$ and $\widetilde{\mathcal L}^{rb}$.
	That is, we obtain a set of edges $E^{rg} \subseteq \widetilde{\mathcal L}_1^{rg}$ whose boundary is exactly $\sigma \cap \widetilde{\mathcal L}_0^{rg}$ (and similarly for $\widetilde{\mathcal L}^{rb}$).
	In a slight abuse of terminology, we call such a set of edges a matching.
	\item \textbf{Local Modification}: we modify the matchings as follows, focusing on $E^{rg}$ (for $E^{rb}$ exchange $g \leftrightarrow b$ below).
	For each $rr$-edge $e \in E^{rg}$, we find the color $c$ of its corresponding ball-like region $\mathcal B^v$ in $\mathcal L$. 
	If $c \neq g$, we remove $e$ from the matching. 
	Let $u_1$ and $u_2$ be the endpoints of $e$.
	We find two vertices $w_1,w_2 \in \widetilde{\mathcal L}_0$ such that $(u_1, w_1, v)$ and $(u_2, w_2, v)$ are contained in $\mathcal L_2$ for some $v \in \mathcal L_0$. 
	Then we find a path $(e_1, \ldots, e_m)$ from $w_1$ to $w_2$ consisting of $gg$-edges.
	We add the edges $(u_1, w_1)$, $(u_2, w_2)$, $e_1, \ldots, e_m$ to the matching. 
	\item \textbf{Local Lift}: first, we add each $cc$-edge in the combined matching $E^{rg} \cup E^{rb}$ to the correction $\zeta$.
	Then, for each $r$-vertex $v \in e$ such that $e \in E^{rg} \cup E^{rb}$ we apply a local lift. 
	For regions of $\widetilde{\mathcal L}$ that are the same as the corresponding regions $\mathcal L$, the local lift is the color code lift described in~\cite{kubica2019}.
	In the morphed regions of $\widetilde{\mathcal L}$, the local lift must be modified to account for the changes due to morphing; see \cref{app:local_lift}.
	Let $\zeta_v \subseteq (\widetilde{\mathcal L}_2 \cap \mathcal L_2) \cup (\widetilde{\mathcal L}_1\setminus\mathcal L_1)$ be the output of the local lift at $v$. 
	We update the correction $\zeta = \zeta \bigtriangleup \zeta_v$.
\end{enumerate}

\Cref{fig:alg_ex} illustrates our algorithm via and example, and a C++ implementation of our algorithm (utilizing Blossom V~\cite{kolmogorov2009}) is available online~\cite{github}.
We can now explain our earlier statement that morphing can simplify the decoding problem. 
Morphing (applied judiciously) transforms a color code into copies of the toric code, changing the decoding problem from color code decoding to the simpler toric code decoding problem. 
In particular, the local lift in the toric code regions is  trivial, compared
with the local lift in the color code regions.
Therefore one can argue that the decoding problem is simplified in the regions of the lattice that have been morphed.

Our algorithm has runtime complexity $O(N^3)$ (see \cref{app:local_lift}), where $N$ is the number of physical qubits in the code, due to the MWPM subroutine.
As with the restriction decoder, MWPM can be replaced with any algorithm for 2D toric code decoding.
For example, replacing MWPM with union-find~\cite{delfosse2017} would give our algorithm almost linear time complexity. 
We note that our decoder is an instance of the `decoding using emergent symmetries' paradigm, which underlies many topological code decoders~\cite{kubica2019,brown2020b,bonillaataides2021}. 
As in the color code, the emergent symmetries in our case are the products of all stabilizers of the same color. 

One example application for our decoder is color code-toric code lattice surgery~\cite{PoulsenNautrup2017}.
There can be advantages to preparing magic states in a color code before transferring them to a toric code~\cite{shutty2022finding}. 
This transfer is accomplished via lattice surgery, during which the color code and toric code will be temporarily merged to form an HCT code that must be decoded as part of the lattice surgery protocol. 
For our decoder to be directly applicable to this situation, we would need to modify our algorithm to work for codes with boundaries, but we expect that this modification would proceed in a similar way to the restriction decoder~\cite{chamberland2020}.

\begin{figure*}
	\centering
	\subfloat[]{
		\centering
		\includegraphics[width=0.32\linewidth,valign=t]{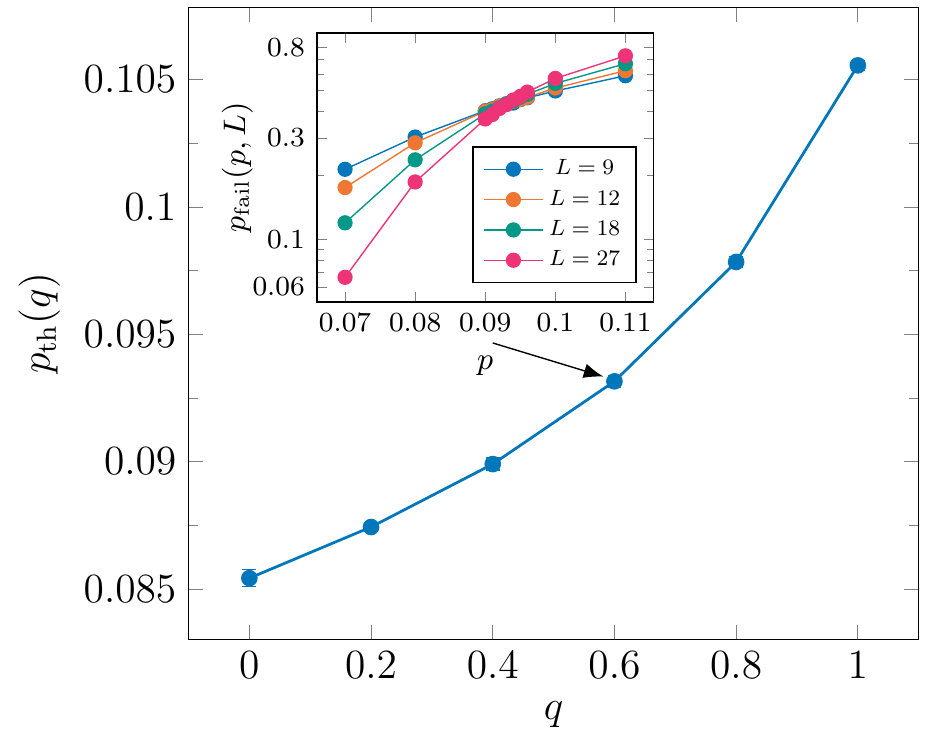}
		\label{sfig:pth_q_rr}
	}
	\subfloat[]{
		\centering
		\includegraphics[width=0.32\linewidth,valign=t]{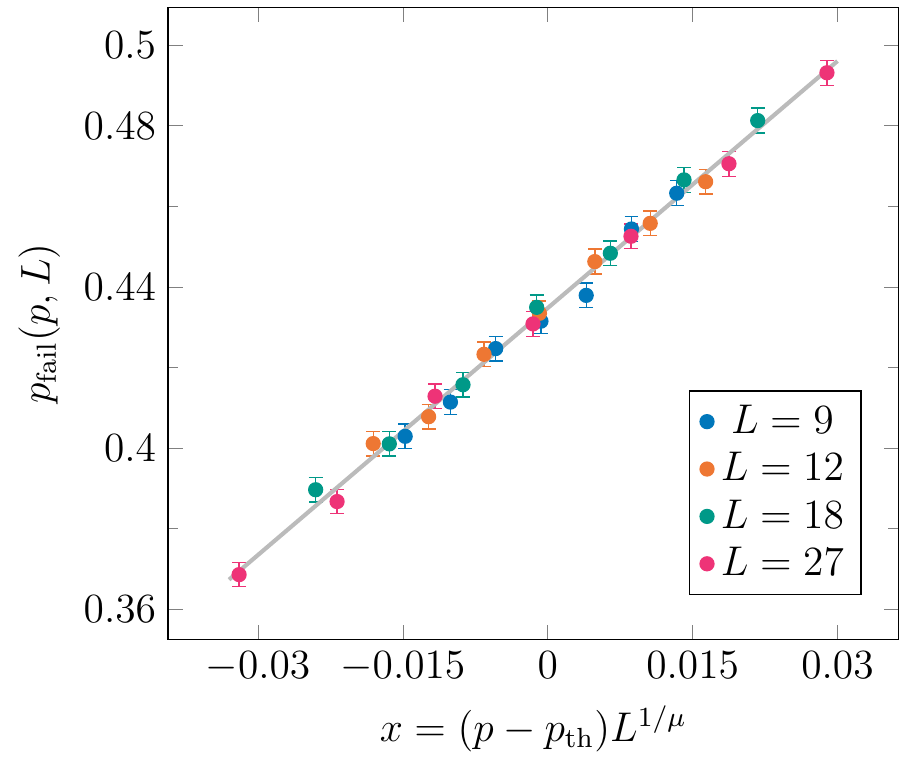}
		\label{sfig:collapse}
	}
	\subfloat[]{
		\centering
		\includegraphics[width=0.32\linewidth,valign=t]{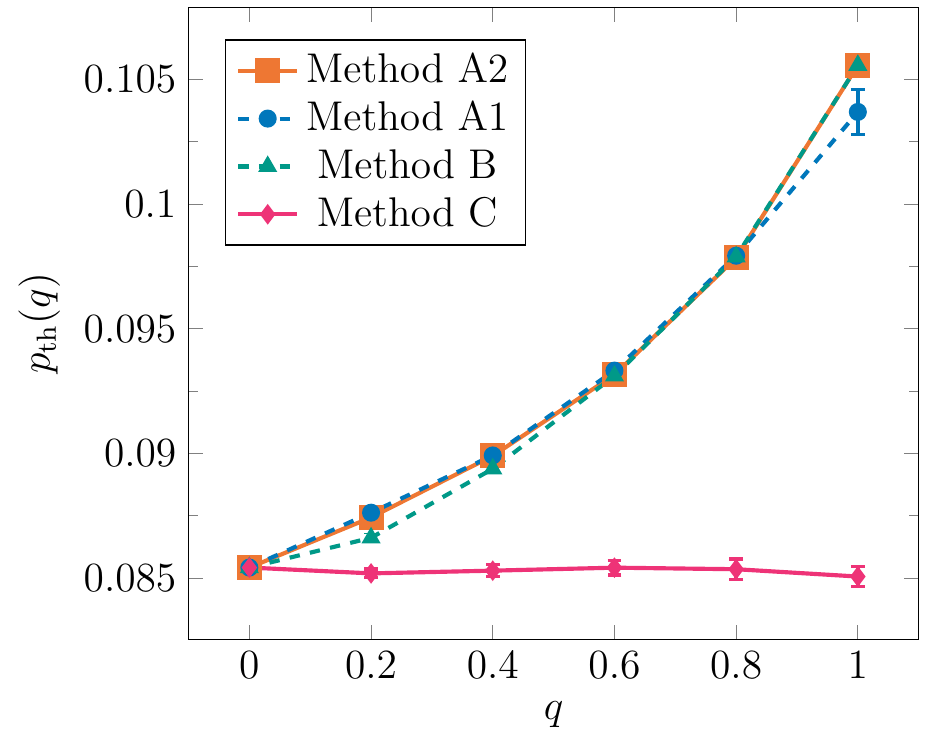}
		\label{sfig:ensemble_comparison}
	}
	\caption{
		(a) 
		A plot of the error threshold $p_{\mathrm{th}}$ as a function of $q$ for the ensemble of HCT codes produced using Method A1. 
		The inset shows a plot of the logical error rate $p_{\mathrm{fail}}$ as a function of the phase-flip error rate $p$ for $q=0.6$. 
		(b)
		The rescaled data for the inset in (a), using finite-size scaling analysis to estimate the threshold.
		The fitting parameters for the collapse are $p_{\mathrm{th}}=0.0932(1)$ and $\mu=1.42(8)$ and the grey line shows a quadratic fit.  
		(c)
		A comparison of the behavior of the error threshold as a function of $q$ for different methods of constructing HCT codes.
	}
	\label{fig:numerics}
\end{figure*}

\subsection{Numerics}
\label{subsec:num}

We investigate the performance of our decoding algorithm using Monte Carlo simulations. 
We begin with the family of 2D color codes defined on $L \times L$ triangular tilings where six triangles meet at each vertex.
We use periodic boundary conditions, i.e., we identify the coordinates $(x,y)=(x+L,y)=(x,y+L)=(x+L,y+L)$.
From this starting point, we explore three ways of constructing HCT codes. 

\begin{enumerate}
	\item \textbf{Method A:} 
	We choose $q \in [0,1]$ and construct a random HCT code by morphing each $r$ ball-like region with probability $q$.
	We consider both choosing the same logical Pauli basis for each region (A1), and choosing the basis randomly (A2). 
	\item \textbf{Method B:}
	We choose $q \in [0,1]$ and iterate through the $r$ ball-like regions, morphing with probability $q$.
	Next we iterate through the $g$ ball-like regions and morph with probability $q$, except if an overlapping region has already been morphed. 
	Lastly, we iterate through the $b$ ball-like regions and morph with probability $q$ (again unless an overlapping region has already been morphed).
	In all cases we choose the logical Pauli basis randomly.
	\item \textbf{Method C:}
	We choose $q \in [0,1]$ and put the vertices of the lattice into a random order.
	For each vertex in the list, with probability $q$ we morph the ball-like region centred at the vertex (choosing the logical Pauli basis randomly), except if an overlapping region has already been morphed.
\end{enumerate}

By definition, the parameter $q$ captures the proportion of morphed $r$ ball-like regions in Methods A1, A2 and B. 
However in Method C the parameter $q$ does not correspond directly to the proportion of morphed ball-like regions because of the non-overlapping condition.
Furthermore, in the $q=1$ limit Methods A1, A2 and B all produce toric codes with certainty, whereas Method C with high probability does not produce a toric code as this would require all the vertices of one color to occur at the start of the random ordering.

We investigate the performance of our decoder as a function of $q$ for each of the above methods.
In each case, we use an error model where iid $Z$ errors affect the qubits with probability $p$ and stabilizer measurements are perfect. 
We use Monte Carlo simulations to estimate the logical error rate $p_{\mathrm{fail}}$ as a function of $p$, $L$, and $q$; see \cref{sfig:pth_q_rr}.
For each value of $q > 0$, we randomly generate 100 HCT code lattices for each $L$, and for each value of $p$ we carry out 1000 Monte Carlo trials per lattice. 
For each value of $q$, we estimate the error threshold $p_{\mathrm{th}} (q)$ using finite-size scaling methods~\cite{harrington2004}, as illustrated in \cref{sfig:collapse}.

For Method A1, we reproduce the known color code error threshold of $p_{\mathrm{fail}} (0) \approx 0.085$~\cite{delfosse2014} and the known toric code threshold of $p_{\mathrm{fail}} (1) \approx 0.103$~\cite{wang2003}. 
The threshold monotonically increases from $q=0$ to $q=1$; see \cref{sfig:ensemble_comparison}. 
For Method A2 we observe an increased threshold of $p_{\mathrm{fail}} (1) \approx 0.105$; see \cref{sfig:ensemble_comparison}
This is not surprising, as by picking the logical Pauli basis randomly we are highly unlikely to produce the standard square lattice toric code for $q=1$.

For Method B, we observe similar results to the case where we morphed only $r$ ball-like regions using random logical Pauli bases.   
And for Method C, we find that the error threshold does not vary with $q$ and is approximately equal to the color code error threshold of $p_{\mathrm{fail}} (0) \approx 0.085$.
The data for these cases are shown in~\cref{sfig:ensemble_comparison}.

\section{Conclusion \label{sec:conc}}

In this manuscript we introduced morphing as a systematic method for modifying quantum codes.
By applying morphing to quantum Reed-Muller codes, we obtained a family of codes with fault-tolerant non-Clifford gates and applications in MSD. 
We found that in the case of the color code, morphing allows us to change the implementation of fault-tolerant logical gates, reduce the weight of stabilizers and the number of physical qubits (at the cost of also reducing the code distance), and simplify the decoding problem.
One can apply morphing to tetrahedral color codes~\cite{bombin2007a,bombin2015,bombin2018a} to construct HCT codes with a fault-tolerant logical $T$ gate. 
The \nkd{10}{1}{2} code is actually the smallest example of this family, which is thoroughly explored in~\cite{iverson2020a,iverson2022}.

We examined the decoding problem for 2D HCT codes and we detailed an efficient decoding algorithm based on existing toric and color code decoders.
Our decoding algorithm achieves good performance across the whole family and we expect that our approach should generalize to higher dimensions. 
A natural next step would be to study more realistic noise models that include measurement errors.
We expect that in this case standard techniques such as repeating the stabilizer measurements~\cite{raussendorf2007a,wang2010} would work for HCT codes.
In addition, it would be interesting to investigate the optimal error threshold of 2D HCT codes using tensor network decoders~\cite{bravyi2014a,chubb2021}.
We expect that the optimal threshold will interpolate between the optimal thresholds of the color code and the derived toric codes. 
This would potentially explain why the optimal threshold of the color code defined on the triangular lattice~\cite{katzgraber2009} matches the optimal threshold of the toric code defined on the square lattice~\cite{honecker2001,merz2002,ohzeki2009}, as the latter can be obtained from the former via morphing.

Morphing can be straightforwardly generalized to stabilizer subsystem codes~\cite{poulin2005}. 
We again select a subset of the physical qubits of the subsystem code and consider the code generated by all the stabilizers of the subsystem code that are fully supported within the subset. 
We remark that we cannot instead choose gauge operators supported within the subset as then some of the logical operators of the subsystem codes may not act as logical operators in the child code. 
We expect that it may be interesting to apply morphing to, e.g., the gauge color code~\cite{bombin2015,kubica2015} in order to define a HCT code family interpolating between the gauge color code and the recently introduced 3D subsystem toric code~\cite{kubica2021}.

Lastly, we expect that morphing could be fruitfully applied to triorthogonal codes~\cite{bravyi2012} and pin codes~\cite{vuillot2019a}, two families of codes that generalize color codes in different ways.  
We conjecture that morphing may allow us to improve the performance of MSD protocols based on triorthogonal codes (especially when mixing different input magic states), and to modify pin codes in much the same way as we did in the case of color codes. 

\section*{Acknowledgements}
M.V.\ thanks Dan Browne, Simon Burton and Chris Chubb for insightful discussions, and Raymond Laflamme for valuable comments on an earlier draft of this manuscript. 
A.K.\ acknowledges funding provided by the Simons Foundation through the ``It from Qubit'' Collaboration. This work was completed prior to A.K.\ joining the AWS Center for Quantum Computing.
Research at Perimeter Institute is supported in part by the Government of Canada through the Department of Innovation, Science and Economic Development Canada and by the Province of Ontario through the Ministry of Colleges and Universities.

\onecolumngrid

\appendix

\section{Morphing \texorpdfstring{$d$-dimensional}{d-dimensional} color codes \label{app:morph_ddim}}

In this appendix, we explain how to morph $d$-dimensional color codes. 
We begin by fixing the notation and reviewing the definitions of the toric code and the color code (\cref{subsec:prelim}).
Then, in \cref{subsec:ball_codes}, we examine the regions of the color code to which we apply our morphing procedure: colorable $d$-balls.
Finally, in \cref{subsec:hct_app} we introduce a family of HCT codes, which we construct by morphing the color code.

\subsection{Preliminaries \label{subsec:prelim}}

A $d$-dimensional lattice $\mathcal L$ can be constructed by attaching $d$-dimensional cells to one another along their $(d-1)$-dimensional faces. 
We denote the $k$-cells of a lattice $\mathcal L$ by $\mathcal L_k$.
We say that an $m$-cell $\mu \in \mathcal L_m$ is contained in a $k$-cell $\kappa \in \mathcal L_k$ and write $\mu \subseteq \kappa$, if the vertices of $\mu$ are a subset of the vertices of $\kappa$ (where $m \leq k$).
And we define the $n$-star of a $k$-cell $\kappa \in \mathcal L_k$ to be $\St_n(\kappa) = \{ \nu \in \mathcal L_n \mid \kappa \subseteq \nu \}$. 

The lattices we consider may have a boundary, denoted by $\partial \mathcal L$, which comprises all $(d-1)$-cells of $\mathcal L$ that are contained in a single $d$-cell, along with all $k$-cells contained in these $(d-1)$-cells, for $0 \leq k \leq d-2$. 
We denote the interior of a lattice by $\mathcal L^\circ = \mathcal L \setminus \partial \mathcal L$ and we denote the internal $k$-cells by $\mathcal L_k^\circ = \mathcal L_k \cap \mathcal L^\circ$. 

We often consider lattices whose cells are simplices and whose vertices are $(d+1)$-colorable, i.e., one can introduce a function 
\begin{equation}
	\mathrm{col} : \mathcal L_0 \rightarrow [d+1],
	\label{eq:color-f}
\end{equation}
such that for any two vertices $u$ and $v$ sharing an edge, $\col{u} \neq \col{v}$, and we use the shorthand $[n] = \{ 1,2,\ldots, n \}$.
Such a lattice is called a $d$-colex~\cite{bombin2007}.
We abuse the notation and write $\col{\kappa} = \bigcup_{v \subseteq \kappa} \col{v}$ for some $k$-simplex $\kappa$, i.e., the color of $\kappa$ is the set of the colors of its vertices.
We denote the subset of $k$-cells in $\mathcal L$ of color $c$ by $\mathcal L_k^c = \{ \kappa \in \mathcal L_k \mid \col{\kappa} = c \}$.
Given two simplices $\mu \in \mathcal L_m$ and $\nu\in \mathcal L_n$, we define their join $\mu \ast \nu$ to be the smallest simplex in $\mathcal L$ that contains them both. 
If no such simplex exist, then we define their join to be empty.

Anticipating the definition of the color code, we now introduce generalized boundary operators.
Given a lattice $\mathcal L$, let $C_k (\mathcal L)$ be a vector space over $\mathbb F_2$ with a basis given by the $k$-cells of $\mathcal L$. 
Subsets of $k$-cells are consequently isomorphic to vectors in $C_k (\mathcal L)$.
For all $k \neq n$, we define the generalized boundary operator $\partial_{k,n}$ as a linear map specified for every basis element $\kappa \in \mathcal L_k$ as
\begin{equation}
	\partial_{k,n} \kappa = 
	\begin{cases}
		\sum_{\nu \subseteq \kappa} \nu, & \text{for } k > n, \\
		\sum_{\nu \supseteq \kappa} \nu, & \text{for } k < n.
	\end{cases}
	\label{eq:gen_bound}
\end{equation}
We note that the `standard' boundary operator $\partial_k = \partial_{k, k-1}$ is a special case of the generalized boundary operator. 

\subsubsection{Toric codes and color codes \label{subsec:toric_color_def}}

We now review the definition of the toric code~\cite{kitaev1997a,kitaev2003,dennis2002} in $d$-dimensions, where $d \geq 2$. 
Let $\mathcal L$ be a $d$-dimensional lattice with boundary $\partial \mathcal L$.
We divide $\partial \mathcal L$ into two subsets, the `rough' boundaries $\partial \mathcal L^{(R)}$ and the `smooth' boundaries $\partial \mathcal L^{(S)}$.
We place a qubit on each edge in $\mathcal L_1 \setminus \partial \mathcal L^{(R)}_1$.
The stabilizer of the code is
\begin{equation}
	\mathcal S = 
	\langle
	X (v), Z (f) \mid v \in \mathcal L_0 \setminus \partial \mathcal L^{(R)}_0, f \in \mathcal L_2 \setminus \partial \mathcal L^{(R)}_2
	\rangle,
	\label{eq:tc_stab}
\end{equation}
where $X(v) = \prod_{e \supseteq v} X_e$, $Z(f) = \prod_{e \subseteq f} Z_e$, $Z_e$ denotes a Pauli $Z$ operator acting on the qubit on edge $e$, and $X_e$ denotes a Pauli $X$ operator acting on the qubit on edge $e$. 
For a face $f$ adjacent to a rough boundary, the set of edges contained in $f$ may include edges with no associated qubits. 
In this case we define $Z_e$ to be the identity.

The other important family of codes for us are color codes~\cite{bombin2006,bombin2007,kubica2015a}.
Let $\mathcal L$ be a $(d\geq 2)$-colex, with boundary $\partial \mathcal L$. 
The color code defined on $\mathcal L$ has qubits on $d$-cells, $X$-type stabilizer generators associated with internal vertices, and $Z$-type stabilizer generators associated with internal $(d-2)$-simplices. 
That is, the stabilizer group of the code is 
\begin{equation}
	\mathcal S =
	\langle
		X (v), Z (\mu) \mid v \in \mathcal L_0^\circ, \mu \in \mathcal L_{d-2}^\circ
	\rangle,
	\label{eq:cc_stab}
\end{equation}
where $U (\kappa) = \prod_{\delta \in \St_d (\kappa)} U_\delta$ and $U_\delta$ means $U$ applied to the qubit on the $d$-simplex $\delta \in \mathcal L_d$. 
This notation clashes with the notation used for toric codes above, but it should always be clear from context what kind of code we are referring to.

\subsection{Ball codes \label{subsec:ball_codes}}

Let $\mathcal L$ be a $d$-colex and pick a vertex $v \in \mathcal L_1$. 
We define the colorable $d$-ball $\mathcal B^v$ centred at $v$ to be 
\begin{equation}
	\mathcal B^v = \bigcup_{k=0}^{d} \{ \kappa \in \mathcal L_k \mid \kappa \supseteq v \},
	\label{eq:ball_def}
\end{equation}
i.e., $\mathcal B^v$ is the union of all $k$-simplices in $\mathcal L$ that contain $v$. 
We define the boundary of $\mathcal B^v$ as follows,
\begin{equation}
	\partial \mathcal B^v = \bigcup_{k=0}^{d} \{ \kappa \in \mathcal L_k \setminus \mathcal B^v_k \mid \exists \delta \in \mathcal B^v_d : \delta \supseteq \kappa \}.
	\label{eq:ball_bound_def}
\end{equation}
Namely, $\partial \mathcal B^v$ consists of all the $k$-simplices of $\mathcal L$ that are contained in a $d$-simplex of $\mathcal B^v$, but are not themselves contained in $\mathcal B^v$.
We use $\mathcal B^v_k$ to denote the set of $k$-simplices contained in $\mathcal B^v$, with the obvious extension to $\partial \mathcal B^v$.
And we define the color of a colorable $d$-ball $\mathcal B^v$ to be $\col{v}$.
\Cref{fig:ball-ex} shows some example colorable $d$-balls in 2D and 3D.

\begin{figure}
	\centering
	\subfloat[]{
		\centering
		\includegraphics[width=0.15\linewidth]{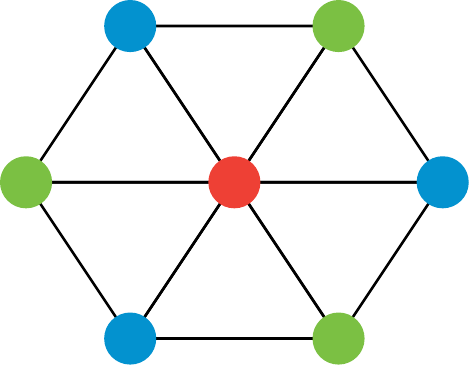}
		\label{sfig:2ball}
	}
	\subfloat[]{
		\centering
		\includegraphics[width=0.15\linewidth]{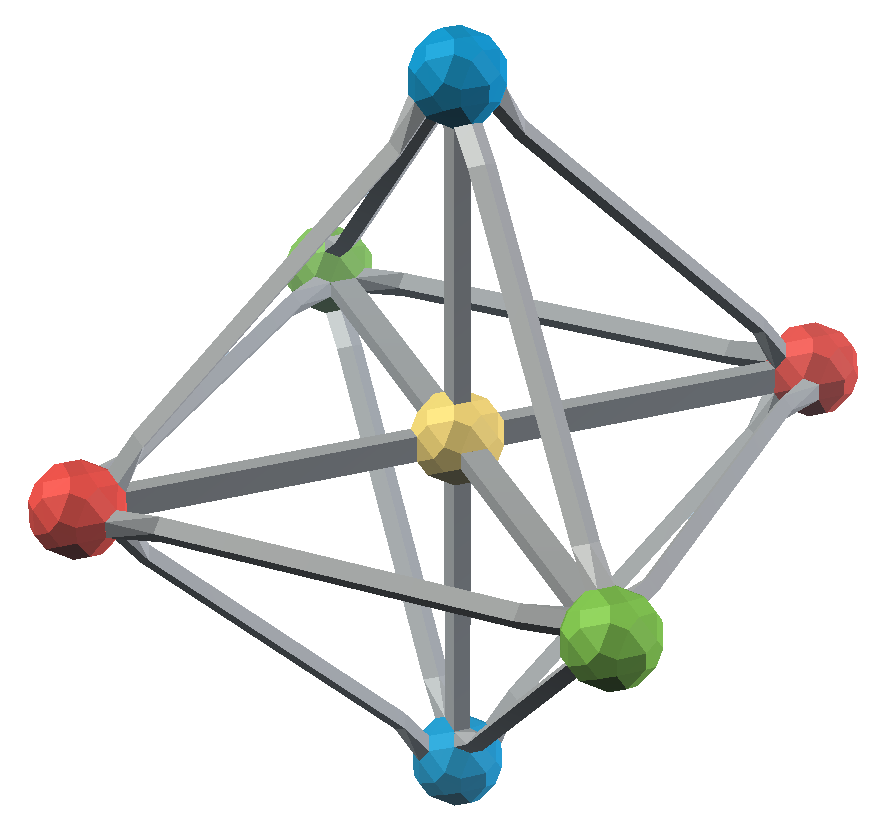}
		\label{sfig:3ball}
	}
	\caption{
		(a) A colorable $2$-ball and its boundary. 
		The colorable $2$-ball consists of the central $r$-vertex and all simplices containing it. 
		The boundary consists of the $b$- and $g$-vertices, alongside the edges connecting them.
		(b) A 3D colorable $3$-ball and its boundary.
		The colorable $3$-ball consists of the central yellow vertex and all simplices containing it.
		The boundary consists of the $r$-, $g$- and $b$-vertices, alongside the triangles and edges whose vertices are subsets of these vertices. 
	}
	\label{fig:ball-ex}
\end{figure}

When we morph a color code, we pick the subset of qubits $R$ (see \cref{sec:morphing}) to be the qubits contained in a colorable $d$-ball $\mathcal B^v$. 
The corresponding code has the stabilizer
\begin{equation}
	\mathcal S_{\mathcal B^v} = \langle 
		X(v), Z(\mu) \mid \mu \in \mathcal B^v_{d-2}
	\rangle.
	\label{eq:ball_stabs}
\end{equation}
This is exactly the stabilizer group of a color code defined on a $d$-colex $\mathcal B^v \cup \partial B^v$ with boundary $\partial B^v$ (recall \cref{eq:cc_stab}).
We call such a code a `ball code'.
We now come to our first lemma about ball codes.

\params*

\begin{proof}
	As the ball code defined on $\mathcal B_v$ is a color code, we immediately have $N = |\mathcal B^v_d|$.

	To find $K$, we first observe that one can define a $(d-1)$-dimensional color code on $\partial \mathcal B^v$, which we call the boundary code. 
	In the boundary code, qubits are on $(d-1)$-simplices, $X$-type stabilizers are associated with vertices, and $Z$-type stabilizers are associated with $(d-3)$-simplices. 
	There is an isomorphism between the $k$-simplices of $\mathcal B^v$ and the $(k-1)$-simplices of $\partial \mathcal B^v$, for $1 \leq k \leq d$. We have $\kappa \cong \tau$ for $\kappa \in \mathcal B^v_k$ and $\tau \in \partial \mathcal B^v_{k-1}$, if $\kappa = \tau \ast v$. 
	Furthermore, the $Z$-type stabilizers of the ball code and the boundary code have the same support, because $\St_d (\kappa) = \St_d (\tau)$ for $\kappa = \tau \ast v$.
	In particular, independent $Z$-type stabilizer generators of the ball code correspond to independent $Z$-type stabilizer generators of the boundary code.

	As the boundary code is a color code defined on a $(d-1)$-colex that is homeomorphic to a $(d-1)$-sphere, it encodes no logical qubits. 
	Let $\mathcal S^Z_{\mathcal B^v}$ and $\mathcal S^Z_{\partial \mathcal B^v}$ denote the $Z$-type stabilizer groups of the ball and boundary codes, respectively (and analogously for the $X$-type stabilizer groups). 
	We have
	\begin{equation}
		N = \rank \mathcal S^Z_{\partial \mathcal B^v} + \rank \mathcal S^X_{\partial \mathcal B^v}  
		= \rank \mathcal S^Z_{\mathcal B^v} + \rank \mathcal S^X_{\mathcal B^v} + K.
		\label{eq:k_Bv_1}
	\end{equation}
	Solving for $k$, we obtain
	\begin{equation}
		K = \rank \mathcal S^X_{\partial \mathcal B^v} - 1,
		\label{eq:k_Bv_2}
	\end{equation}
	as $\rank \mathcal S^Z_{\mathcal B^v} = \rank \mathcal S^Z_{\partial \mathcal B^v}$ and $\rank \mathcal S^X_{\mathcal B^v} = 1$.

	The $X$-type stabilizer generators of the boundary code are associated with the vertices of $\partial \mathcal B^v$, and every $(d-1)$-simplex of $\partial \mathcal B^v$ contains exactly $d$ vertices, each with a different color. 
	Consequently, we have $(d-1)$ independent relations between subsets of $X$-type stabilizer generators, namely
	\begin{equation}
		\prod_{\substack{u \in \mathcal B^v_0 \\ \col{u} = 2}} X(u) =
		\prod_{\substack{u \in \mathcal B^v_0 \\ \col{u} = 3}} X(u) =
		\ldots =
		\prod_{\substack{u \in \mathcal B^v_0 \\ \col{u} = d+1}} X(u),
		\label{eq:relations}
	\end{equation}
	where we assume without loss of generality that $\col{v} = 1$.
	This gives $\rank S^X_{\partial \mathcal B_v} = |\partial \mathcal B^v_0| - (d-1)$ and substituting into \cref{eq:k_Bv_2} gives the desired result.

	Finally, we come to the code distance $D$.
	As ball codes are CSS codes~\cite{calderbank1996,steane1996a}, we need only consider logical operators that are tensor products of $X$ and $I$ or tensor products of $Z$ and $I$. 
	We note that $X(e)$ is a logical $X$ operator of the ball code, where $e \in \mathcal B^v_1$.
	This operator is clearly not a stabilizer, but it commutes with the stabilizers of the ball code.
	To see this, consider a $Z$-type stabilizer generator $Z(\mu)$ where $\mu \in \mathcal B^v_{d-2}$.
	The overlap of $X(e)$ and $Z(\mu)$ (the qubits they both act non-trivially on) is given by $\St_d (\mu \ast e)$.
	The join of $\mu$ and $e$ is either empty, equal to $\mu$ (if $e \subseteq \mu$), or equal a $(d-1)$-simplex of $\mathcal B^v$. 
	In each case, $|\St_d (\mu \ast e)| = 0 \mod 2$; see the (Even Support) Lemma 4 in Ref.~\cite{kubica2015}. 

	The ball code has a single $X$-type stabilizer acting on all its physical qubits, so any weight-two $Z$ operator commutes with the stabilizers of the ball code.
	Consider the operator $Z(\lambda)$, where $\lambda \in \mathcal B^v_{d-1}$.
	$Z(\lambda)$ has weight two as each $(d-1)$-simplex of $\mathcal B^v$ is contained in exactly two $d$-simplices.
	For any such operator, we can always find a logical $X$ operator $X(e)$ such that $e \ast \lambda = \delta \in \mathcal B^v_D$, i.e., $Z(\lambda)$ and $X(e)$ anticommute. 
	Therefore, $Z(\lambda)$ is a logical $Z$ operator and consequently the ball code has a $Z$-distance of two. 
	The $X$-distance is also at least two as each qubit is in the support of two or more $Z$-type stabilizers. 
\end{proof}

We now specify a canonical generating set for the logical Pauli operators of a ball code defined on $\mathcal B^v$.
Following on from the proof of \cref{lem:ball_code_params}, we choose the following generating set for the logical $X$ operators, 
\begin{equation}
	\mathcal X = \bigcup_{c \in [d+1] \setminus \col{v}} \{ X(e) \mid e \in \mathcal B^v_1 \setminus e_c^* : \col{e} = c \},
	\label{eq:ball_x_logicals}
\end{equation}
where $e_c^*$ is some edge of color $c$ that we are free to choose.
Namely, our generating set consists of operators $X(e)$ for each edge $e$ in $\mathcal B^v$, except for one edge, $e_c^*$, of each color $c \neq \col{v}$. 
We choose each operator $X(e)$ in our generating set to act on a single logical qubit, and we define the color of a logical qubit to be the color of its associated edge. 
This implies that the operator $X(e_c^*)$ implements a logical $X$ on all logical qubits of color $c$.  
The set $\mathcal Z = \{ Z(\lambda) \mid \lambda \in \mathcal B^v_{d-1} \}$ generates the logical $Z$ operators of the ball code (see \cref{lem:ball_code_params}).
The action of these operators is fixed via commutation. 
Concretely, the logical operator $Z(\lambda)$ acts non-trivially on the logical qubits with associated logical operators $X(e)$ such that $|\St_d (\lambda) \cap \St_d (e)| = 1$ (or equivalently $\lambda \ast e = \delta \in \mathcal B^v_d$).

We now examine some examples of ball codes. 
The ball code defined on the colorable $2$-ball shown in \cref{sfig:2ball} is a \nkd{6}{4}{2} code.
And the ball code defined on the colorable $3$-ball shown in \cref{sfig:3ball} is an \nkd{8}{3}{2} code. 
This code is exactly the so-called `smallest interesting color code'~\cite{kubica2015,campbellblog}, and it is part of a family of color codes, which we call hyperoctahedron codes.

\begin{example}[Hyperoctahedron codes]
	A hyperoctahedron (also called a cross-polytope) is the $d$-dimensional generalization of the octahedron and can be constructed as follows.  
	In $d$-dimensional Euclidean space, place vertices at all the permutations of $(\pm 1, 0, 0, \ldots, 0)$.
	The $d$-hyperoctahedron is the convex hull of these vertices. 
	We remark that the $k$-cells of the $d$-hyperoctahedron are $k$-simplices, for $0 \leq k < d$.
	Let $\mathcal O$ denote the $d$-hyperoctahedron, which we transform into a colorable $d$-ball as follows.  
	First we create a vertex $v$ in the centre of $\mathcal O$.
	Next, for each $k$-simplex $\kappa$ of $\mathcal O$ we add the $k$-simplex $\kappa \ast v$, for $1 \leq k \leq d$.
	The result is a colorable $d$-ball, with $\mathcal B^v$ comprising the simplices that contain $v$ and $\partial \mathcal B^v$ consisting of the original simplices of $\mathcal O$. 
	The hyperoctahedron code is then simply the ball code defined on $\mathcal B_v$.
	Hyperoctahedron codes generalize the \nkd{4}{2}{2} code~\cite{knill2005} and the aforementioned \nkd{8}{3}{2} code (the dual of a $d$-hyperoctahedron is a $d$-hypercube). 
	The parameters of the $d$-hyperoctahedron code are $\nkd{2^d}{d}{2}$, as can be derived from \cref{eq:ball_code_params_ddim} and the fact that a $d$-hyperoctahedron has $2d$ vertices and $2^d$ $(d-1)$-cells.
	We note that for $d$-hyperoctahedron codes there is no freedom in the choice of canonical logical Pauli basis, as there is only one logical qubit of each color.
\end{example}

\begin{example}[3D ball codes]
	One can construct colorable 3-balls by taking the dual of Archimedean solids whose faces all have an even number of sides.
	The polyhedra satisfying these restrictions are the truncated octahedron, the truncated cuboctahedron, and the truncated icosidodecahedron. 
	The corresponding ball codes have parameters \nkd{24}{11}{2} (this is the code shown in \cref{fig:3d_ball_ex}), \nkd{48}{23}{2} and \nkd{120}{59}{2}, respectively.
	Each of these codes will have logical $CCZ$-type gates (see \cref{thm:ckz}) implemented by physical $T^{\pm 1}$ gates, and could therefore be used to distill magic states. 
	The associated MSD protocol has asymptotic average overhead $O(\log^{\gamma}(1/\epsilon))$ as the output error $\epsilon \rightarrow 0$, where  $\gamma = \log(n/k)/\log(d)$. 
	The ball codes described above have $\gamma$ values close to one, especially the \nkd{120}{59}{2} code where $\gamma = 1.02$. 
	This is significant as it was previously conjectured that $\gamma \geq 1$ for all MSD protocols~\cite{bravyi2012}, and the only known protocols with $\gamma < 1$ require either many ($\approx 2^{58}$) qubits~\cite{hastings2018} or large ($41$) local qudit dimension~\cite{krishna2019a}.
\end{example}

\subsection{Hybrid color-toric codes \label{subsec:hct_app}}

Using our morphing procedure, we construct a family of HCT codes. 
Namely, we start with a color code and morph a subset of its colorable $d$-balls to obtain an HCT code.
We note that we use the phrase `morphing a colorable $d$-ball' as a shorthand for morphing a color code with the subset $R$ of the physical qubits given by the qubits contained in a colorable $d$-ball. 
The HCT code family contains the color code and copies of the toric code, but also contains codes with regions of toric code and regions of color code. 
We now explain how the parameters of an HCT code change when we morph a colorable $d$-ball $\mathcal B_v$, which allows us to derive the parameters of any HCT code (by starting from the parent color code).
Suppose that the initial HCT code has parameters \nkd{N}{K}{D}.
Then after morphing the colorable $d$-ball $\mathcal B_v$, the resultant code has parameters
\begin{equation}
\begin{split}
	&N' = N - |\mathcal B^v_d| + |\mathcal B^v_1| - d, \\
	&K' = K, \\
	&D' \geq D - \max_{e \in \mathcal B^v_1} |\St_d(e)| + 1. 
	\label{eq:param_change_ball}
\end{split}
\end{equation}
We now briefly justify \cref{eq:param_change_ball}.
The value for $N'$ follows directly from \cref{lem:ball_code_params}.
As we discussed in \cref{sec:morphing}, the number of encoded qubits remains the same as we lose the same number of physical qubits and independent stabilizer generators.
Finally, the code distance decreases because some logical operators (acting on two or more physical qubits) of the ball code will be mapped to single-qubit operators by morphing. 
The highest-weight generator of the logical Pauli group of the ball code defined on $\mathcal B^v$ is the logical $X$ operator of highest weight (the $Z$ operators are weight two), i.e., $X(e^*)$ where $e^* = \argmax_{e \in \mathcal B^v_1} |\St_d(e)|$.
Importantly, for color codes $\max_{e \in \mathcal B^v_1} |\St_d(e)|$ is a constant that does not scale with the size of the code, so the code distance remains macroscopic.

In the remainder of this section, we introduce a geometric picture of morphing and we show that the HCT code family contains toric codes. 

\subsubsection{Geometric picture of morphing \label{subsec:geom_pic}}

We begin by introducing a canonical geometric picture of morphing, which will enable us to see the connection to toric codes more clearly. 
Let $\mathcal B^v$ be a colorable $d$-ball of a $d$-colex $\mathcal L$, and assume that $\col{v} = d+1$.
When morphing $\mathcal B^v$, we represent our choice of logical Pauli basis pictorially by removing $\mathcal B_v$ from $\mathcal L$ and creating new edges linking vertices of the same color in $\partial \mathcal B^v$. 
Let $u \in \partial \mathcal B^v_0$ be the endpoint of the edge $e_c^*$ (recall \cref{eq:ball_x_logicals}) on the boundary of $\mathcal B^v$.
We add edges connecting $u$ to all the other vertices of color $c$ on the boundary of $\mathcal B^v$, and we place physical qubits on these new edges. 
We refer to these new qubits as edge qubits and we refer to the original color code qubits as simplex qubits.
We use $\widetilde{\mathcal L}$ to denote the HCT code lattice produced from $\mathcal L$ according to the above procedure.  
Also, we note that we must modify the generalized boundary operators to reflect the changes to the lattice structure; see \cref{app:local_lift} for a concrete example of this.

This geometric picture enables us to continue to associate the $X$-type stabilizer generators of an HCT code with vertices.
Concretely, the $X$-type stabilizer associated with a vertex $u \in \widetilde{\mathcal L}_0$ is
\begin{equation}
	X(u) = \prod_{\delta \supseteq u} X_{\delta} \prod_{\substack{e \supseteq u \\ \col{e} = \{c,c\}}} X_e,
	\label{eq:geom_pic_X_stab}
\end{equation} 
where $c = \col{u}$, $\delta \in \widetilde{\mathcal L}_d$, and $e \in \widetilde{\mathcal L}_1$ has an associated edge qubit. 
This last point is guaranteed by the condition that $\col{e} = \{c,c\}$, as, unlike the edges of $\mathcal L$, the edges created during morphing connect vertices of the same color. 
\Cref{fig:stab_morph} shows the intersection of a color code $X$-type stabilizer with a colorable $3$-ball. 

\begin{figure*}
	\centering
	\subfloat[]{
		\centering
		\includegraphics[width=0.2\linewidth]{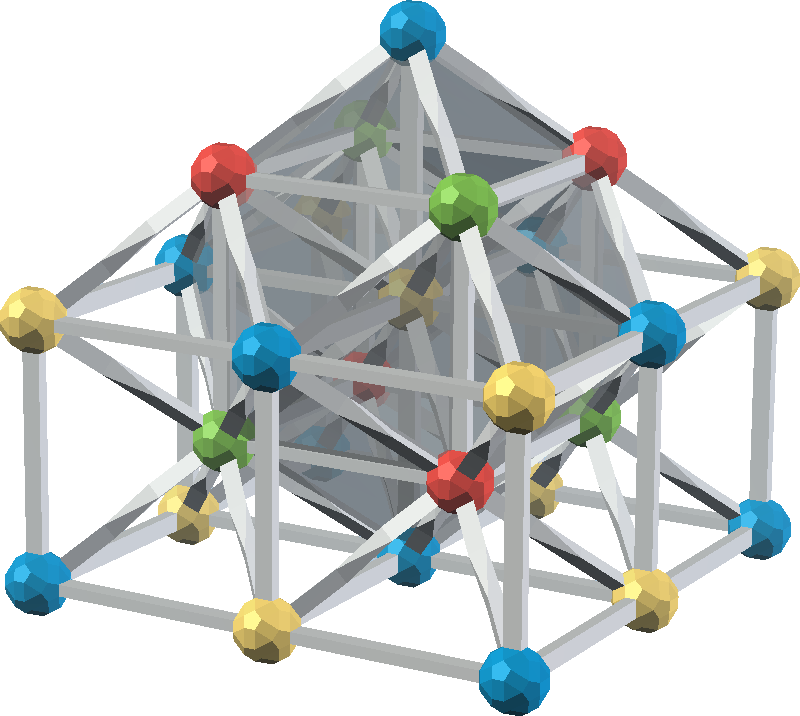}	
		\label{sfig:3d-morph-start}
	}
	\subfloat[]{
		\centering
		\includegraphics[width=0.2\linewidth]{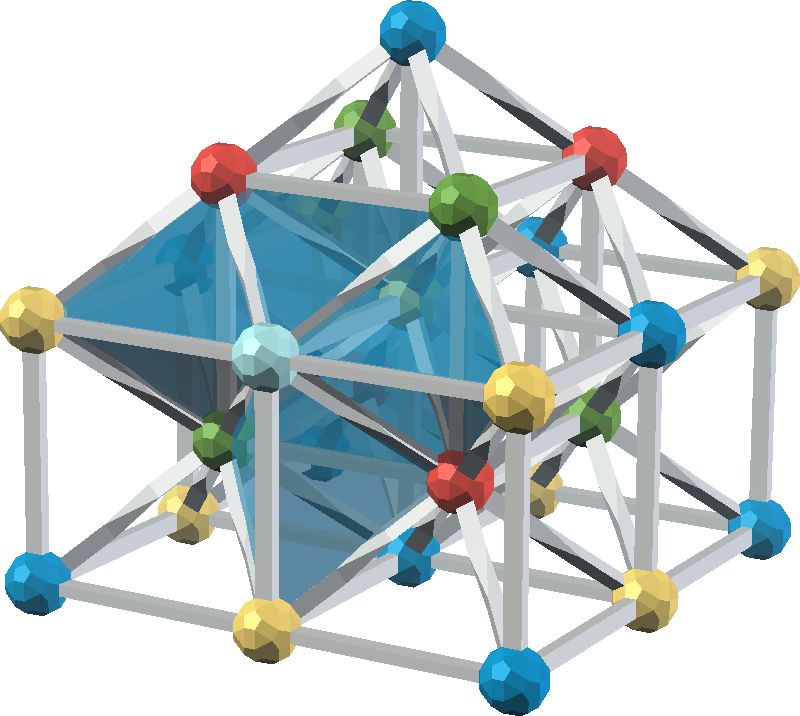}	
	}
	\subfloat[]{
		\centering
		\includegraphics[width=0.2\linewidth]{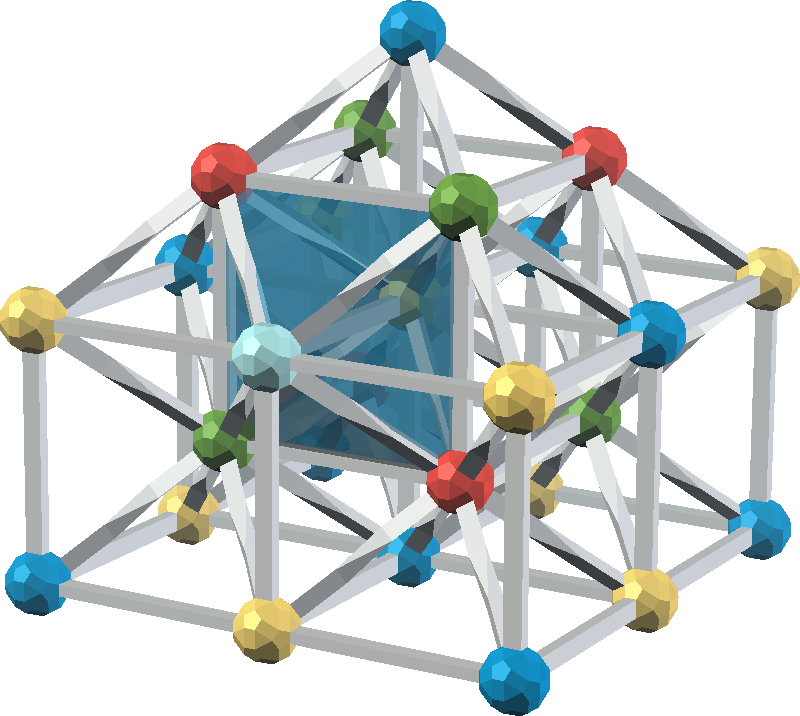}	
	}
	\caption{
		(a) We show part of a $3$-colex with a colorable $3$-ball highlighted in grey.
		(b) We highlight (shaded blue tetrahedra) the support of the $X$-type stabilizer associated with the light blue vertex. 
		(c) We show the intersection of the support of the stabilizer (b) with the colorable $3$-ball (a). 
		These qubits comprise the support of the ball code logical $X$ operator that is associated with the edge linking the light blue vertex and the central yellow vertex. 
	}
	\label{fig:stab_morph}
\end{figure*}

The $Z$-type stabilizers of a HCT code can also be understood using our geometric picture, though the definition is more obscure.
We continue to associate HCT code $Z$-type stabilizer generators with $(d-2)$-cells, as follows
\begin{equation}
	Z(\mu) = \prod_{\delta \supseteq \mu} Z_{\delta} 
	\prod_{\substack{e = (u,v) \\ u \in \partial_{d-1,0}\partial_{d-2,d-1}\mu \\ \col{e}=\{c,c\}}}
	Z_e,
	\label{eq:geom_pic_Z_stab}
\end{equation}
where $\delta \in \widetilde{\mathcal L}_D$, and $c = \col{u}$. 
To unpack \cref{eq:geom_pic_Z_stab}, we note that $\partial_{d-2,d-1}\mu$ comprises the $(d-1)$-simplices of $\widetilde{\mathcal L}$ that contain $\mu$ and $\partial_{d-1,0}\partial_{d-2,d-1}\mu$ consists of the vertices contained in an odd number of these $(d-1)$-simplices.

\subsubsection{Toric codes as HCT codes \label{subsec:toric_hct}}

In this section, we sketch an argument that the HCT code family contains copies of the toric code. 
Let $\mathcal L$ be a $d$-colex defined on a closed manifold. 
Suppose we morph every colorable $d$-ball of color $r=d+1$ in $\mathcal L$.
In the resultant HCT code no stabilizer generators will be associated with $r$-vertices and the only physical qubits will be edge qubits, as all simplex qubits will have been removed during morphing. 
The $X$-type stabilizers of the HCT code can be partitioned into disjoint subgroups
$\langle
	X(v) \mid v \in \widetilde{\mathcal L}_0 : \col{v} = c
\rangle$, 
for $c \in [d]$.
This is because each generator $X(v)$ now acts exclusively on edge qubits of color $\{c,c\}$.

As there are no remaining $r$-vertices, given a $(d-2)$-cell $\mu \in \widetilde{\mathcal L}_{d-2}$, the color of any $(d-1)$-cell $\lambda \in \widetilde{\mathcal L}_{d-1}$ such that $\lambda \supseteq \mu$ is fixed. 
Therefore, the $Z$-type stabilizers of the HCT code can also be partitioned into disjoint subgroups indexed by $c \in [d]$, where the operators in each subgroup only act on qubits of color $\{c,c\}$. 
To summarize, for $c \in [d]$, the stabilizer group of the HCT code can be partitioned into $d$ disjoint subgroups of the form 
\begin{equation}
	\langle X(v), Z(\mu) \mid v \in \widetilde{\mathcal L}_0, \mu \in \widetilde{\mathcal L}_{d-2} : \col{v} = c, \col{\mu} = [d] \setminus \{ c \} \rangle.
	\label{eq:stab_partition}
\end{equation} 

When restricted to the $cc$-edges of $\widetilde{\mathcal L}$, the $X$-type stabilizer generators of the $c$-subgroup have the form $X(v) = \prod_{e \supseteq v} X_e$, i.e., they are toric code $X$-type stabilizers (recall \cref{eq:tc_stab}).
Now consider some $Z$-type stabilizer $Z(\mu)$ in the $c$-subgroup. 
Recalling \cref{eq:geom_pic_Z_stab}, we first write $\partial_{d-2,d-1} \mu = \{ \lambda_1, \lambda_2, \ldots, \lambda_m \}$.
In $\mathcal L$, each $\lambda_i$ must have been contained in the boundary of two colorable $d$-balls of color $d+1$, otherwise the $\lambda_i$ would have been removed during morphing. 
Therefore, $\lambda_i = \mu \ast u_i$, where $\col{u_i}=c$. 
Suppose that we permute our labelling of the $\lambda_i$ such that, in $\mathcal L$, $\lambda_i$ and $\lambda_{i+1}$ (evaluated modulo $m+1$) were on the boundary of the same colorable $d$-ball $\mathcal B^v$, where $\col{v}=r$.  
Then either $(u_i, u_{i+1})$ is in the support of $Z (\mu)$ or $(u_i, w)$ and $(w, u_{i+1})$ are in the support of $Z(\mu)$ for some $w$ that was also contained in $\partial \mathcal B^v$.
Consequently, the support of $Z(\mu)$ is a cycle of $cc$-edges, i.e., it has the form of a toric code $Z$-type stabilizer (recall \cref{eq:tc_stab}).

In 2D we can see that the HCT code family contains toric codes by inspection, e.g., by applying the transformation shown in \cref{fig:morph} to all $r$ colorable $2$-balls. 
In higher dimensions generic examples become more complex, excepting a family of HCT codes derived from color codes with special structure, which we examine in \cref{app:coxeter}. 

\section{Non-Clifford gates in \texorpdfstring{$d$-dimensional}{d-dimensional} HCT codes \label{app:non_cliff_ddim}}

Color codes have a multitude of transversal logical gates implemented $R_d$ gates, including non-Clifford gates in $d \geq 3$ dimensions~\cite{bombin2006,bombin2007a,bombin2015,kubica2015,kubica2015a,watson2015,bombin2018a}.
To describe the implementation of these gates, we note that the qubits of a color code are bipartite in the following sense. 
We can divide the $d$-simplices of a $d$-colex into two disjoint sets, such that any two $d$-simplices that share a $(d-1)$-simplex are in different sets.
We denote these sets by $T$ and $T^c$, and for a single-qubit unitary $U$ we write $U(T)$ to denote the application of $U$ to all of the qubits in $T$.
Every $d$-dimensional color code has a transversal logical gate implemented by $\widetilde R_d = R_d(T)R_d^\dagger(T^c)$.
The logical action of this gate depends on the topology of the colex.
For example, in tetrahedral color codes $\widetilde R_3$ implements a logical $R_3$ gate~\cite{bombin2007a,bombin2015,bombin2018a}. 
And in the family of hypercubic color codes (see \cref{fig:hypercubic}), $\widetilde R_d$ implements the logical $d$-qubit $MCZ$ gate~\cite{kubica2015}, which we denote by $C_d$. 

HCT codes inherit the logical gates of their parent color codes. 
To find the implementation of a non-Clifford gate in an HCT code, we investigate how morphing changes the implementation of the corresponding color code logical gate. 
This entails understanding the transversal logical gates of ball codes, which is the topic of the next section.

\subsection{\texorpdfstring{$MCZ$}{MCZ} gates in ball codes \label{subsec:mcz_ball}}

As ball codes are color codes, they have transversal logical gates that are implemented by $R_d$ gates.
Any unitary can be expanded in the Pauli basis, so in principle we can compute the logical action of any transversal gate in a ball code by expanding the transversal gate in the Pauli basis and then grouping terms by logical Pauli.
However, this is cumbersome to do in practice so instead we characterize the transversal logical gates of the ball code family.

First we introduce some notation. 
Let $\kappa \in \mathcal B_k^v$ be a $k$-simplex of a colorable $d$-ball $\mathcal B_v$, where $0 \leq k \leq D$.
We write
\begin{equation}
	\widetilde U(\kappa) = \prod_{\delta \in \St_d (\kappa) \cap T} U(\delta) \prod_{\delta \in \St_d (\kappa) \cap T^c} U^\dagger (\delta).
	\label{eq:tilde_U_kcell}
\end{equation}
The group commutator of two unitary operators $A$ and $B$ is defined to be $K[A, B] = A B A^\dagger B^\dagger$.
We note the identities $K[R_k, X] = R_{k-1}$ and $K[C_k, X] = C_{k-1}$.

\begin{theorem}
	Let $\mathcal B^v$ be a colorable $d$-ball with an associated ball code, where $d \geq 2$, and let $\kappa \in \mathcal B^v_{d-k}$ be a $(d-k)$-simplex of $\mathcal B_v$, where $k \in [d]$.
	The operator $\widetilde R_k (\kappa)$ implements a logical $C_k$ gate on all $k$-tuples of logical qubits whose corresponding edges $e_j \in \mathcal B^v_1$ have the following property:
	\begin{equation}
		e_1 \ast \ldots \ast e_k \ast \kappa = \delta \in \mathcal B^v_d, 
		\label{eq:join_cond}
	\end{equation}
	i.e., the join of the edges and $\kappa$ is a $d$-simplex of $\mathcal B^v$.
	\label{thm:ckz}
\end{theorem}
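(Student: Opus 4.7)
The plan is to induct on $k$, exploiting the fact that $\widetilde R_k(\kappa)$ is diagonal in the computational basis together with the Clifford-hierarchy identities $K[R_k, X] = R_{k-1}$ and $K[C_k, X_j] = C_{k-1}$. The base case $k = 1$ is immediate: since $R_1 = Z$ is self-inverse, the bipartite signs collapse and $\widetilde R_1(\kappa) = Z(\kappa)$ for any $\kappa \in \mathcal B^v_{d-1}$. The canonical logical Pauli basis for ball codes (fixed in \cref{subsec:ball_codes}) was arranged so that $Z(\kappa)$ acts as $\overline Z = C_1$ on the logical qubit associated with edge $e$ precisely when $|\St_d(\kappa) \cap \St_d(e)| = |\St_d(\kappa \ast e)| = 1$, i.e., when $\kappa \ast e \in \mathcal B^v_d$. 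This is the $k = 1$ case of the theorem.

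For the inductive step, assume the result holds at level $k-1$. Because $\widetilde R_k(\kappa)$ is diagonal in the computational basis (being a product of diagonal $R_k^{\pm 1}$ gates), its logical action is a diagonal unitary in the logical computational basis, lies in level $k$ of the Clifford hierarchy, and therefore decomposes as a product of $C_j$ gates with $j \leq k$. To identify which $C_j$'s occur on which logical tuples, conjugate $\widetilde R_k(\kappa)$ by each canonical logical $\overline X_e = X(e)$, $e \in \mathcal B^v_1$. Since $X(e)$ acts trivially off $\St_d(e)$ and $\widetilde R_k(\kappa)$ acts trivially off $\St_d(\kappa)$, the commutator is supported on $\St_d(\kappa)\cap\St_d(e)=\St_d(\kappa\ast e)$. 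Applying $K[R_k^{\pm 1}, X] = R_{k-1}^{\pm 1}$ qubit-by-qubit, and using that the $T/T^c$ bipartition is globally defined (so that any two $d$-simplices in $\St_d(\kappa\ast e)$ that share a $(d-1)$-face lie on opposite sides), one obtains
\begin{equation}
    K\bigl[\widetilde R_k(\kappa),\, X(e)\bigr] = \widetilde R_{k-1}(\kappa \ast e).
\end{equation}
When the non-$v$ endpoint of $e$ lies outside $\kappa$, $\kappa \ast e \in \mathcal B^v_{d-k+1}$ and the inductive hypothesis identifies this physical operator with a product of logical $C_{k-1}$ gates on every $(k-1)$-tuple of edges satisfying $e_{i_1}\ast\cdots\ast e_{i_{k-1}}\ast\kappa\ast e\in\mathcal B^v_d$. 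Matching this against the logical identity $K[C_k,\overline X_e] = C_{k-1}$, evaluated on every $k$-tuple containing $e$, forces $\widetilde R_k(\kappa)$ to carry a logical $C_k$ on precisely the $k$-tuples $(e_1,\ldots,e_k)$ with $e_1\ast\cdots\ast e_k\ast\kappa\in\mathcal B^v_d$. Edges with non-$v$ endpoint lying in $\kappa$ never appear in such a tuple, and for these the physical commutator $\widetilde R_{k-1}(\kappa)$ (with $\kappa$ of ``wrong'' dimension $d-k$) must act as a stabilizer; one checks this separately using $\prod_{\delta\in \St_d(\kappa)\cap T}R_{k-1}\prod_{\delta\in\St_d(\kappa)\cap T^c}R_{k-1}^\dagger$ against the $Z$-type stabilizers supported on $\St_d(\kappa)$.

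The main obstacle is twofold. First, one must verify cleanly that the bipartite $T/T^c$ partition restricts consistently to $\St_d(\kappa\ast e)$ so that the commutator equals $\widetilde R_{k-1}(\kappa\ast e)$ on the nose, not merely modulo stabilizers; this follows from the standard bipartite property of the dual color-code lattice but deserves an explicit argument. Second, commutator matching with logical $X$ operators alone does not a priori exclude spurious lower-order diagonal logical gates ($C_{j<k}$) in the action of $\widetilde R_k(\kappa)$. Ruling these out requires either iterating the commutator argument down the hierarchy, or expanding $\widetilde R_k(\kappa)$ directly on logical computational-basis states using a parity/counting identity analogous to the Even Support Lemma of Ref.~\cite{kubica2015}, which ensures that the accumulated phase on any logical basis state factors exactly as the claimed product of $C_k$'s. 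Either route is essentially bookkeeping once the inductive commutator step has been established, but it is the delicate step that the rest of the proof hinges on.
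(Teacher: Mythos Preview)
Your inductive scheme via group commutators is precisely the paper's approach: establish the $k=1$ base case from the canonical logical Pauli basis, then for the inductive step compute $K[\widetilde R_k(\kappa), X(e)] = \widetilde R_{k-1}(\kappa\ast e)$ and invoke the hypothesis. The second obstacle you flag (excluding spurious lower-level $C_j$'s) is not a genuine difficulty: once you know the commutator with every logical $\overline X_e$, a diagonal logical unitary is fixed up to global phase, so no further bookkeeping is needed. The paper leaves this implicit as well.

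The real gap is your handling of the ``wrong dimension'' case, i.e.\ when $e\subseteq\kappa$ so that $\kappa\ast e=\kappa$ and the commutator is $\widetilde R_{k-1}(\kappa)$ with $\kappa\in\mathcal B^v_{d-k}$ rather than $\mathcal B^v_{d-k+1}$. (The same quantity arises when you check that $\widetilde R_k(\kappa)$ preserves the $X$-stabilizer $X(v)$, which you do not mention but must also verify.) Your proposed fix---checking $\widetilde R_{k-1}(\kappa)$ against the $Z$-type stabilizers on $\St_d(\kappa)$---cannot work: the operator is diagonal, so it commutes with every $Z$-type stabilizer automatically, and this tells you nothing about whether it acts trivially on the codespace. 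The paper's argument is different and is the one non-obvious step in the whole proof. One partitions
\[
\St_d(\kappa) \;=\; \bigsqcup_{\kappa'\in \St_{d-k+1}^{\,c}(\kappa)} \St_d(\kappa'),
\]
where $c\supseteq\col{\kappa}$ is any size-$(d-k+2)$ color set, so that $\widetilde R_{k-1}(\kappa)=\prod_{\kappa'}\widetilde R_{k-1}(\kappa')$ with each $\kappa'$ of the \emph{correct} dimension $d-k+1$. The inductive hypothesis then says each factor implements $C_{k-1}$'s on tuples of logical qubits whose edge colors avoid $c$. But the choice of $c$ was arbitrary among $k$ admissible options, and no nontrivial product of $C_{k-1}$'s can simultaneously avoid all of these color sets; hence the logical action is the identity. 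You should replace your $Z$-stabilizer suggestion with this color-decomposition argument.
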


\begin{proof}
	We prove the theorem by induction on $k$. 
	
	\emph{Base case (k=1):} 
	Suppose we apply the operator $\widetilde R_1 (\lambda)$ for some $\lambda \in \mathcal B^v_{d-1}$.
	Up to an unimportant global phase, $\widetilde R_1 (\lambda)$ is a logical $Z$ operator of the ball code.
	Consider a single-qubit logical $X$ operator $X(e)$, where $e \in \mathcal B^v_1$.
	$\widetilde R_1 (\lambda)$ and $X(e)$ commute if $\St_d (\lambda) \cap \St_d (e) = \emptyset$ or if $e \subseteq \lambda$.
	In the second case we have that $\St_d (\lambda) \subseteq \St_d (e)$ and as $|\St_d (\lambda)| = 2$ the operators commute.
	Crucially, in both cases $(e \ast \lambda)$ is not a $d$-simplex of $\mathcal B^v$. 
	If $e \nsubseteq \lambda$ and their intersection is non-empty, we must have $|\St_d (\lambda) \cap \St_d (e)| = 1$, i.e., both $\lambda$ and $e$ are contained in the same $d$-simplex, or equivalently $(e \ast \lambda) = \delta \in \mathcal B^v_d$.

	\emph{Inductive step:} 
	Suppose we apply the operator $\widetilde R_{k+1} (\mu)$, where $\mu \in \mathcal B^v_{d-k-1}$.
	We first show that $\widetilde R_{k+1} (\mu)$ preserves the stabilizer. 
	We only need to check the single $X$-type stabilizer $X(v)$, as $\widetilde R_{k+1} (\mu)$ trivially commutes with the $Z$-type stabilizers. 
	Consider the group commutator 
	\begin{equation}
		K [\widetilde R_{k+1}(\mu), X(v)] = \widetilde R_{k+1} (\mu) X(v) \widetilde R_{k+1}^\dagger (\mu) X(v) 
		= \widetilde R_k (\mu).
	\end{equation}
	We can partition the support of $\widetilde R_k$ as follows
	\begin{equation}
		\St_d (\mu) = \bigsqcup_{\kappa \in \St_{d-k}^{c} (\mu)} \St_d (\kappa),
		\label{eq:mu_split}
	\end{equation}
	where $\St_{d-k}^c (\mu) = \{ \kappa \in \mathcal B^v_{d-k} \mid \mu \subseteq \kappa, \col{\kappa} = c\}$, and $c$ is a any subset of $[d+1]$ of cardinality $d-k+1$ with $\col{\mu} \subseteq c$.
	Consequently, we can rewrite \cref{eq:group_comm_k} as follows
	\begin{equation}
		K[\widetilde R_{k+1} (\mu), X(e)] = \prod_{\kappa \in \St_{d-k}^{c} (\mu)} R_k (\kappa).
		\label{eq:group_comm_decomp}
	\end{equation} 

	By the inductive assumption, $R_k (\kappa)$ implements a logical $C_k$ gate on all $k$-tuples of logical qubits whose corresponding edges $e_j \in \mathcal B_1^v$ are such that $(e_1 \ast \ldots \ast e_k \ast \kappa) = \delta \in \mathcal B^v_d$.
	Crucially, $\col{e_j} \nsubseteq c$ for all $j \in [k]$. 
	Without loss of generality, suppose that $\col{\mu} = [d-k]$ and $\col{v} = 1$. 
	Then, one possible choice for $c$ is $[d-k+1]$. 
	In this case, $\prod_{\kappa \in \St_{d-k}^{c} (\mu)} R_k (\kappa)$ acts on logical qubits with colors in the set $\{ (1, d-k+2), \ldots, (1, d+1) \}$.
	If we instead choose $c$ to be a different subset of $[d+1]$, then $\prod_{\kappa \in \St_{d-k}^{c} (\mu)} R_k (\kappa)$ would act on logical qubits with a different set of colors. 
	There are $k+1$ possible choices for $c$, and there is no common color of logical qubit in the $k+1$ corresponding sets of logical qubits. 	Therefore, the only way \cref{eq:group_comm_decomp} can hold for all choices of $c$ is if $\prod_{\kappa \in \St_{d-k}^{c} (\mu)} R_k (\kappa) = 1$.

	Now we consider the group commutator of $\widetilde R_{k+1} (\mu)$ with $X(e)$, where $e \in \mathcal B^v_1$.
	If $\St_d (\mu) \cap \St_d (e) = \emptyset$, then $K[\widetilde R_{k+1} (\mu), X(e)] = 1$, as the operators have disjoint support.
	Now suppose that $\St_d (\mu) \cap \St_d (e) \neq \emptyset$.
	If $\St_d (\mu) \subseteq \St_d (e)$, then $e \subseteq \mu$ and 
	\begin{equation}
		K[\widetilde R_{k+1} (\mu), X(e)] = \widetilde R_k (\mu) = 1,
		\label{eq:group_comm_k}
	\end{equation}
	by the same argument as the previous paragraph.
	The final case to consider is when $\St_d (\mu) \cap \St_d (e) \neq \emptyset$ and $\St_d (\mu) \nsubseteq \St_d (e)$.
	Observe that $\St_d (\mu) \cap \St_d (e) = \St_d (e \ast \mu)$, as every $d$-simplex that contains $\mu$ and $e$ must also contain their join. 
	Therefore
	\begin{equation}
		K[\widetilde R_{k+1} (\mu), X(e)] = \widetilde R_k (e \ast \mu),
		\label{eq:group_comm_ckz}
	\end{equation}
	where $(e \ast \mu) = \kappa \in \mathcal B^v_{d-k}$. 
	By the inductive assumption, $\widetilde R_k (\kappa)$ implements a logical $C_k$ gate on all $k$-tuples of logical qubits whose corresponding edges $e_j \in \mathcal B^v_1$ are such that $(e_1 \ast \ldots \ast e_k \ast \kappa) = \delta \in \mathcal B^v_d$.
	This completes the proof, as the group commutator of $C_{k+1}$ and a single-qubit $X$ is $C_k$. 
\end{proof}

\begin{figure}
	\centering
	\subfloat[]{
		\centering
		\includegraphics[width=0.15\linewidth]{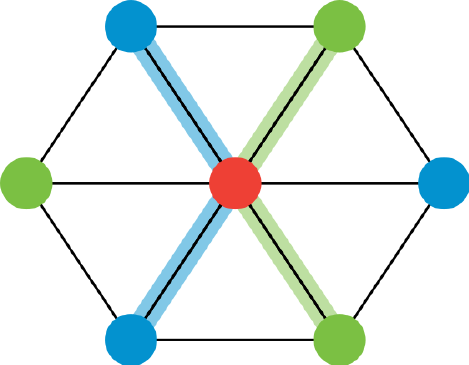}
		\label{sfig:642_canonical_basis}
	}
	\subfloat[]{
		\centering
		\includegraphics[width=0.15\linewidth]{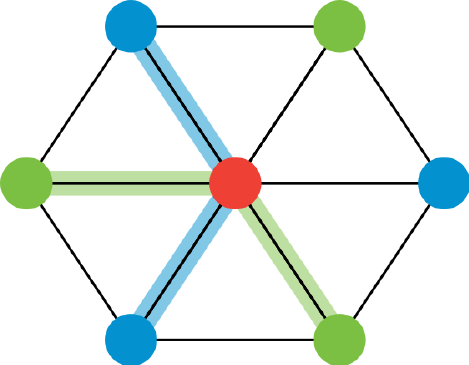}
	}
	\subfloat[]{
		\centering
		\includegraphics[]{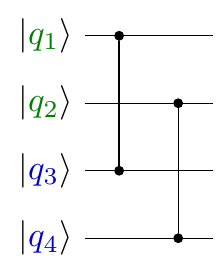}
	}
	\subfloat[]{
		\centering
		\includegraphics[]{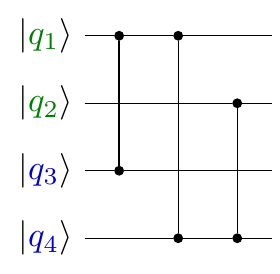}
	}
	\caption{
		Logical gates implemented by $\widetilde R_2 (v)$ in a \nkd{6}{4}{2} code.
		(a) One choice of logical Pauli basis, where for each highlighted edge $e$ we have a logical operator $X(e)$ with the labelling (clockwise from the top left) $\textcolor{blue!80!black}{\overline X_3}$, $\textcolor{green!50!black}{\overline X_1}$, $\textcolor{green!50!black}{\overline X_2}$, and $\textcolor{blue!80!black}{\overline X_4}$. 
		Here two subsets of logical operators have corresponding edges whose join is a triangle. 
		(b) A different choice of logical Pauli basis, where clockwise from the top left we have $\textcolor{blue!80!black}{\overline X_3}$, $\textcolor{green!50!black}{\overline X_2}$, $\textcolor{blue!80!black}{\overline X_4}$, and $\textcolor{green!50!black}{\overline X_1}$.
		Here three subsets of logical operators have corresponding edges whose join is a triangle. 
		(c) and (d) show the logical operator implemented by $\widetilde R_2 (v)$ for (a) and (b), respectively.
	}
	\label{fig:mcz_ex}
\end{figure}

We remark that if we choose $k = d$ in \cref{thm:ckz}, we obtain a logical circuit consisting of $d$-qubit $MCZ$ gates, implemented by $\widetilde R_d (v)$, i.e., the gate acts on all the physical qubits of the code. 
\Cref{fig:mcz_ex} shows some example logical circuits implemented by $\widetilde{R}_2$ for the \nkd{6}{4}{2} code. 
A particularly simple application of \cref{thm:ckz} is to hyperoctahedron codes.
For the $d$-dimensional hyperoctahedron code, it can be easily verified that $\widetilde R_d (v)$ implements a $C_d$ gate on the $d$ logical qubits of the code.
Moreover for $k \in [d]$ and $\kappa \in \mathcal B^v_{d-k}$, $\widetilde R_{k} (\kappa)$ implements a $C_k$ gate on the $k$ logical qubits of color $c \nsubseteq \col{\kappa}$.

\Cref{thm:ckz} allows us to characterize the implementation of the logical HCT code gates that are inherited from the parent color code. 
If the parent color code has a logical gate implemented by $\widetilde R_d$, then HCT codes derived from this code will have the same logical gate implemented by $R_d$, $R_d^\dagger$, and $C_d$ gates. 
For each morphed colorable $d$-ball, \cref{thm:ckz} tells us the specific $C_d$ circuit to apply. 
More concretely, for a ball code with parameters $\nkd{N}{K}{D}$, the $C_d$ circuit will have depth at most $\binom{N}{d}$. 
As the size of any ball-like region is bounded, the implementation of the logical HCT code gate will be constant-depth and hence fault-tolerant.
In \crefrange{app:qrm}{app:stell}, we consider families of HCT codes where the $C_d$ circuits for each ball-like region have simple structure. 

\section{Further details on morphed QRM codes \label{app:qrm}}

We begin this Appendix by briefly reviewing the color code construction of distance-three QRM codes~\cite{kubica2015}.
We start with a $d$-simplex and place another $d$-simplex inside of it. 
We color the vertices of each simplex with colors $c \in [d+1]$ such that each vertex of the same simplex has a different color.
For each vertex $v$ of the internal simplex, we connect $v$ to all vertices of the external simplex whose color is different from $\col{v}$.
The $d$-dimensional color code defined on this lattice is the $d$-dimensional QRM code of distance three. 
\Cref{fig:rm23} shows the lattices for the 2D and 3D codes. 

Now, suppose we morph a colorable $d$-ball of $\mathrm{QRM}(d)$. 
We observe that all ball codes in $\mathrm{QRM}(d)$ are $d$-hyperoctahedron codes. 
This is because a given internal vertex $v$ has $d$ neighbors in the internal simplex and $d$ neighbors in the external simplex, where there are exactly two neighbors of each color, and all neighbors are connected to each other except for those of the same color. 
Therefore the $(d-1)$-dimensional surface of any internal colorable $d$-ball in $\mathrm{QRM} (d)$ is exactly a $d$-hyperoctahedron.
The parameters of the morphed $\mathrm{QRM} (d)$ code are $\nkd{2^d + d - 1}{1}{2}$ and it has a fault-tolerant logical $R_{d}$ gate.
The logical $R_d$ gate of the morphed code has the same implementation as in $\mathrm{QRM} (d)$ for the remaining simplex qubits, along with a $C_d$ gate on the edge qubits. 

\begin{figure}
	\centering
	\subfloat[]{
		\centering
		\includegraphics[width=0.2\linewidth]{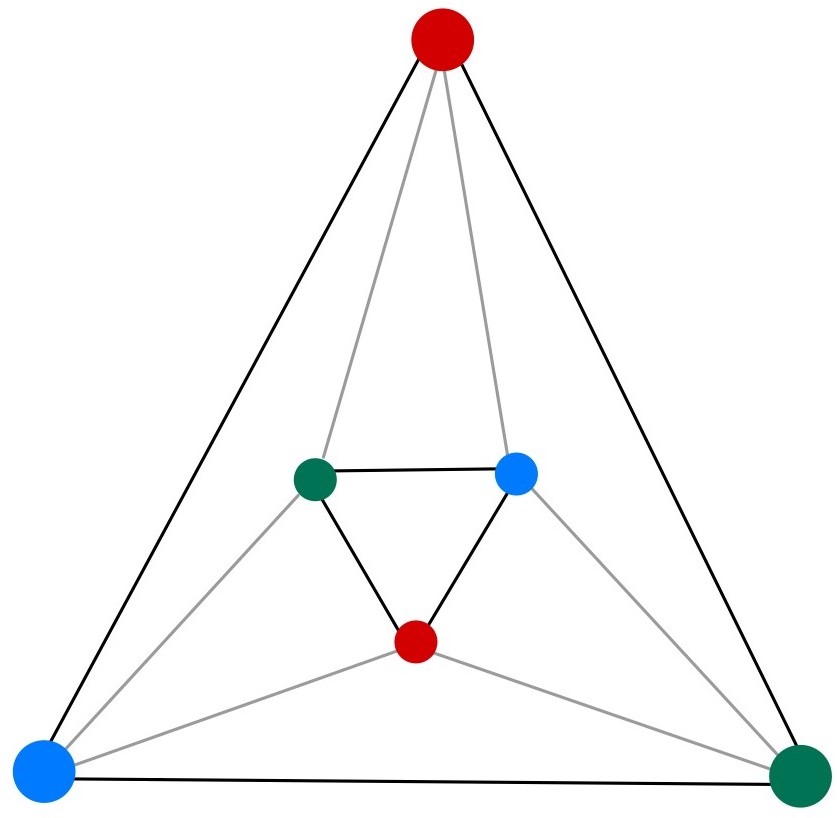}
	}
	\quad
	\subfloat[]{
		\centering
		\includegraphics[width=0.2\linewidth]{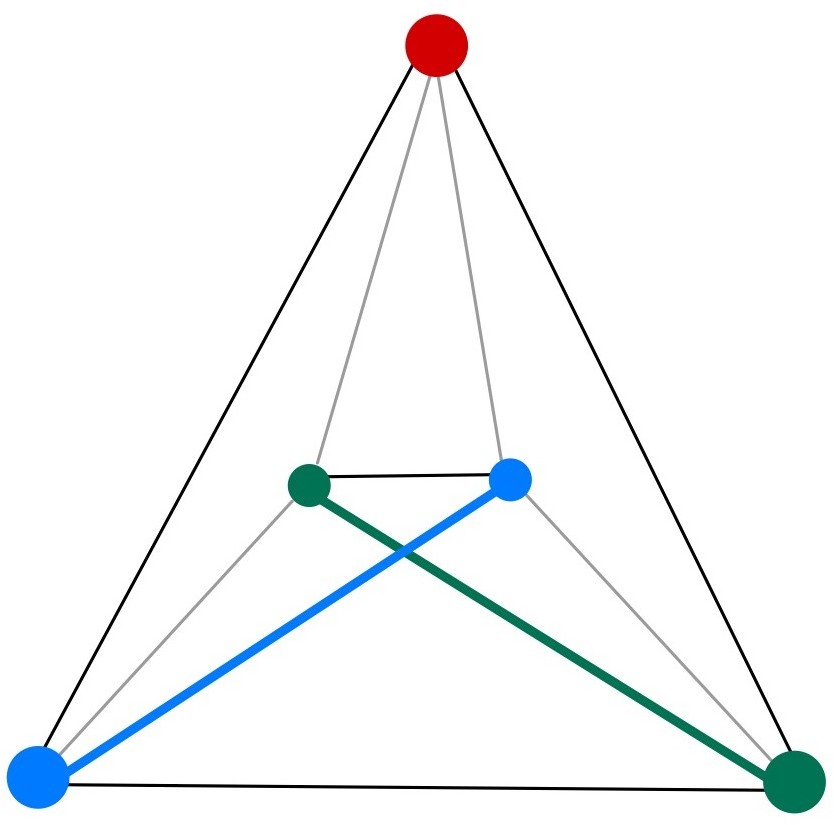}
	}
	\quad
	\subfloat[]{
		\centering
		\includegraphics[width=0.2\linewidth]{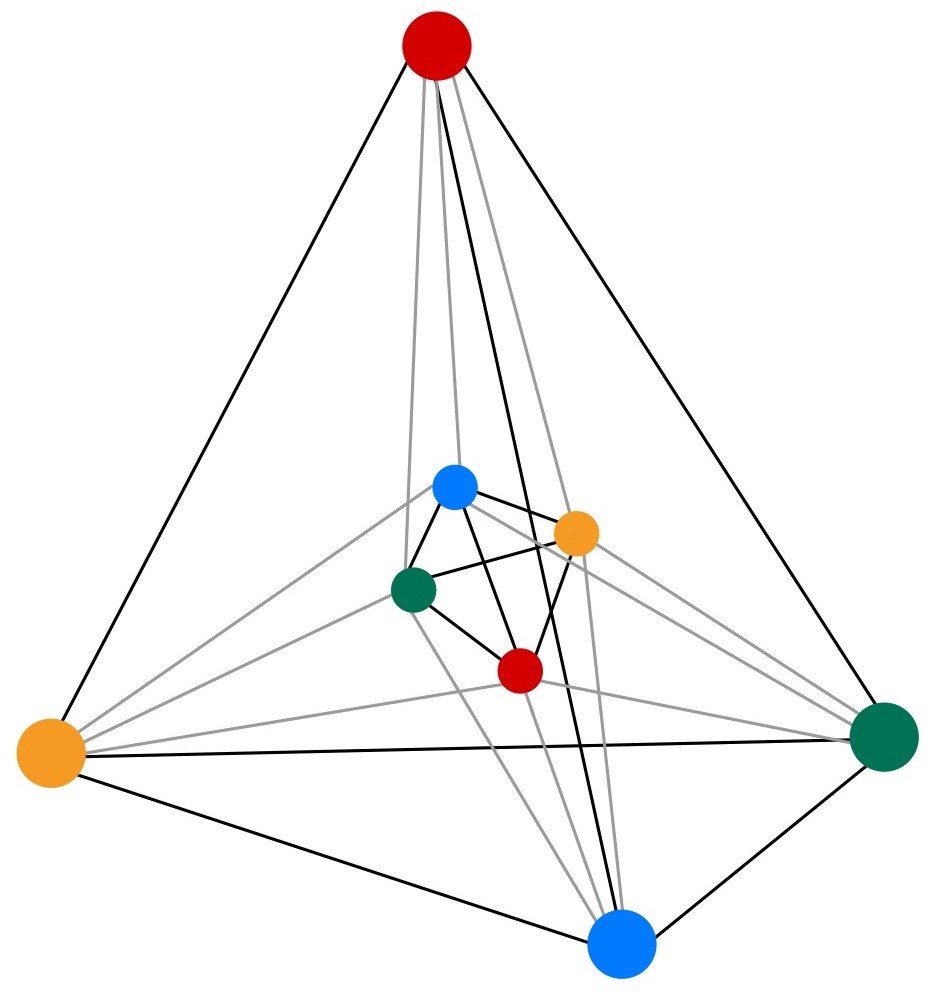}
	}
	\quad
	\subfloat[]{
		\centering
		\includegraphics[width=0.2\linewidth]{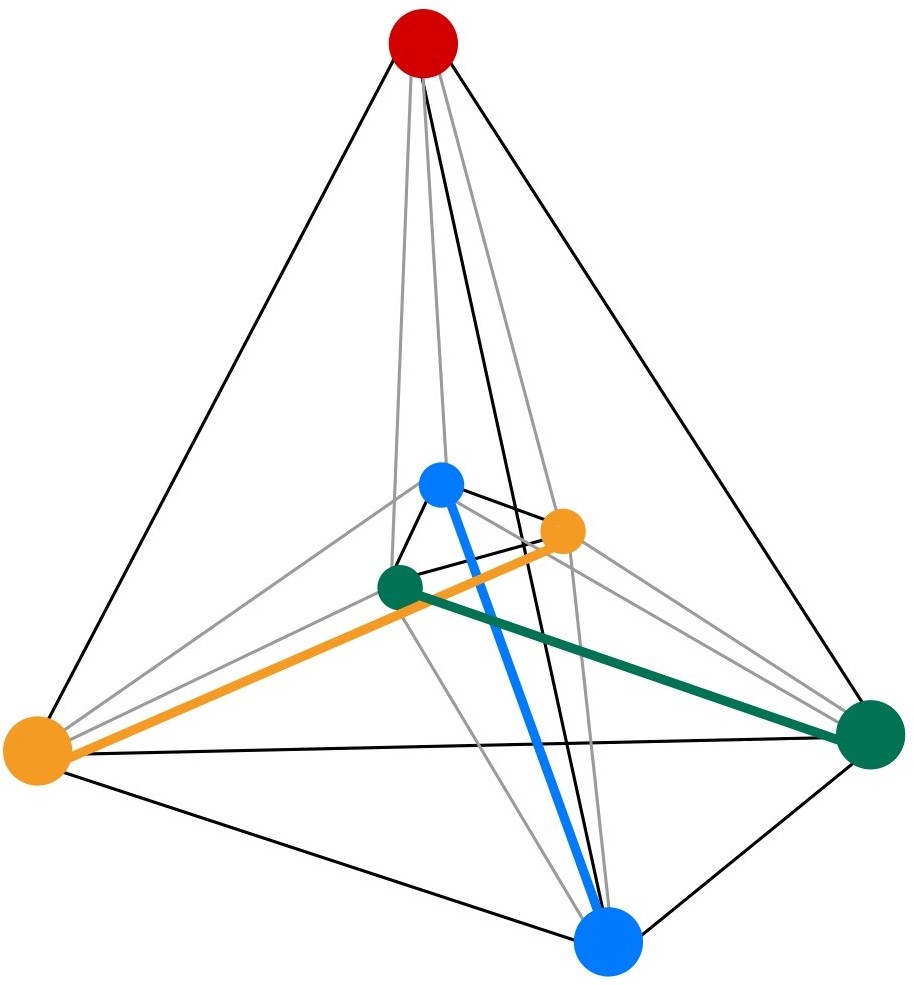}
	}
	\caption{
		QRM codes and their morphed counterparts. 
		(a) $\mathrm{QRM}(2)$, more commonly known as the Steane code. 
		Qubits are on triangles and stabilizer generators are associated with internal vertices.
		(b) The \nkd{5}{1}{2} code produced by morphing the colorable $2$-ball centred at the internal $r$-vertex. 
		Four qubits have been replaced by two (the blue and green edges).
		(c) $\mathrm{QRM}(3)$, more commonly known as 15-qubit Reed-Muller code. 
		Qubits are on tetrahedra and $X$-type ($Z$-type) stabilizer generators are associated with internal vertices (edges). 
		(d) The \nkd{10}{1}{2} code produced by morphing the colorable $3$-ball at the internal $r$-vertex. 
		Eight qubits have been replaced by three (the blue, green and yellow edges).
	}
	\label{fig:rm23}
\end{figure}

For MSD protocols of the type discussed in \cref{subsec:msd_protocol}, the leading order term in the output error is $ap^D$, where $D$ is the distance of the code and $a$ is the number of logical $Z$ operators of weight $D$. 
For $\mathrm{QRM}(d)$, $a=(1-2^{d+1})(1-2^{d})/3$ 
whereas in the corresponding morphed QRM code $a=d$ (as we prove shortly), so there is an exponential separation between the leading order prefactors as the spatial dimension increases.

\begin{lemma}
	The morphed $QRM(d)$ code has $d$ logical $Z$ operators of weight two. 
	\label{lem:logZnum}
\end{lemma}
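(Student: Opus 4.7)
The plan is to classify weight-2 logical $Z$ operators of the morphed code by the number of edge qubits in their support, via the correspondence with logical $Z$ operators of $\mathrm{QRM}(d)$ under morphing. In the canonical Pauli basis of the ball code on $\mathcal{B}^v$, each logical $\bar Z_c$ for $c\in[d+1]\setminus\{\col{v}\}$ is given by the weight-2 operator $Z(\lambda_c)$ supported on two $d$-simplices of $\mathcal{B}^v$, and morphing sends $\bar Z_c$ to the single-qubit operator $Z_{\bar c}$ on the corresponding edge qubit.

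The main step is to exhibit one weight-2 logical $Z$ per color $c$. The claim is that for each $c$, there is a unique $d$-simplex $\delta_c$ outside $\mathcal{B}^v$ such that $Z(\lambda_c) \cdot Z_{\delta_c}$ is a logical $Z$ of $\mathrm{QRM}(d)$; under morphing, this becomes the weight-2 operator $Z_{\bar c} \cdot Z_{\delta_c}$. To verify this, I would compute commutation of $Z(\lambda_c)$ with every $\mathrm{QRM}(d)$ $X$-stabilizer $X(u)$, showing that it commutes with $X(v)$ and with $X(u_{c'})$ for internal boundary vertices $u_{c'}$ of colors $c' \neq c$, while it anticommutes with $X(u_c)$. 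Hence $\delta_c$ is constrained to lie in the outside support of $X(u_c)$ and to avoid the outside supports of the other $X(u_{c'})$. Using the simplex-in-simplex structure of $\mathrm{QRM}(d)$, a direct inclusion-exclusion argument shows that exactly one $d$-simplex satisfies these conditions: the one with color-$c$ vertex internal and all other color vertices external.

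Finally, I would rule out weight-2 logical $Z$ operators of other forms. A weight-2 operator supported entirely on simplex qubits would lift to a weight-2 logical $Z$ of $\mathrm{QRM}(d)$, contradicting the distance 3 of $\mathrm{QRM}(d)$. A weight-2 operator supported on two edge qubits would correspond to a ball code logical product $\bar Z_c \bar Z_{c'}$; direct calculation shows that this product, in its minimum-weight ball representation, anticommutes with $X(u_c)$, so its extension by identity outside is not in the commutant of $\mathrm{QRM}(d)$'s stabilizer group. The main obstacle is establishing both existence and uniqueness of $\delta_c$ in the second step, which I expect to follow from the regularity of the simplex-in-simplex construction, where each external $d$-simplex is uniquely specified by a choice of internal or external vertex per color.
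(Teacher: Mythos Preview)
Your proposal is correct and follows essentially the same case analysis as the paper: rule out weight-2 operators on simplex qubits alone via the distance-3 property of $\mathrm{QRM}(d)$, rule out operators on two edge qubits by anticommutation with an $X$-stabilizer, and for each edge qubit identify the unique companion simplex qubit. The paper's execution is slightly more direct: rather than lifting back to $\mathrm{QRM}(d)$ and searching for weight-3 logicals of the form $Z(\lambda_c)\cdot Z_{\delta_c}$, it works entirely in the morphed code, observing that $Z$ on the color-$c$ edge qubit anticommutes only with the $X$-stabilizer at the internal color-$c$ vertex, and that the unique remaining simplex qubit in that stabilizer's support but no other is the one you identify (internal color-$c$ vertex joined to all external vertices of other colors). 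Your inclusion--exclusion detour through the parent code is unnecessary but not wrong.
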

\begin{proof}
	Any weight two logical $Z$ operator of the morphed $\mathrm{QRM}(d)$ code must act on one or more edge qubits as the distance of $\mathrm{QRM}(d)$ is three.
	Consider an edge of color $c$ and suppose that we apply a $Z$ operator to the qubit on this edge.
	To form a weight two logical $Z$ operator, the only possible choice for the second qubit is the $d$-simplex spanned by the internal vertex of color $c$ and the vertices on the boundary whose colors are not equal to $c$. 
	To see this, observe that the Pauli $Z$ acting on the edge qubit anticommutes with the $X$-type stabilizer generator associated with the internal endpoint of the edge.
	The only other qubit that is contained in the support this $X$-type stabilizer generator and no other $X$-type stabilizer generators is the one described above, as all other $d$-simplices contain internal vertices of color not equal to $c$.
	Therefore any other choice will lead to more unsatisfied stabilizers.
	And any weight two $Z$ operator acting exclusively on edge qubits will not satisfy two stabilizers (associated with the internal endpoints of the edges).
	In the morphed $\mathrm{QRM} (d)$ code there are $d$ edges with associated qubits, so there are $d$ corresponding logical $Z$ operators of weight two.
\end{proof}

\section{Constructing color codes and toric codes from Coxeter diagrams}
\label{app:coxeter}

In this appendix we describe a construction of color codes and toric codes from Coxeter diagrams~\cite{coxeter1973}.
Such an approach has been employed before in~\cite{vuillot2019a,vasmer2019a} and is a generalization of the Schl\"{a}fli symbol construction used in~\cite{breuckmann2016a,breuckmann2017,breuckmann2018,jochym-oconnor2021}. 
Coxeter diagrams are used to represent Coxeter groups, which describe the symmetries of polytopes and tessellations in terms of reflections. 
A tessellation is a gapless covering of a manifold by uniform polytopes such that each adjacent pair of polytopes share a facet (a $(d-1)$-cell).
We say that a tessellation is uniform if all of its vertices are identical, i.e., there is the same combination of polytopes at each vertex. 
And we say that a tessellation is regular it is uniform and all the polytopes are identical.
A Coxeter group is defined by the group presentation
\begin{equation}
	\langle 
		r_1, r_2, \ldots, r_m : (r_i r_j)^{p_{i,j}} = 1
	\rangle,
	\label{eq:cox_gp_defn}
\end{equation}
where $p_{i,i} = 1$ (as the $r_i$ are reflections), $p_{i,j} \geq 2$ for $i \neq j$, and $p_{i,j} = \infty$ implies that the relation is ignored. 

Each Coxeter group has an associated graph called a Coxeter diagram, which describes the spatial relations between a collection of reflecting hyperplanes.
There is a reflecting hyperplane (the $r_i$ in \cref{eq:cox_gp_defn}) for each vertex of the diagram and the edges represent the dihedral angles between different reflecting hyperplanes (corresponding to the $p_{ij}$ in \cref{eq:cox_gp_defn}).
Unmarked edges correspond to dihedral angles of $\pi / 3$ whereas edges marked by $p > 3$ correspond to dihedral angles of $\pi / p$. 
Vertices that are not connected by edges have dihedral angles of $\pi / 2$.
\Cref{fig:coxeter_demo} gives an example using the Coxeter diagram \dynkin[Coxeter,extended,affine mark=*]{C}{2}.

\begin{figure}
	\centering
		\includegraphics[width=0.3\linewidth]{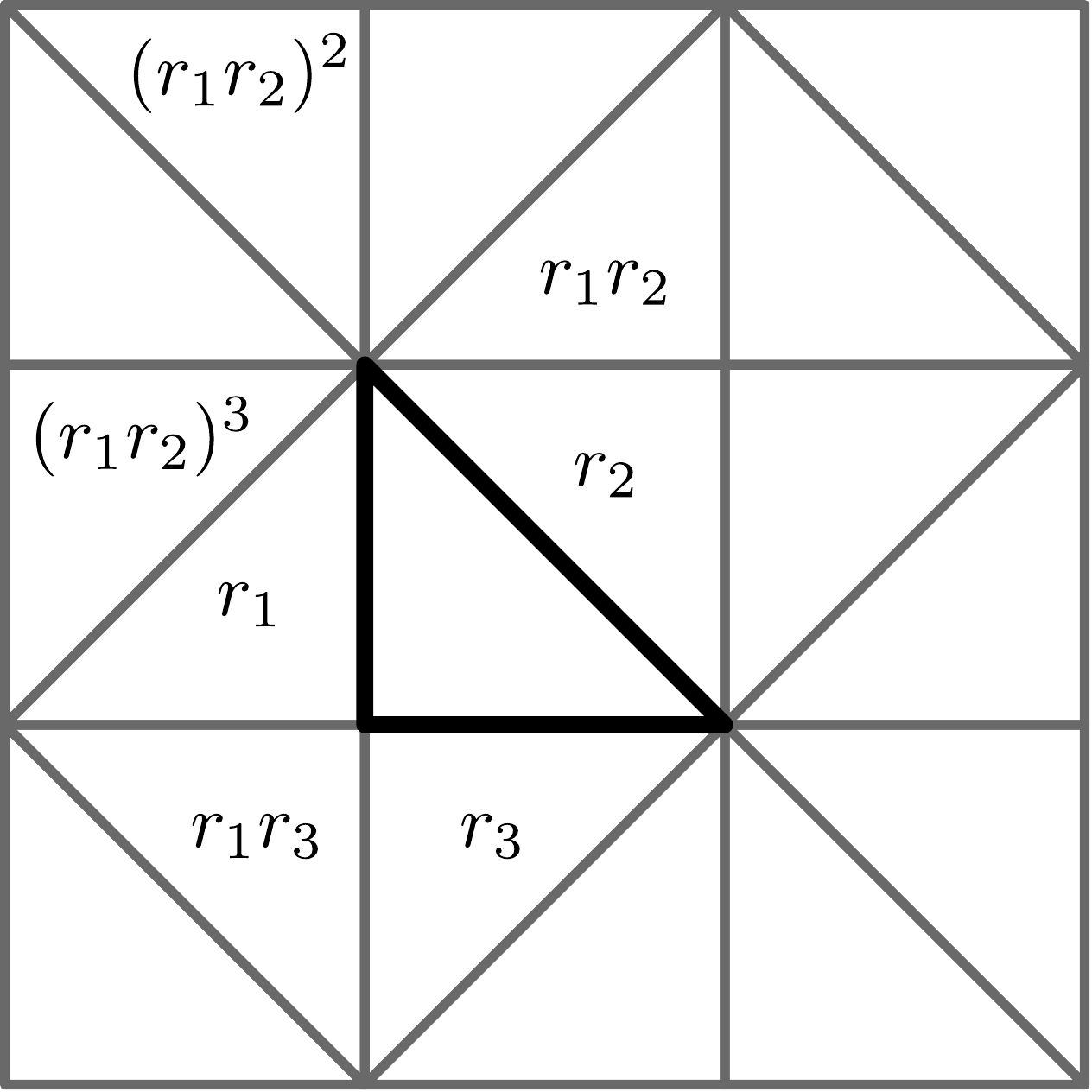}
	\caption{
		Illustrating the Coxeter kaleidoscope.
		The vertices of the Coxeter diagram \dynkin[Coxeter,extended,affine mark=*]{C}{2} correspond to three reflecting hyperplanes $r_1$, $r_2$, and $r_3$ (labelled left-to-right).
		In the figure the reflecting hyperplanes are the sides of the black triangle (the fundamental region). 
		As specified by the diagram, $r_1$ and $r_3$ have dihedral angle $\pi / 2$ whereas $r_1$ and $r_2$ have dihedral angle $\pi / 4$ (as do $r_2$ and $r_3$). 
		We show a subset of the (infinite) reflections of the fundamental region and we label some of these reflections in terms of $r_1$, $r_2$, and $r_3$. 
		From the figure we see that $(r_1 r_3)^2 = 1$ and $(r_1 r_2)^4 = 1$, matching the Coxeter diagram.
	}
	\label{fig:coxeter_demo}
\end{figure}

Given a Coxeter diagram, one can build uniform tilings using Wythoff's construction~\cite{coxeter1973}.
Consider the kaleidoscope corresponding to a given Coxeter diagram (see \cref{fig:coxeter_demo} for an example).
We mark a subset of the vertices of a Coxeter diagram by ringing them.
We then place a generating vertex inside the fundamental domain of the kaleidoscope, where the generating point is equidistant from all the reflecting hyperplanes that correspond to the ringed vertices. 
If a vertex is unringed, then the generating point lies on the corresponding reflecting hyperplane. 
The uniform tessellation is then the reflection of the generating point in all of the reflecting hyperplanes. 
\Cref{fig:wythoff-2d} shows some examples of Wythoff's construction applied to \dynkin[Coxeter,extended,affine mark=*]{C}{2}.

\begin{figure}
	\centering
	\subfloat[{\dynkin[Coxeter,extended,affine mark=*]C2}]{
		\centering
		\includegraphics[width=0.2\linewidth]{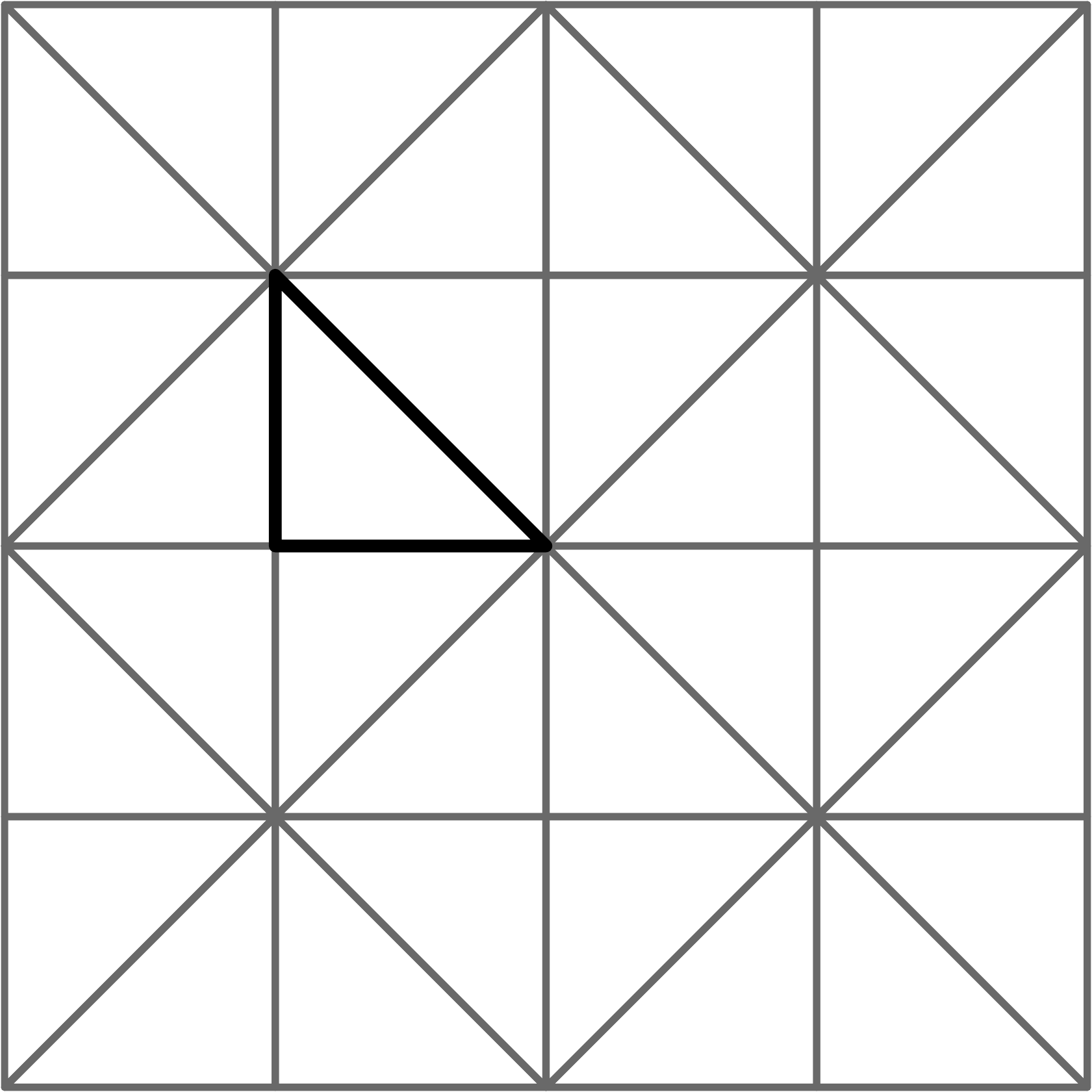}
	}
	\subfloat[
		{
			\begin{dynkinDiagram}[Coxeter,extended,affine mark=*]C2
				\protect\circleRoot 0
			\end{dynkinDiagram}
		}
	]{
		\centering
		\includegraphics[width=0.2\linewidth]{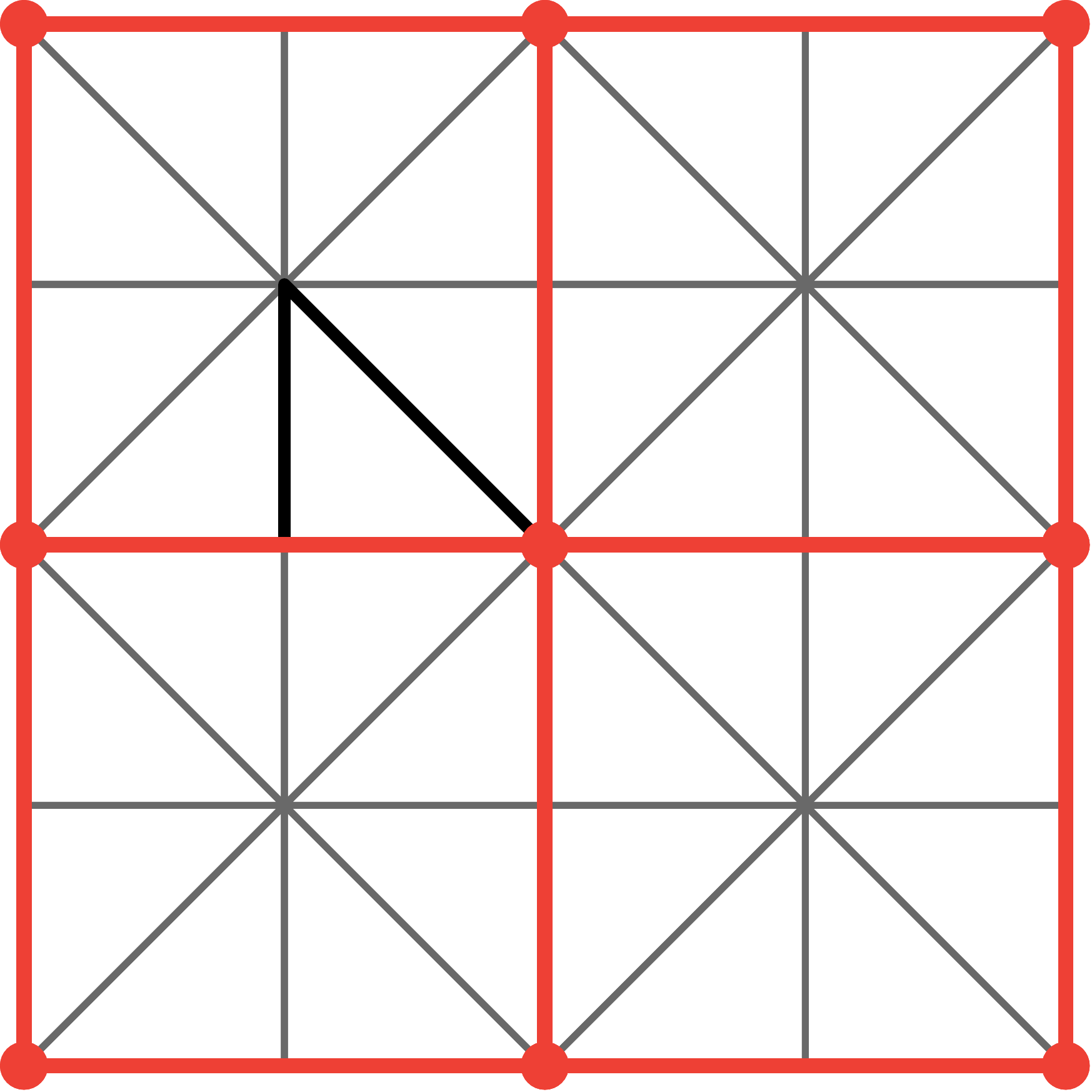}
	}
	\subfloat[
		{
			\begin{dynkinDiagram}[Coxeter,extended,affine mark=*]C2
				\protect\draw (0.7, 0.065) circle (3pt);
			\end{dynkinDiagram}
		}
	]{
		\centering
		\includegraphics[width=0.2\linewidth]{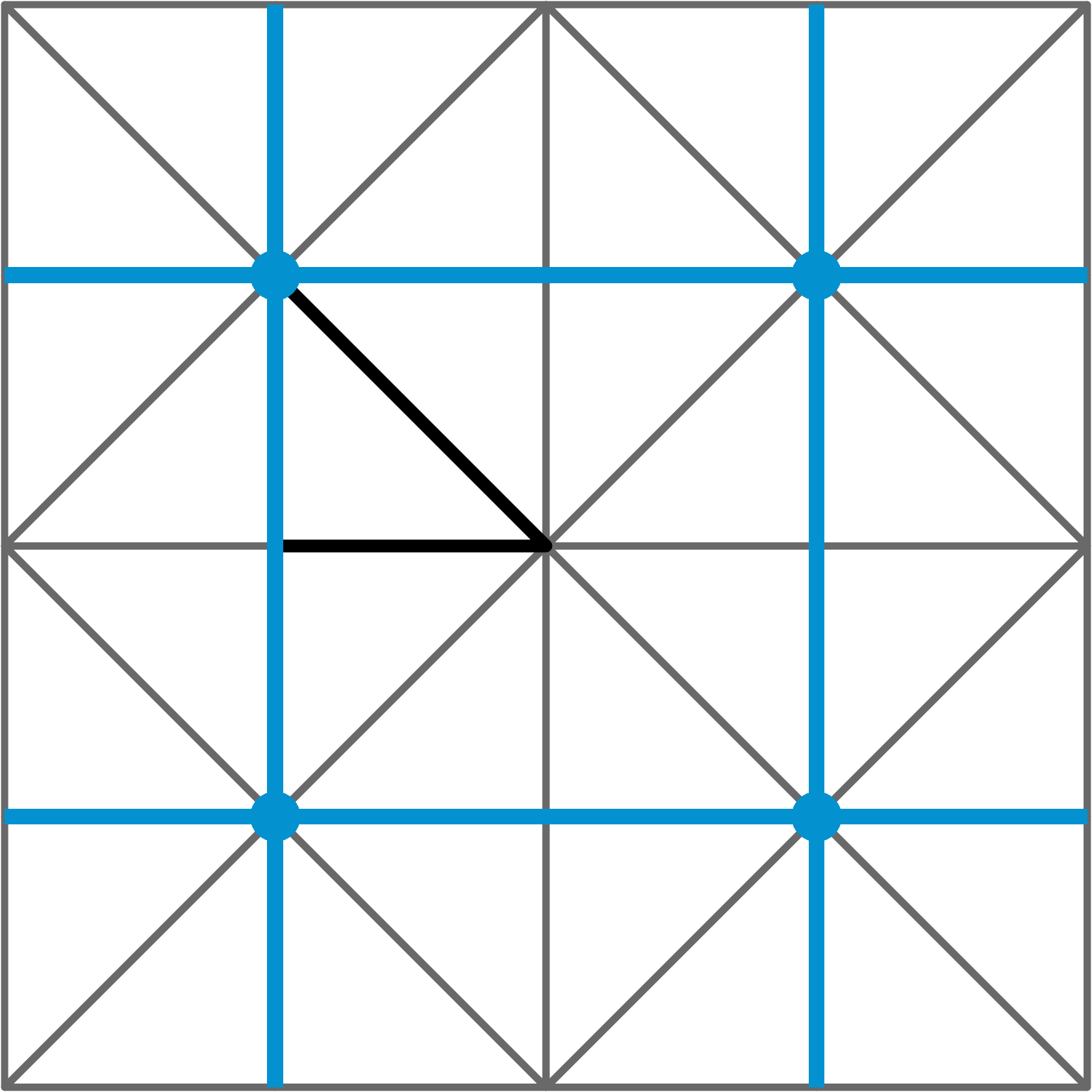}
	}
	\subfloat[
		{
			\begin{dynkinDiagram}[Coxeter,extended,affine mark=*]C2
				\protect\circleRoot 0
				\protect\draw (0.7, 0.065) circle (3pt);
				\protect\draw (0.35, 0.065) circle (3pt);
			\end{dynkinDiagram}
		}
	]{
		\centering
		\includegraphics[width=0.2\linewidth]{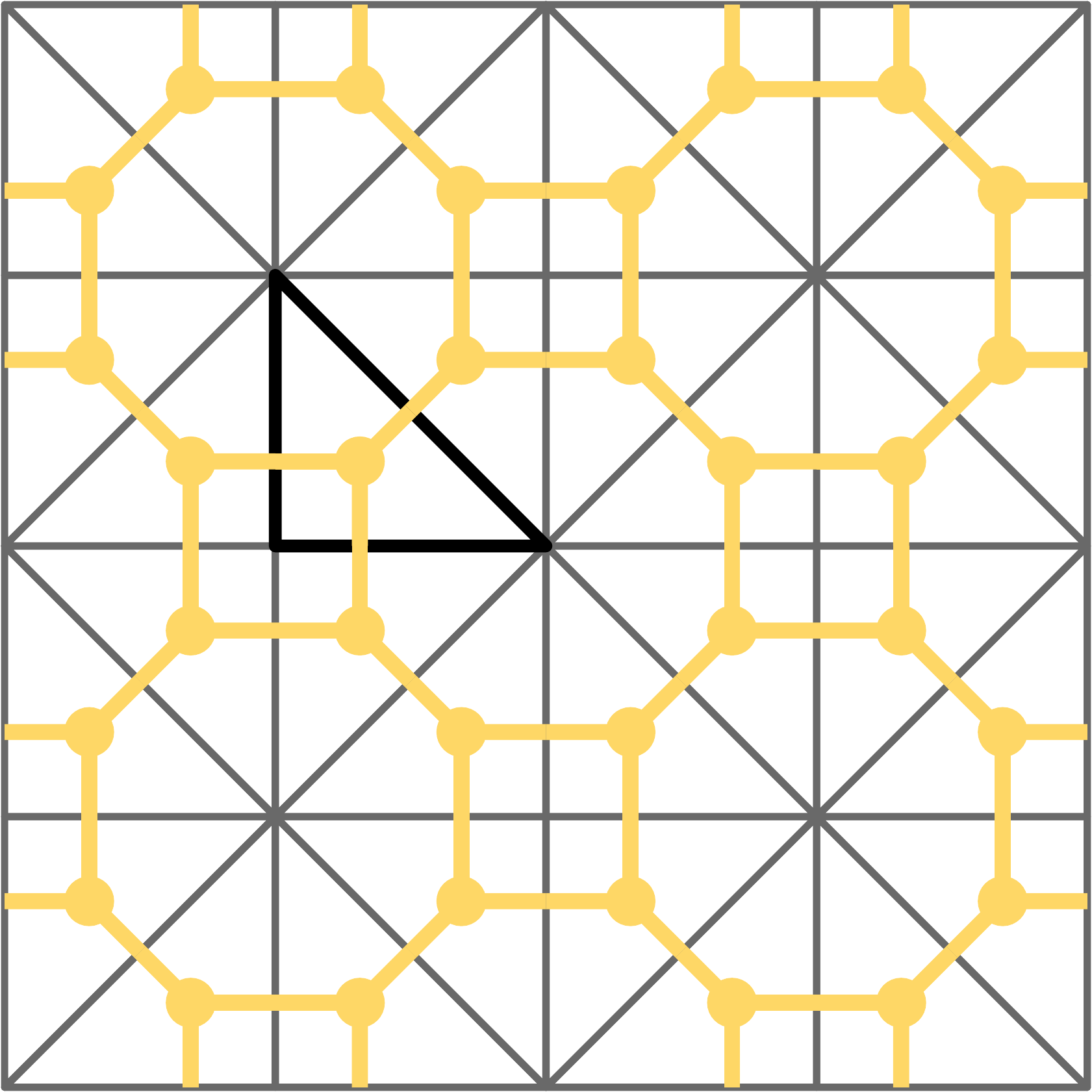}
	}
	\caption{
		Examples of Wythoff's construction applied to the Coxeter diagram \dynkin[Coxeter,extended,affine mark=*]{C}{2}.
		(a) The Coxeter kaleidoscope described in \cref{fig:coxeter_demo}.
		The fundamental region is highlighted in black.
		(b) Ringing the first vertex of the Coxeter diagram corresponds to placing the generating point at one of the corners of the fundamental region (red circle).
		The resultant tessellation is a square tiling.
		(c) By symmetry, ringing the third vertex of the diagram also leads to a square tiling.
		(d) Ringing all three vertices of the Coxeter diagram corresponds to placing the generating vertex at the centre of the fundamental region (equidistant from each reflecting hyperplane).
		In the resultant tessellation, two octahedra and one square meet at each vertex. 
	}
	\label{fig:wythoff-2d}
\end{figure}

\begin{figure}
	\centering
	\subfloat[{\dynkin[Coxeter,extended,affine mark=*]B3}]{
		\centering
		\includegraphics[width=0.2\linewidth]{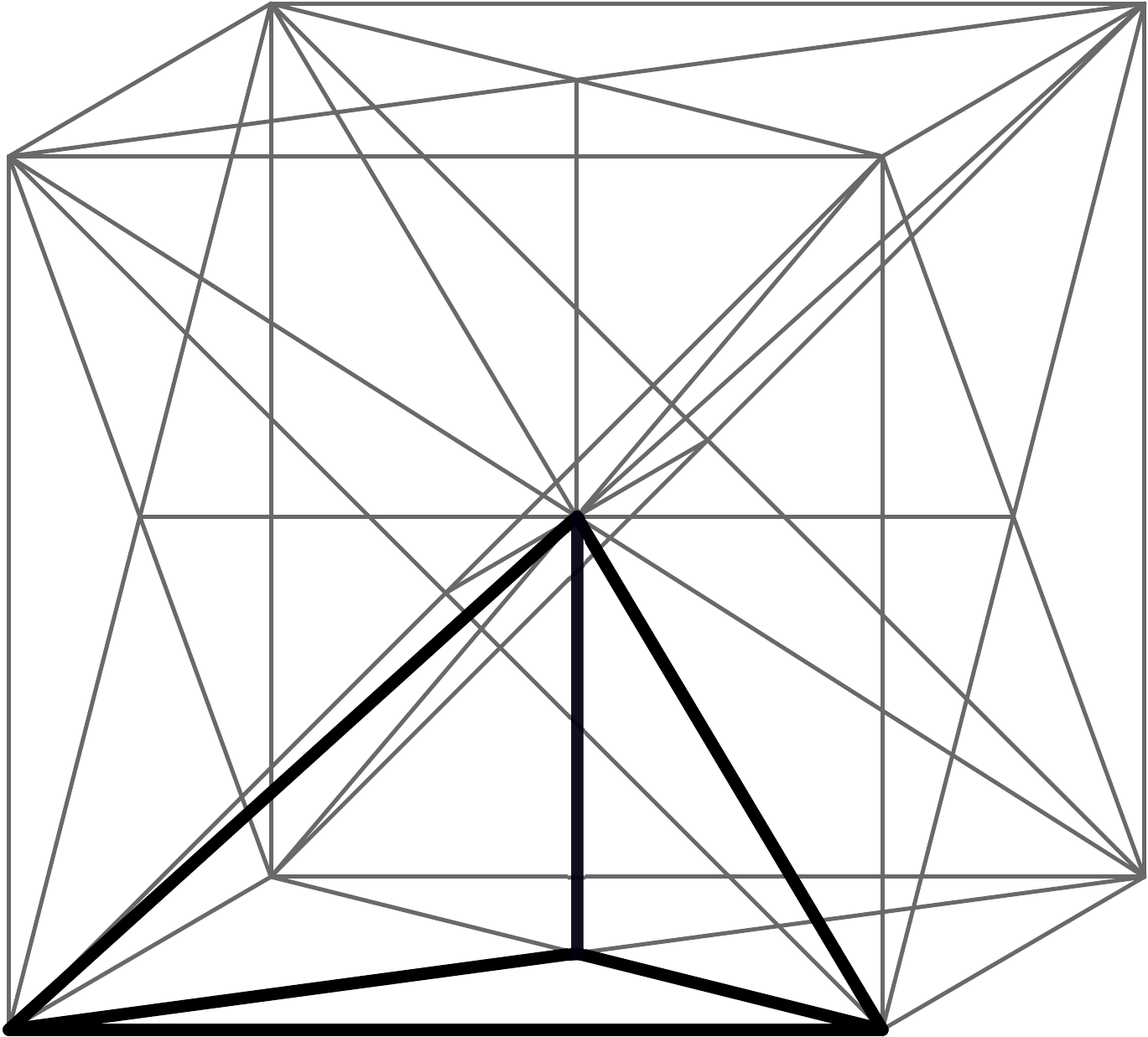}
	}
	\subfloat[
		{
			\begin{dynkinDiagram}[Coxeter,extended,affine mark=*]B3
				\protect\draw (0.52, 0.065) circle (3pt);
			\end{dynkinDiagram}
		}
	]{
		\centering
		\includegraphics[width=0.2\linewidth]{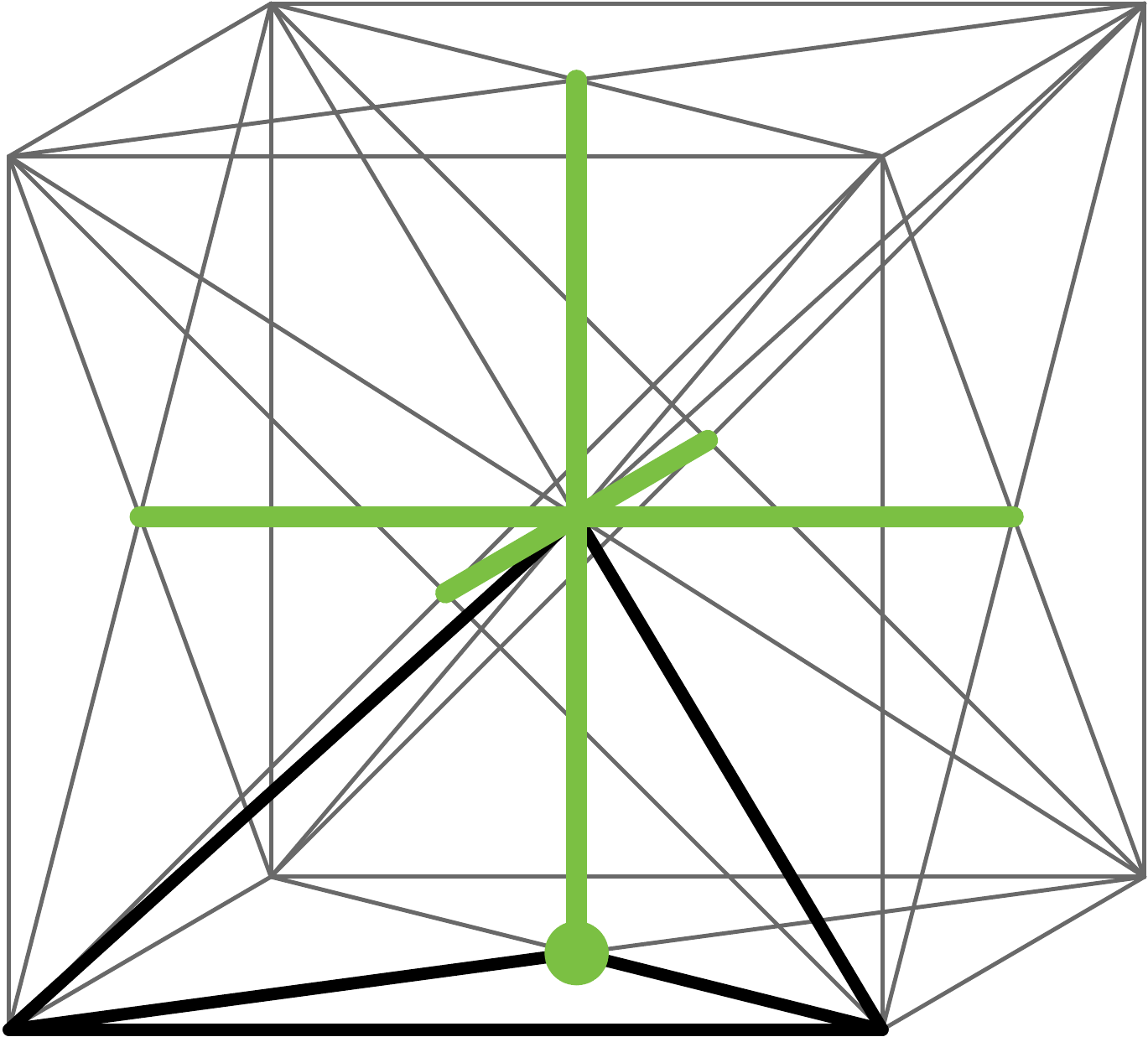}
	}
	\subfloat[
		{
			\begin{dynkinDiagram}[Coxeter,extended,affine mark=*]B3
				\protect\circleRoot 0
			\end{dynkinDiagram}
		}
	]{
		\centering
		\includegraphics[width=0.2\linewidth]{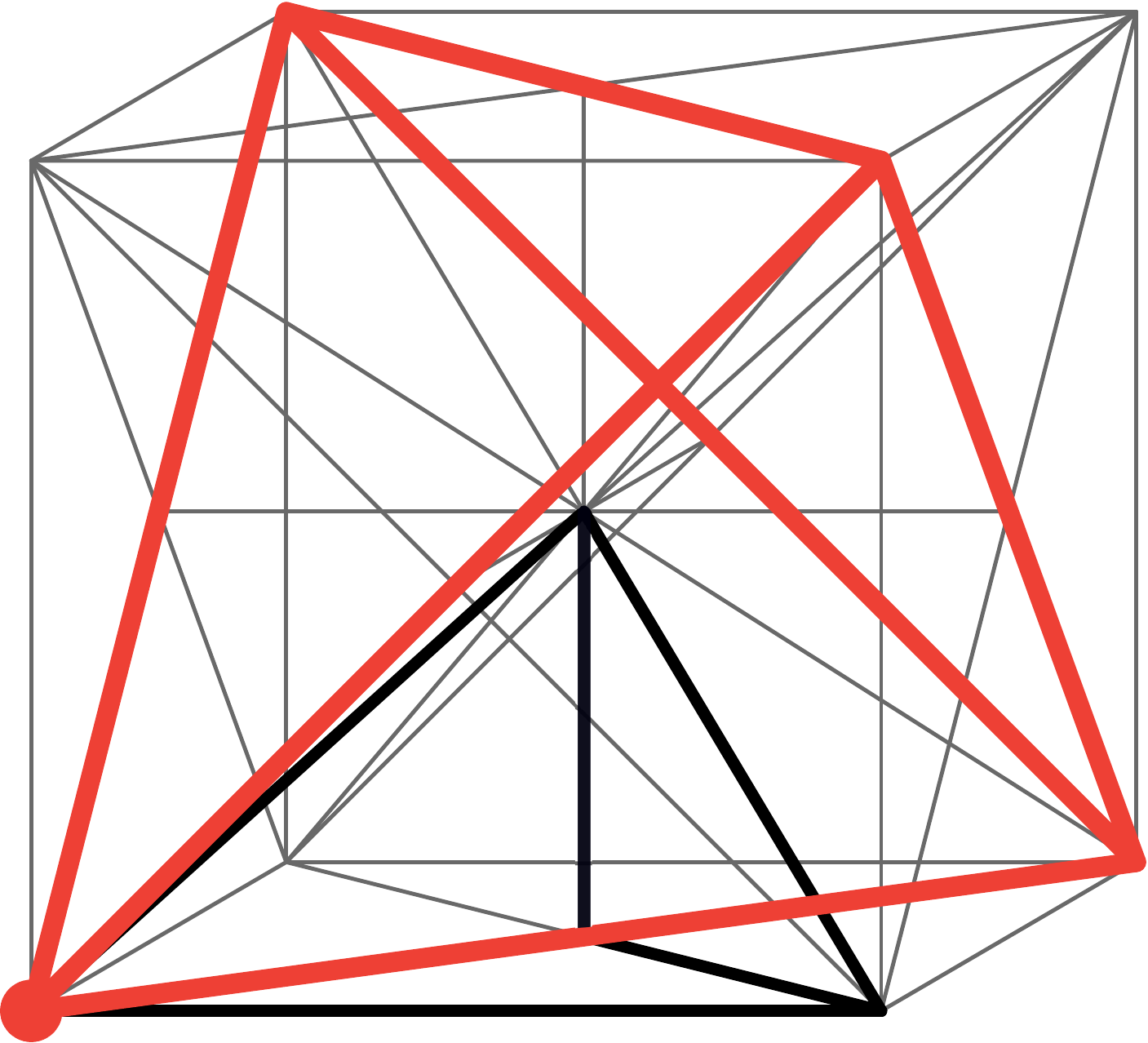}
	}
	\subfloat[
		{
			\begin{dynkinDiagram}[Coxeter,extended,affine mark=*]B3
				\protect\draw (0, -0.24) circle (3pt);
			\end{dynkinDiagram}
		}
	]{
		\centering
		\includegraphics[width=0.2\linewidth]{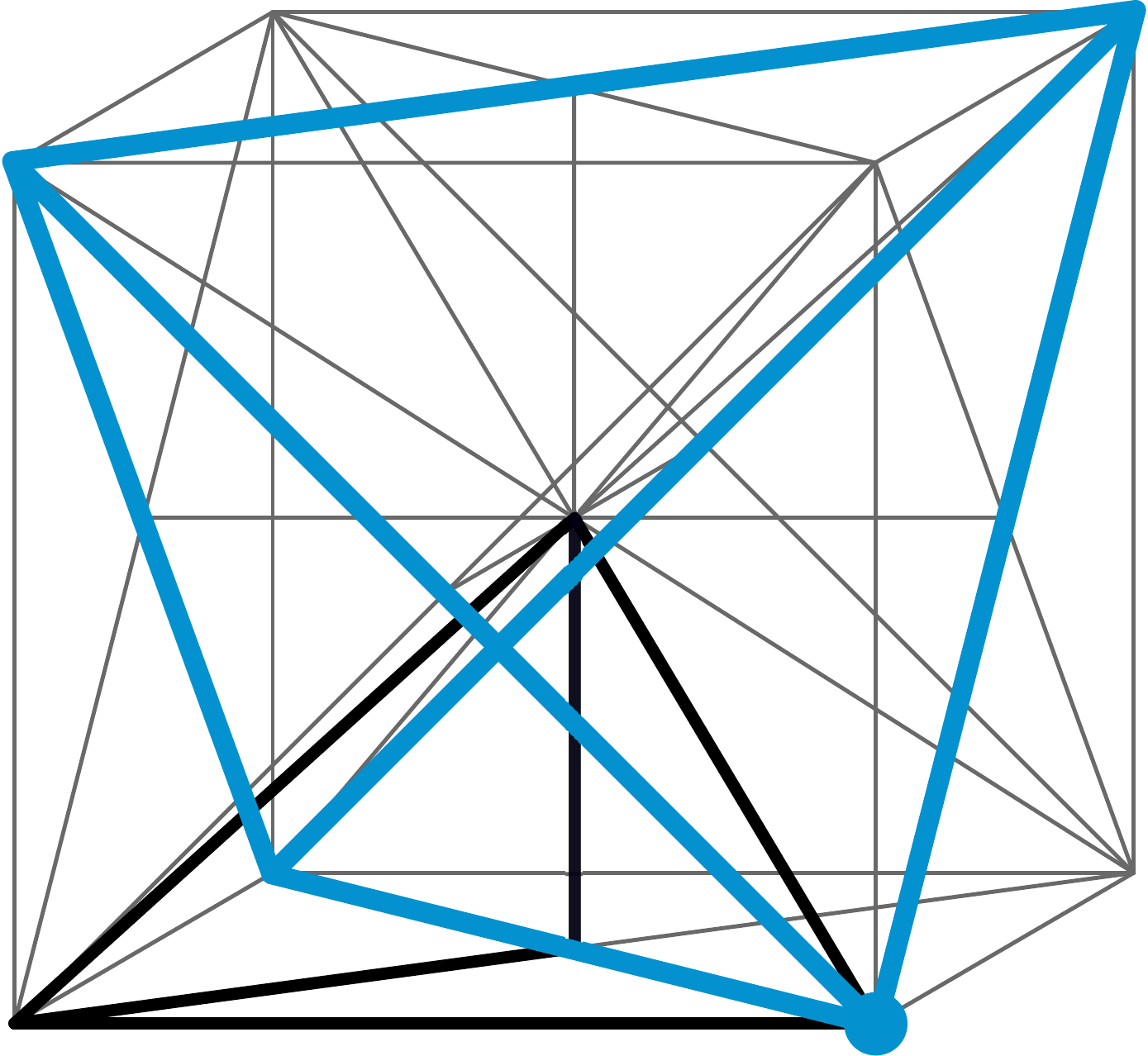}
	}
	\caption{
		Examples of Wythoff's construction applied to the Coxeter diagram \dynkin[Coxeter,extended,affine mark=*]{B}{3}.
		(a) The fundamental region is the tetrahedron highlighted in black.
		The reflecting hyperplanes are the four faces of the tetrahedron.
		(b) Applying Wythoff's construction with the markup 
		\begin{dynkinDiagram}[Coxeter,extended,affine mark=*]B3
			\protect\draw (0.52, 0.07) circle (3pt);
		\end{dynkinDiagram}
		produces the cubic honeycomb. 
		(c) Applying Wythoff's construction with the markup
		\begin{dynkinDiagram}[Coxeter,extended,affine mark=*]B3
			\protect\circleRoot 0
		\end{dynkinDiagram}
		produces a uniform tessellation where eight tetrahedra and six octahedra meet at each vertex. 
		This tessellation is sometimes called the tetrahedral-octahedral honeycomb. 
		(d) Gives the same tessellation as in (c) but rotated by $\pi / 2$.
	}
	\label{fig:wythoff-3d}
\end{figure}

The useful thing about Coxeter diagrams in our context is that the kaleidoscope corresponding to a particular Coxeter diagram is exactly a $d$-colex, where $d$ is the number of vertices of the diagram minus one. 
To see this, note that a Coxeter diagram naturally describes a simplicial complex. 
Furthermore, the complex will be $(d+1)$-vertex colorable as we can pick $(d+1)$ colors for the vertices of the fundamental region such that no two vertices have the same color, and the rest of the tessellation is formed by reflections of this fundamental region.
In 2D we have the following (affine) Coxeter diagrams: \dynkin[Coxeter,extended,affine mark=*]{C}{2}, \dynkin[Coxeter,extended,affine mark=*]{A}{2} and \dynkin[Coxeter,extended,affine mark=*]{G}{2}; each of which describes an affine Coxeter group and hence an infinite $2$-colex.
Furthermore, in certain cases different markups of a Coxeter diagram correspond to toric code lattices that are related to the colex via morphing. 
As an example, consider again \dynkin[Coxeter,extended,affine mark=*]{C}{2}.
As shown in \cref{fig:wythoff-2d}, 
\begin{dynkinDiagram}[Coxeter,extended,affine mark=*]C2
	\circleRoot 0
\end{dynkinDiagram}
and
\begin{dynkinDiagram}[Coxeter,extended,affine mark=*]C2
	\draw (0.7, 0.08) circle (3pt);
\end{dynkinDiagram}
correspond to square tilings.
These are exactly the lattices we obtain if we morph the colorable $2$-balls with associated $2$-hyperoctahedron codes in the color code lattice described by \dynkin[Coxeter,extended,affine mark=*]{C}{2}.

Let us consider another example, this time in 3D.
In \cref{fig:wythoff-3d} we illustrate the tessellation corresponding to the Coxeter diagram \dynkin[Coxeter,extended,affine mark=*]{B}{3}.
This diagram gives a $3$-colex, whose dual is a uniform tessellation where one square, one truncated octahedron and two truncated cuboctahedra meet at every vertex (a tessellation called the cantitruncated cubic honeycomb).
In this case, the markups
\begin{dynkinDiagram}[Coxeter,extended,affine mark=*]B3
	\draw (0.52, 0.08) circle (3pt);
\end{dynkinDiagram}
,
\begin{dynkinDiagram}[Coxeter,extended,affine mark=*]B3
	\protect\circleRoot 0
\end{dynkinDiagram}
and
\begin{dynkinDiagram}[Coxeter,extended,affine mark=*]B3
	\draw (0, -0.235) circle (3pt);
\end{dynkinDiagram}
give the cubic honeycomb and two tetrahedral-octahedral honeycombs (see \cref{fig:wythoff-3d}). 
These are exactly the toric code lattices that are produced by morphing the colorable $3$-balls with associated $3$-hyperoctahedron codes in the color code lattice described by \dynkin[Coxeter,extended,affine mark=*]{B}{3}.
Furthermore, these are exactly the toric codes (with the appropriate boundary conditions) that have a transversal logical $C_3$ gate~\cite{vasmer2019}.

One may ask how general the Coxeter diagram approach illustrated in the previous two examples is.
Whilst the markups of Coxeter diagrams are generically related to restricted color code lattices, the most elegant examples are color codes whose structure is described by Coxeter diagrams that are star graphs, e.g., \dynkin[Coxeter,extended,affine mark=*]C2 and \dynkin[Coxeter,extended,affine mark=*]B3. 
The fundamental region of a Coxeter group corresponding to such a diagram is a right-angled simplex, i.e., a simplex where all facets except one meet at right-angles at one vertex (see the fundamental regions in \cref{fig:coxeter_demo,fig:wythoff-3d}). 
In such cases, all the colorable $d$-balls of one color will have associated $d$-hyperoctahedron codes. 
Therefore, we can morph these colorable $d$-balls to obtain $d$ toric codes with a transversal gate implemented by $C_d$.

In 4D, there is a single affine Coxeter diagram with the required star graph structure: \dynkin[Coxeter,extended,affine mark=*]D4.
This diagram describes a $4$-colex dual to a tessellation where four truncated octaplexes and one tesseract meet at each vertex.
An octaplex (or hyper-diamond) is a regular 4-polytope described by the ringed Coxeter diagram 
\begin{dynkinDiagram}[Coxeter]F4
	\draw (0, 0.08) circle (3pt);
\end{dynkinDiagram},
where six octahedra meet at each vertex of the polytope and three meet at each edge.
From \dynkin[Coxeter,extended,affine mark=*]D4, we can construct
\begin{dynkinDiagram}[Coxeter,extended,affine mark=*]D4
	\circleRoot 0
\end{dynkinDiagram},
a regular tessellation where every cell is a $4$-hyperoctahedron, and three 4-cells meet at each face. 
Morphing the colorable $4$-balls with associated $4$-hyperoctahedron codes in \dynkin[Coxeter,extended,affine mark=*]D4 produces four toric codes defined on 
\begin{dynkinDiagram}[Coxeter,extended,affine mark=*]D4
	\circleRoot 0
\end{dynkinDiagram}
tessellations with a transversal gate implemented by $C_4$. 
With appropriate boundary conditions, these are the 4D toric codes with a transversal logical $C_4$ gate detailed in~\cite{jochym-oconnor2021}.

There are no further affine Coxeter diagrams in higher-dimensional Euclidean space with the required star graph structure. 
However, Coxeter groups also describe tessellations in hyperbolic space, and there are hyperbolic Coxeter groups with star graph Coxeter diagrams in 2D and 3D~\cite{humphreys1990,davis2008}. 
In 3D hyperbolic space, we have the Coxeter diagram 
\begin{tikzpicture}[baseline]
	\begin{dynkin}[vertical shift]D4
		\node[] at (0.2, 0.25) {\scriptsize 5};
	\end{dynkin}
\end{tikzpicture}.
The color code defined on this $3$-colex has stabilizer weights of 120, 24, 10, 8, 6, and 4.
Wythoff's construction gives us the tessellations detailed in \cref{tab:wythoff}.
\begin{table}
	\begin{tabular}{|| c | c | c | c ||}
		\hline
		Wythoff construction & Name & Structure & Stabilizer weights \\
		\hline
		\begin{tikzpicture}[baseline]
			\begin{dynkin}[vertical shift]D4
				\node[] at (0.2, 0.25) {\scriptsize 5};
				\draw (0, 0.07) circle (3pt);
			\end{dynkin}
		\end{tikzpicture} 
		&
		Order-4 dodecahedral
		& 
		8 dodecahedra at each vertex 
		&
		5 and 6 \\ 
		\hline 
		\begin{tikzpicture}[baseline]
			\begin{dynkin}[vertical shift]D4
				\node[] at (0.2, 0.25) {\scriptsize 5};
				\draw (0.52, 0.36) circle (3pt);
			\end{dynkin}
		\end{tikzpicture} 
		&
		Alternated order-5 cubic
		& 
		\makecell{20 tetrahedra and \\ 12 icosahedra at each vertex}
		&
		3 and 30 \\ 
		\hline
	\end{tabular}
	\caption{
		Table showing some of the tessellations that can be constructed from the Coxeter diagram
		{\begin{tikzpicture}[baseline]
			\begin{dynkin}[vertical shift]D4
				\protect\node[] at (0.2, 0.25) {\scriptsize 5};
			\end{dynkin}
		\end{tikzpicture}}
		using Wythoff's construction. 
		Toric codes defined on these tessellations have the stabilizer weights shown in the fourth column.
	}
	\label{tab:wythoff}
\end{table}
We expect that morphing the colorable $3$-balls with associated $3$-hyperoctahedron codes in 
\begin{tikzpicture}[baseline]
	\begin{dynkin}[vertical shift]D4
		\node[] at (0.2, 0.25) {\scriptsize 5};
	\end{dynkin}
\end{tikzpicture}
would give toric codes with a transversal gate implemented by $C_3$ gates. 
As hyperbolic toric codes are finite rate, this may be an interesting set of codes to explore.

\section{Morphing the `BNB' lattice}
\label{app:bnb}

Here we briefly discuss an example of a 3D HCT code that contains both color code and toric code regions.
To construct the code we apply our morphing procedure to the lattice described in~\cite{brown2016a} (which we call the `BNB lattice' after the authors).
The BNB lattice contains colorable $3$-balls with associated $3$-hyperoctahedron codes, but they are not all the same color (in fact they are split evenly between the four colors) and they do not cover all the qubits in the code.
Suppose we start with a color code defined on a BNB lattice that tessellates a closed manifold.
One can verify that when we morph the $3$-hyperoctahedron codes, we obtain a tetrahedral-octahedral honeycomb (with a slight modification; see \cref{fig:bnb}).
In the tetrahedral-octahedral honeycomb, six tetrahedra and eight tetrahedra meet at each vertex.
It can be constructed using Wythoff's construction with the Coxeter diagram 
\begin{dynkinDiagram}[Coxeter,extended,affine mark=*]B3
	\protect\circleRoot 0
\end{dynkinDiagram} 
(see \cref{app:coxeter}).

\begin{figure}
	\centering
	\includegraphics[width=0.4\linewidth]{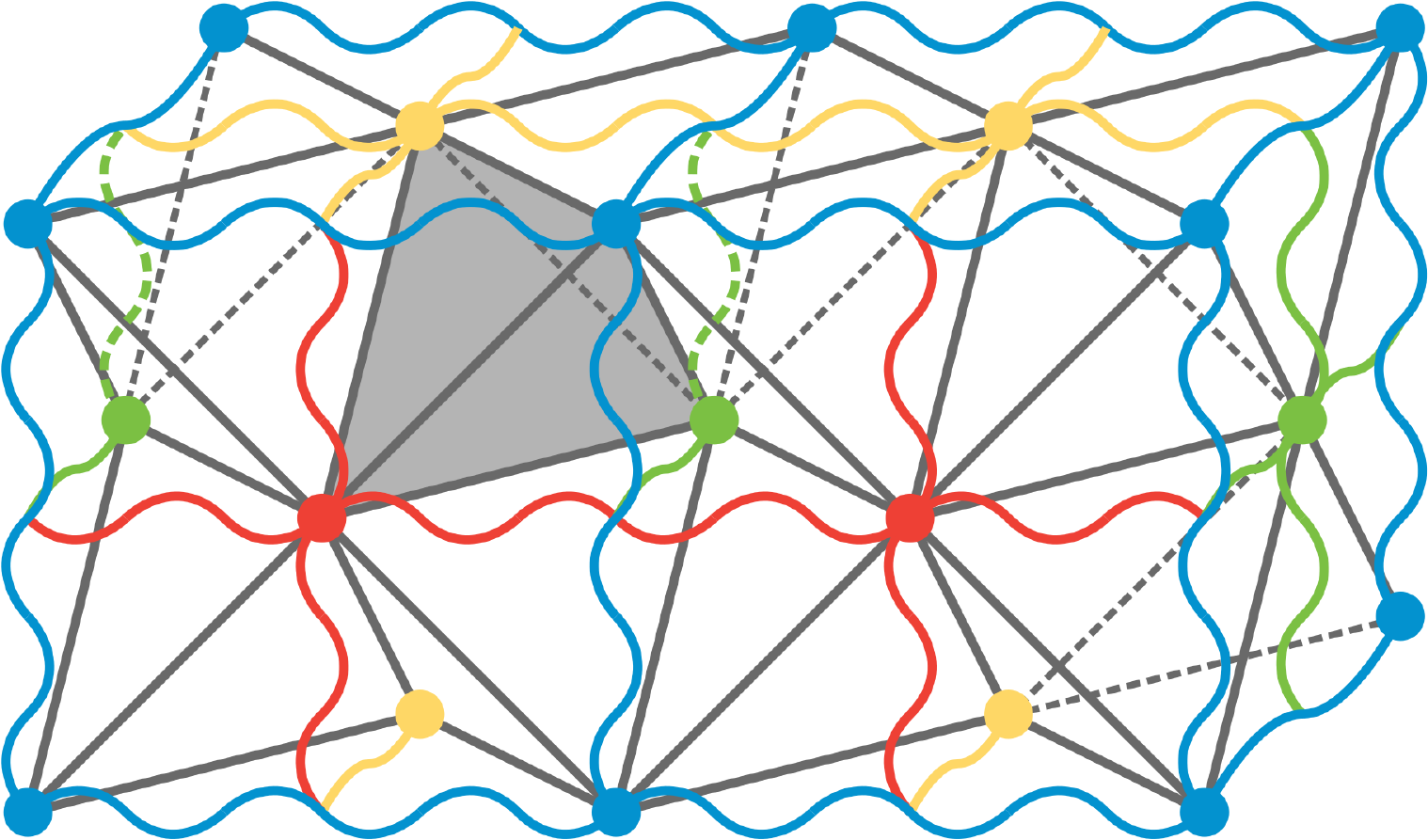}
	\caption{
		A portion of the morphed BNB lattice. 
		The grey edges show the underlying tetrahedral-octahedral lattice. 
		We have one qubit for each edge linking vertices of the same color and one qubit for each tetrahedron (one of which we shade in grey).
		$X$-type stabilizer generators are associated with vertices and $Z$-type stabilizer generators are associated with edges linking vertices of different colors.
	}
	\label{fig:bnb}
\end{figure}

The HCT code defined on the morphed BNB lattice has a particularly simple implementation of a non-Clifford gate.
Namely, we apply $R_3$ and $R_3^\dagger$ to the simplex qubits (following the qubit bipartition of the parent color code) and $C_3$ to the triple of edge qubits at each octahedron.
If we choose a lattice with tetrahedral boundary conditions (as in~\cite{brown2016a}) this gate will implement a logical $R_3$ gate.

\section{Morphing stellated color codes}
\label{app:stell}

In this appendix we apply our morphing procedure to stellated color codes~\cite{kesselring2018}.
We focus on distance three stellated color codes based on the 4.8.8 tessellation, though we expect our results would generalize to the entire code family.
Following~\cite{kesselring2018}, we parametrize the codes by an integer $s \geq 3$, the order of the rotational symmetry of the code.
The codes in the sub-family we consider have parameters $\nkd{5s}{s}{3}$.
We explain the construction of distance-three stellated color codes via examples in \cref{fig:stella}.
We note that when $s$ is odd, the stellated codes contain twist defects~\cite{bombin2010a,kesselring2018}.

\begin{figure}
	\centering
	\subfloat[$s=4$]{
		\centering
		\includegraphics[width=0.2\linewidth]{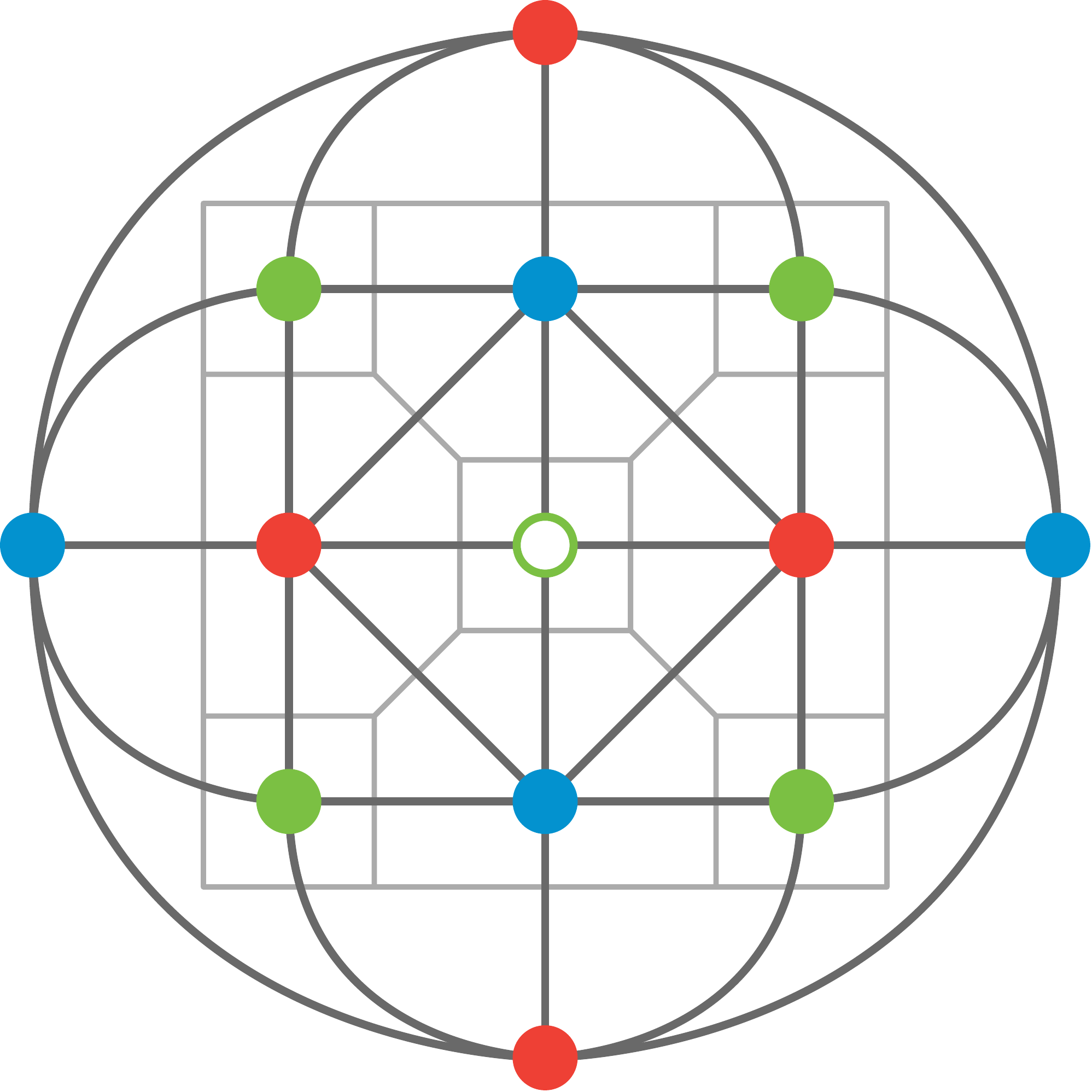}
	}
	\quad
	\subfloat[$s=4$ morphed]{
		\centering
		\includegraphics[width=0.2\linewidth]{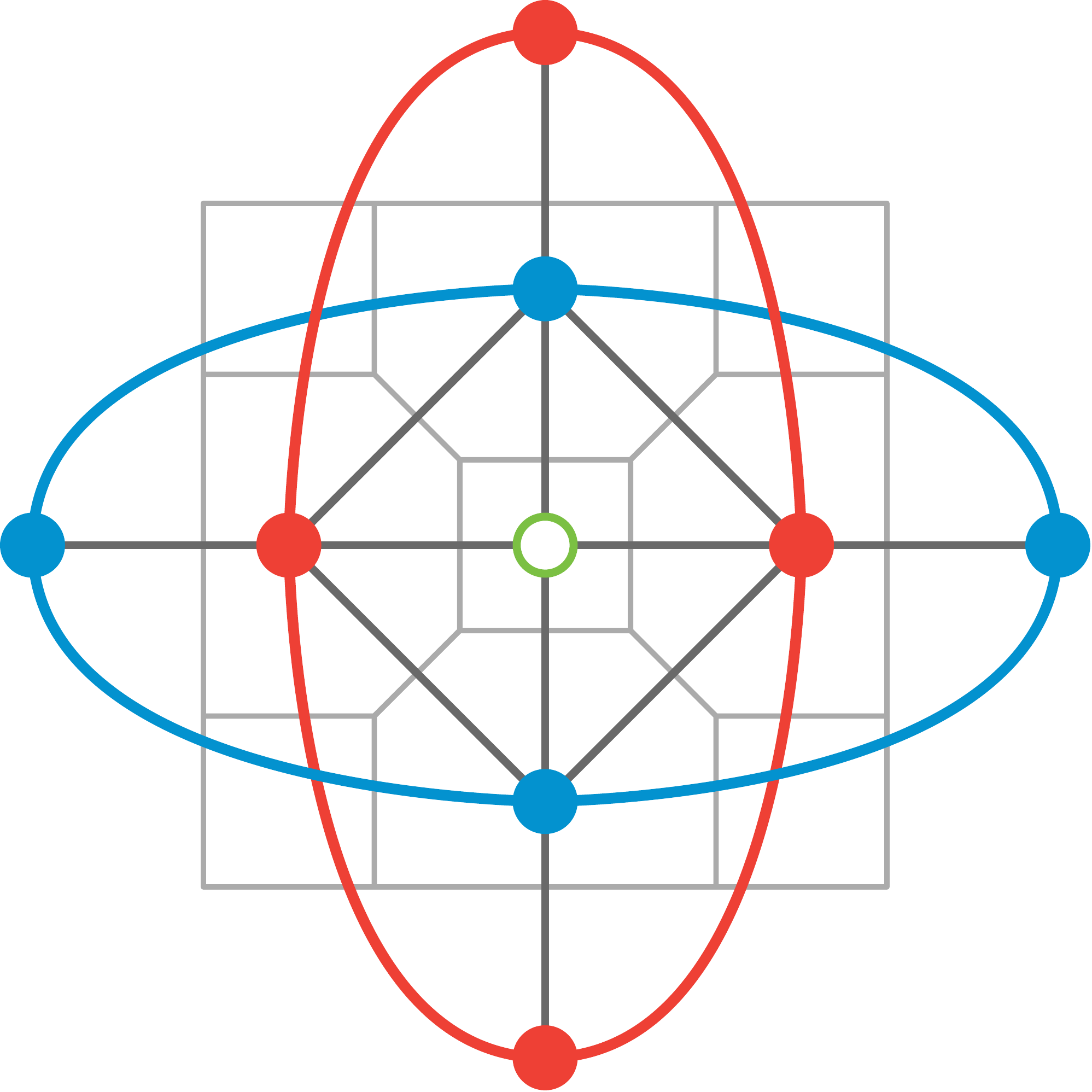}
		\label{sfig:s4}
	}
	\quad
	\subfloat[$s=5$]{
		\centering
		\includegraphics[width=0.2\linewidth]{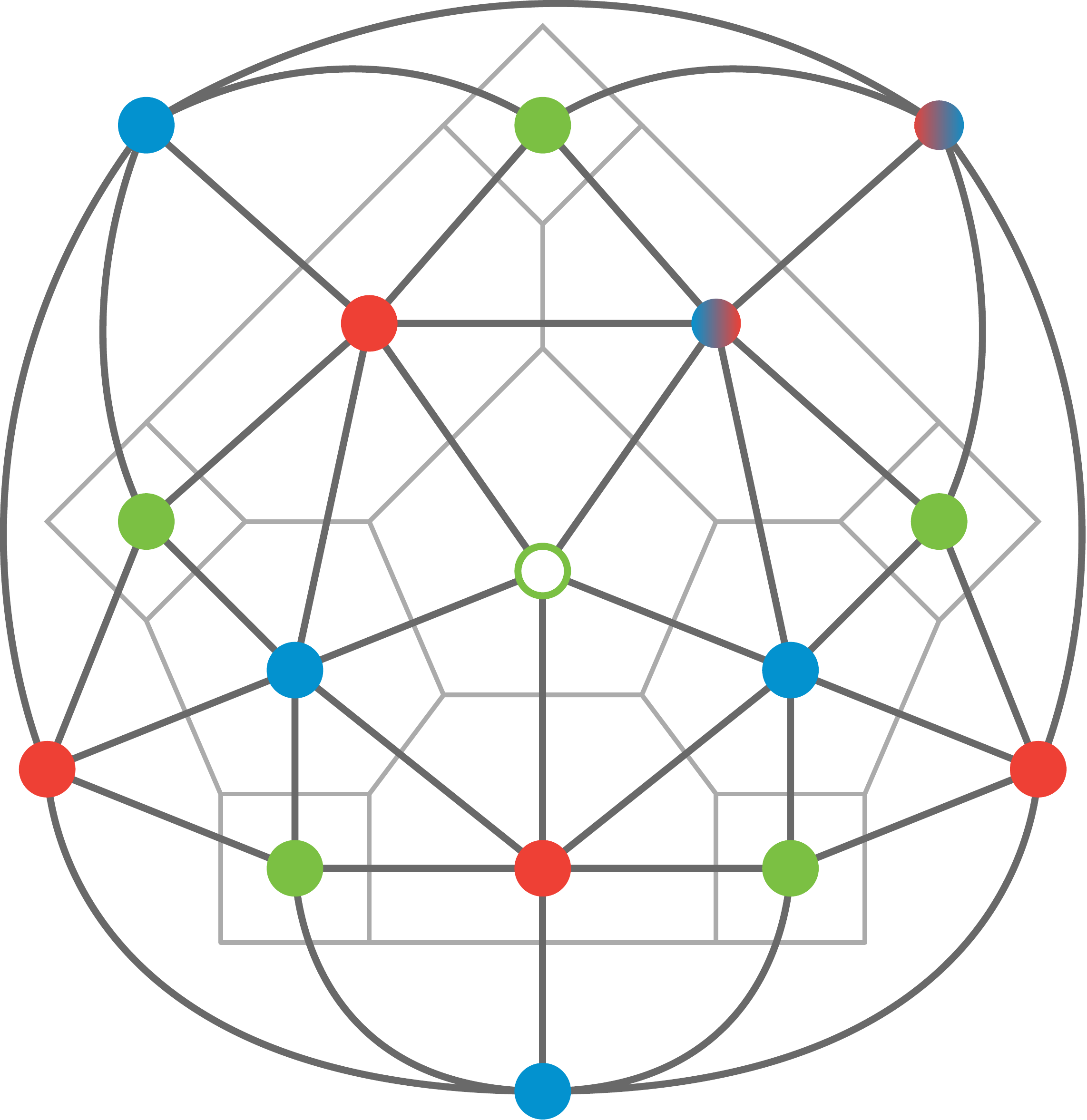}
	}
	\quad
	\subfloat[$s=5$ morphed]{
		\centering
		\includegraphics[width=0.2\linewidth]{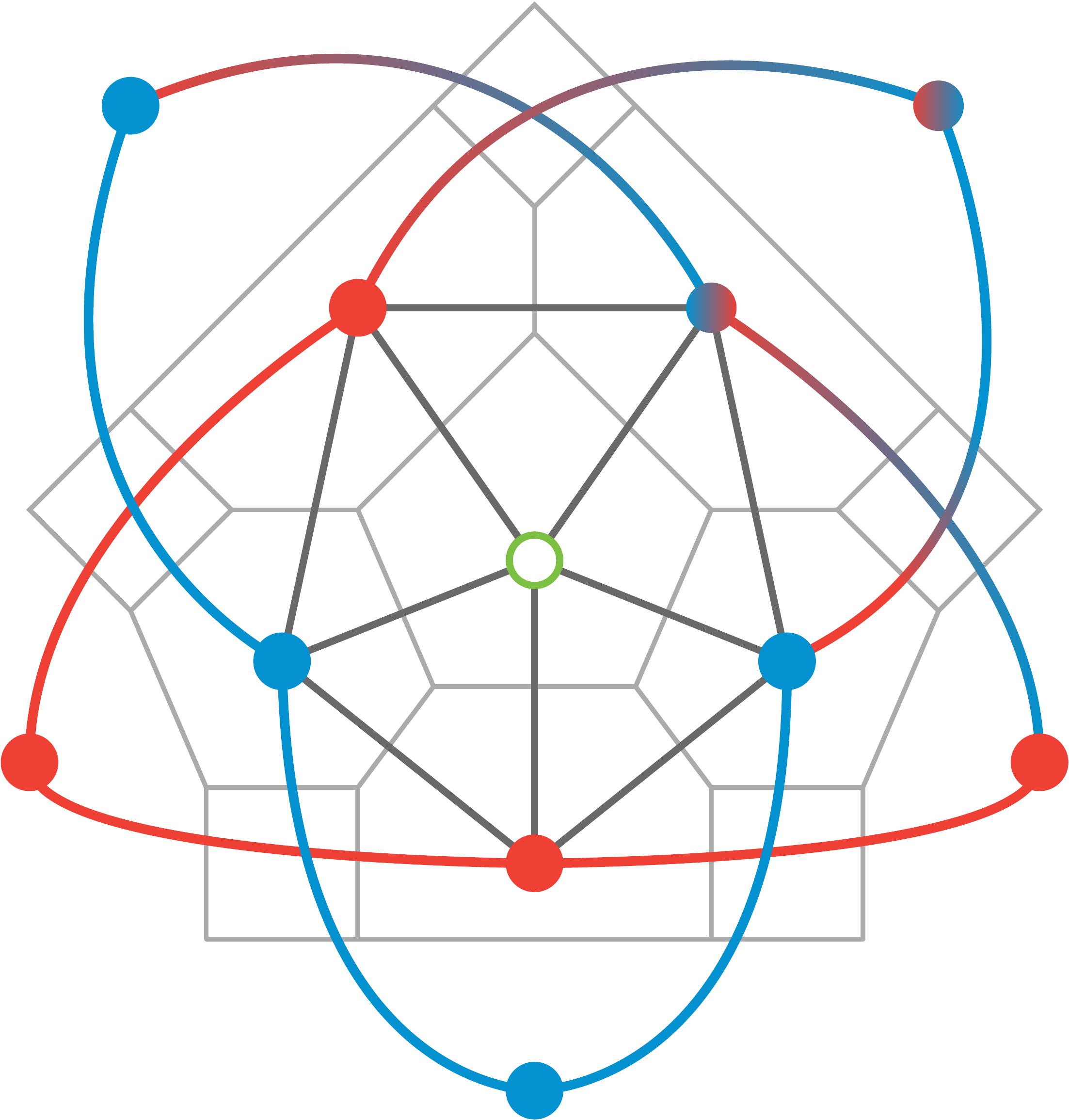}
		\label{sfig:s5}
	}
	\caption{
		Morphing stellated color codes.
		We show the (a)$s=4$ and (c)$s=5$ distance three stellated color codes, which have parameters \nkd{20}{4}{3} and \nkd{25}{5}{3}, respectively.
		We show the primal lattice in light grey for ease of comparison with~\cite{kesselring2018}.
		In (b) and (d) we show the corresponding morphed codes, with parameters \nkd{12}{4}{2} and \nkd{15}{5}{2}.
		We note the central vertex has no associated stabilizer, which we indicate by making it hollow.
		For odd $s$, the color code colorability condition is unsatisfied, indicating the presence of a twist defect.
		We represent this pictorially by blended $r$/$b$ vertices and edges.
	}
	\label{fig:stella}
\end{figure}

We now apply our morphing procedure to distance-three stellated color codes. 
We morph all colorable $2$-balls with associated \nkd{4}{2}{2} codes, giving a family of stellated HCT codes with parameters $\nkd{3s}{s}{2}$.
These codes have a high encoding rate and inherit the tranversal gates of the stellated color codes.
In particular, for even $s$ stellated color codes have a logical $S^{\otimes s} = \widetilde{S}$.
In the morphed stellated color codes, the logical $S^{\otimes s}$ gate is implemented by $S$ and $S^\dagger$ on the color code qubits (as before) and by $C_2$ gates on the pairs of toric code qubits arising from each \nkd{4}{2}{2} code.

One interesting avenue to explore is the generalization of these results to higher dimensions. 
It seems plausible that stellated color codes could exist in 3D, and we expect these codes would have fault-tolerant logical $R_3$ gates. 

\section{HCT code decoder details \label{app:local_lift}}

In this appendix, we explain the local lift step of our 2D HCT code decoder, and we comment on the overall time complexity of our decoder.

Let $\mathcal L$ be a $2$-colex and let $\widetilde{\mathcal L}$ be an HCT code lattice produced from $\mathcal L$ by morphing a subset of the colorable $2$-balls of $\mathcal L$. 
And let $\mu \subseteq \widetilde{\mathcal L}_1$ be the matching produced in the second step of our decoding algorithm (see \cref{subsec:decoder}).
We use $\mu |_v = \{ e \in \mu : v \subseteq e \}$ to denote the restriction of $\mu$ to the neighborhood of a vertex $v$.
In the color code regions of $\widetilde{\mathcal L}$, for each $r$-vertex $v$ in the matching, local lift returns a set of faces $\varphi \subseteq \mathcal L_2 \cap \St_2 (v)$, such that $(\partial_{2,1} \varphi)|_v = \mu|_v$.
In the toric code regions of $\widetilde{\mathcal L}$, we first work out which faces local lift would return if it were to be applied in $\mathcal L$ with the same matching. 
For a given $r$-vertex $v$ on the boundary of a morphed colorable $2$-ball, this set of faces will be equal to the support of a logical $Z$ operator of the ball code.
Local lift then simply returns the edge qubits acted on by this logical operator. 
\Cref{fig:lift_ex} shows an example.

\begin{figure}
	\centering
	\subfloat[]{
		\centering
		\includegraphics[width=0.2\linewidth]{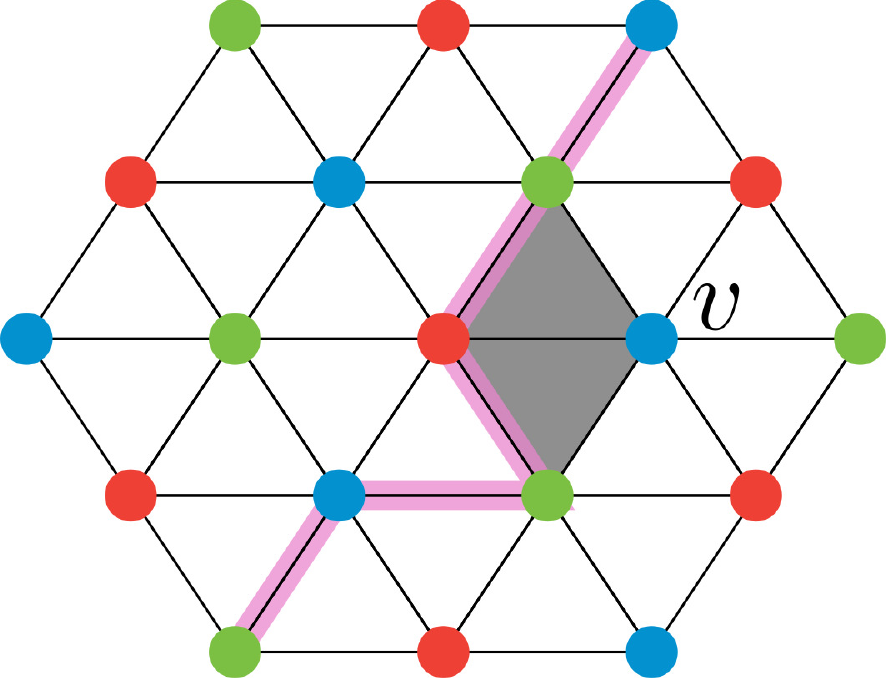}
	}
	\quad
	\subfloat[]{
		\centering
		\includegraphics[width=0.2\linewidth]{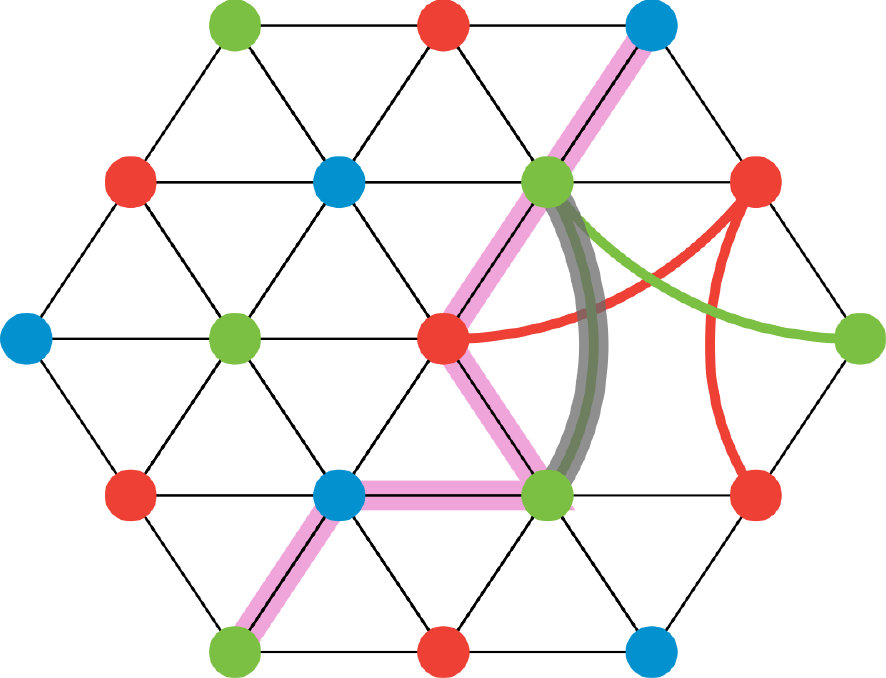}
	}
	\caption{
		The local lift subroutine of our HCT code decoder. 
		(a) 
		We show part of a $2$-colex with a matching (subset of edges) highlighted in pink.
		If we apply local lift at the central $r$-vertex, we obtain the highlighted grey faces.
		These faces have the required property that their boundary equals the matching, when restricted to the neighborhood of $v$.
		(b)
		Applying local lift for the same matching but now in an HCT code lattice. 
		For the ball code defined on $\mathcal B^v$, the $Z$ operator with support equal to the highlighted faces in (a) is logical operator acting on the edge qubit highlighted in grey in (b).
		Therefore, for this matching local lift returns the highlighted edge when it is applied at the central $r$-vertex.
	}
	\label{fig:lift_ex}
\end{figure}

We claim that our decoder has time complexity $O(N^3)$, where $N$ is the number of physical qubits in the code. 
First we note that there are $O(N)$ vertices and $O(N)$ edges in the 2D HCT code lattice.
In the first step of the decoder, we call MWPM twice, giving a time complexity of $O(N^3)$ due to the Blossom algorithm.
In the second step of the decoder we iterate through the edges in the matching, where (in the worst case) there will be $O(N)$ such edges. 
For each edge $e$ the local modification step only requires simple calculations in the neighbourhood of $e$, which is constant size by assumption. 
Therefore each local modification takes constant time and this the second step has time complexity $O(N)$.
The third step of the decoder is similar: we iterate through (at most) $O(N)$ vertices and $O(N)$ edges, applying a constant time local lift at each vertex.
Hence the third step also has time complexity $O(N)$, implying that the overall time complexity is $O(N^3)$.

\section{Comparing distillation costs}
\label{app:cost}

We evaluate the utility of our 10-to-1 MSD protocol by comparing the required distillation cost $C$ to achieve a certain target error rate $p_{\mathrm{targ}}$ assuming an input error rate of $p=0.01$, for three multi-round MSD protocols; see \Cref{tab:cost}.
The MEK protocol~\cite{meier2012} uses 15-to-1 and 10-to-1 distillation and the triorthogonal protocol~\cite{bravyi2012} uses triorthogonal codes.
In the MEK+ protocol, we augment the MEK protocol with 10-to-1 distillation.

\begin{table}	
\begin{tabular}{c c c c c c}
	\hline\hline\noalign{\vskip 1mm}   
	$-\log_{10} p_{\mathrm{targ}}$ & $C_{\Delta}$ & $C_{\mathrm{MEK}}$ & $C_{\mathrm{MEK}+}$ & Sequence & $-\log_{10} p_{\mathrm{actual}}$ \\[1mm]
	\hline\noalign{\vskip 1mm}
		3 & 5.521 & 5.521 & 5.521 & 5 & 3.030 \\
		4 & 17.44 & 17.44 & 17.44 & 15 & 4.443 \\
		5 & 27.86 & 27.86 & 27.86 & 5-5 & 5.104 \\
		6 & 56.07 & 83.99 & 43.39 & 10-5 & 6.969 \\
		7 & 58.30 & 83.99 & 69.41 & 10-10 & 7.923 \\
		8 & 89.26 & 139.3 & 130.2 & 10-15 & 10.32 \\
		9 & 139.3 & 139.3 & 130.2 & 10-15 & 10.32 \\
		10 & 179.4 & 261.7 & 130.2 & 10-15 & 10.32 \\
		11 & 179.4 & 261.7 & 217.0 & 10-5-5 & 12.98 \\
		12 & 187.9 & 418.0 & 217.0 & 10-5-5 & 12.98 \\
		13 & 225.6 & 418.0 & 347.1 & 10-10-5 & 14.89 \\
		14 & 285.6 & 419.9 & 347.1 & 10-10-5 & 14.89 \\
		15 & 315.5 & 696.7 & 555.3 & 10-10-10 & 15.85 \\
		16 & 406.2 & 696.7 & 650.9 & 10-5-15 & 19.36 \\
		17 & 529.5 & 696.7 & 650.9 & 10-5-15 & 19.36 \\
		18 & 574.1 & 1260 & 650.9 & 10-5-15 & 19.36 \\
		19 & 574.1 & 1260 & 650.9 & 10-5-15 & 19.36 \\
		20 & 574.1 & 1260 & 1041 & 10-10-15 & 22.22 \\ 
		21 & 575.9 & 1260 & 1041 & 10-10-15 & 22.22 \\
		22 & 604.3 & 1308 & 1041 & 10-10-15 & 22.22 \\
		23 & 652.3 & 2090 & 1085 & 10-5-5-5 & 25.01 \\
		24 & 731.5 & 2090 & 1085 & 10-5-5-5 & 25.01 \\
		25 & 853.1 & 2090 & 1085 & 10-5-5-5 & 25.01 \\
		26 & 914.0 & 2090 & 1735 & 10-10-5-5 & 28.83 \\
		27 & 947.5 & 2100 & 1735 & 10-10-5-5 & 28.83 \\
		28 & 1015 & 2181 & 1735 & 10-10-5-5 & 28.83 \\
		29 & 1125 & 3483 & 1954 & 10-15-15 & 29.48 \\
		30 & 1301 & 3483 & 2776 & 10-10-10-5 & 30.74 \\ 
	\hline\hline
\end{tabular}
\caption{
	Distillation costs for various MSD protocols to achieve target error rates $p_{\mathrm{targ}}$ assuming an input error rate of $p=0.01$.
	The MEK+ protocol ($C_{\mathrm{MEK}+}$) equals or betters the original MEK protocol~\cite{meier2012} ($C_{\mathrm{MEK}}$) and betters the triorthogonal protocol~\cite{bravyi2012} ($C_{\Delta}$) for $p_{\mathrm{targ}} \in \{ 10^{-6}, 10^{-9}, 10^{-10} \}$.
	The `Sequence' column shows the sequence of distillation protocols used in the MEK+ protocol, where 5 is 10-to-2, 10 is 10-to-1, and 15 is 15-to-1; $p_{\mathrm{actual}}$ is the output error rate achieved by the sequence.
}
\label{tab:cost}
\end{table}

We can also directly compare the distillation cost of the 10-to-1 protocol against the 15-to-1 protocol; see \cref{fig:cost_single}.
Assuming the optimistic noise model for $CCZ$ states described in \cref{subsec:msd_protocol}, we find that the 10-to-1 protocol has a lower cost than the 15-to-1 protocol for target error rates $p_{\mathrm{targ}} \in \{ 10^{-3}, 10^{-5}, 10^{-6}, 10^{-7}, 10^{-12}, 10^{-13}, 10^{-14}, 10^{-15} \}$.
And assuming the pessimistic noise model for $CCZ$ states, we find that the 10-to-1 protocol has a lower cost than the 15-to-1 protocol for target error rates $p_{\mathrm{targ}} \in \{ 10^{-3}, 10^{-5}, 10^{-6} \}$. 

\begin{figure}
	\centering
	\includegraphics[width=0.5\textwidth]{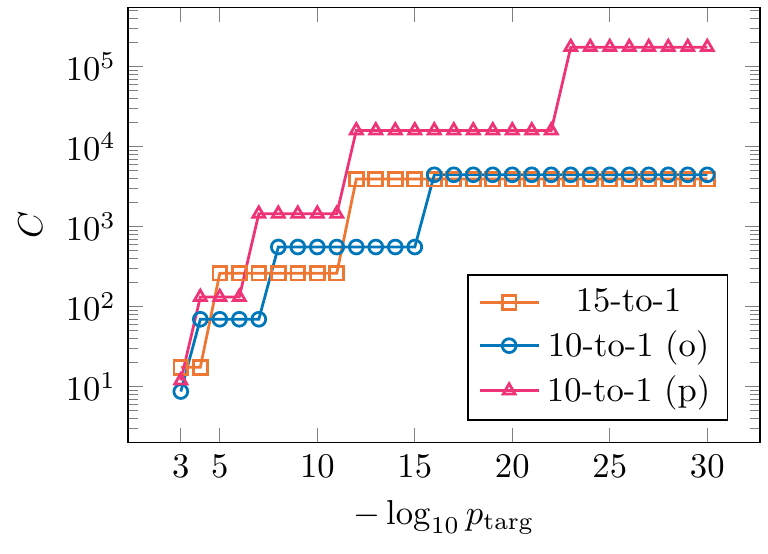}
	\caption{A plot showing the distillation cost $C$ for different target error rates $p_{\mathrm{targ}}$, assuming an input error rate of $p=0.01$. For certain values of $p_{\mathrm{targ}}$, we find that the cost of our 10-to-1 protocol is lower than the cost of the 15-to-1 protocol, where (o) and (p) refer to the optimistic and pessimistic noise models for $CCZ$ states described in \cref{subsec:msd_protocol}, respectively.}
	\label{fig:cost_single}
\end{figure}

%

\end{document}